\newcommand{\T}[0]{\mathbb{T}}
\let\norm\relax
\DeclarePairedDelimiter{\normdummy}{\lVert}{\rVert}
\newcommand{\norm}{\mathop{}\normdummy}
\let\abs\relax
\DeclarePairedDelimiter{\absdummy}{\lvert}{\rvert}
\newcommand{\abs}{\mathop{}\absdummy}
\newcommand{\trace}{\operatorname{tr}}
\newcommand{\cu}[1]{
	\ifcat\noexpand#1\relax
	\bm{#1}
	\else
	\mathbf{#1}
	\fi
}
\newcommand{\tash}[2]{\frac{\partial #1}{\partial #2}}
\newcommand{\diff}{\mathop{}\!\mathrm{d}}
\DeclareMathOperator*{\argmin}{arg\,min}  
\newcommand{\expec}{\mathbb{E}}           
\newcommand{\cov}{\operatorname{Cov}}     
\newcommand{\R}{\mathbb{R}} 
\newcommand{\diag}{\operatorname{diag}}     
\newcommand*{\trans}{{\mkern-1.5mu\mathsf{T}}}
\newcommand{\matern}{Mat\'{e}rn }
\newtheorem{assumption}{Assumption}
\begin{document}
\sloppy

\title{Hierarchical Non-Stationary Temporal Gaussian Processes With $L^1$-Regularization\thanks{We thank Academy of Finland and Aalto ELEC doctoral school for financial support.}
}

\author{Zheng Zhao \and
	Rui Gao \and
	Simo S\"{a}rkk\"{a}
}

\institute{Zheng Zhao, Rui Gao, and Simo S\"{a}rkk\"{a} \at
	Department of Electrical Engineering and Automation \\
	Aalto University \\
	\email{zheng.zhao@aalto.fi}
}

\date{Received: date / Accepted: date}

\maketitle

\begin{abstract}%
	This paper is concerned with regularized extensions of hierarchical non-stationary temporal Gaussian processes (NSGPs) in which the parameters (e.g., length-scale) are modeled as GPs. In particular, we consider two commonly used NSGP constructions which are based on explicitly constructed non-stationary covariance functions and stochastic differential equations, respectively. We extend these NSGPs by including $L^1$-regularization on the processes in order to induce sparseness. To solve the resulting regularized NSGP (R-NSGP) regression problem we develop a method based on the alternating direction method of multipliers (ADMM) and we also analyze its convergence properties theoretically. We also evaluate the performance of the proposed methods in simulated and real-world datasets.
	\keywords{hierarchical Gaussian process \and non-stationary Gaussian process \and regularization \and sparsity \and state-space Gaussian process \and LASSO \and elastic net \and ADMM}
\end{abstract}

\section{Introduction}
\label{sec:intro}
Gaussian processes (GPs) are commonly-used priors for functions in machine learning and statistics~\citep{gp-carl-edward}, and they are determined by the mean and covariance functions. In GP regression one often uses, for example, the Whittle--\matern or the radial basis function covariance functions which lead to stationary GPs with a few free parameters (e.g., length-scale). These parameters are usually set by human experts or learned from data globally by using maximum likelihood estimation (MLE) or Bayesian methods. However, stationary GPs do not generalize well to very irregular signals such as signals with jumps or discontinuities because of the strong stationarity assumption of the GP prior~\citep{paciorek2006spatial}. Compared to stationary GPs, non-stationary GPs (NSGPs) are able to produce more general priors by extending the covariance functions to be non-stationary. 

There are different ways to construct NSGPs. One common method is to transform the covariance function inputs by non-linear functions which leads to the compositional NSGPs. For example, \citet{Wilson2016DeepKernel} transform the inputs by deep neural networks and then feed them to GPs, which retains a closed-form GP regression model. \citet{damianou2013} feed the outputs of GPs to another layer of GPs, which leads to deep GPs. However, deep GPs often require complicated inference methods due to the hierarchy in the probabilistic model~\citep{Salimbeni2017Doubly}. 

An alternative NSGP treatment is to make the parameters of GPs depend on the input~\citep{OLDPaul1992, higdon1999non}. For example, in temporal GPs, the parameters (e.g., length-scale) can be formulated as deterministic functions of time~\citep{higdon1999non} or random functions of time~\citep{Salimbeni2017ns, Matthew2018JMLR, lassi2019Matern}. This hierarchical input-dependent setting enriches the GP prior in the sense that the GP can have varying characteristics depending on the position of data. However, this NSGP treatment requires careful construction of valid non-stationary covariance functions, such as the one proposed by~\citet{paciorek2006spatial}. The inference methods for this type of NSGPs are also often computationally expensive due to the (non-linear) model hierarchy.

One approach is to use stochastic (partial) differential equation based constructions for this type of NSGPs~\citep{zhao2020SSDGP, Emzir2020}, which have the advantage that the covariance function is always valid by construction. In the case of temporal processes, one can convert NSGPs into state-space models which can be solved by Bayesian smoothers efficiently~\citep{zhao2020SSDGP}. 

This paper is considered with $L^1$-regularized extensions of the aforementioned temporal hierarchical NSGPs. The aim is to include $L^1$-regularization on the GP nodes, especially GP priors of parameters, in order to achieve sparseness or total variation regularization. This kind of regularization is useful in many signal processing and inverse problems applications~\citep{Emmanuel2008, Kaipio2005, Unser2014Book}. The regularization is also closely related to the LASSO regression and elastic nets~\citep{TrevorBook2015}. We consider two NSGP constructions which use explicitly constructed non-stationary covariance functions~\citep{Salimbeni2017ns, heinonen2016} or stochastic differential equation (SDE) representations~\citep{zhao2020SSDGP}, respectively.

\subsection{Related work and contributions}
\label{sec:review-and-contribution}
In this section, the aim is to briefly review relevant literature on NSPGs and $L^1$-regularization as well as present the paper contributions and structure. Recent methods and analysis of compositional NSGPs can be found in, for example,~\citet{Wilson2016DeepKernel}, \citet{damianou2013}, \citet{Calandra2016ManifoldGP}, \citet{Shedivat2017RecurrentDKL}, \citet{Wilson2016StochasticVBDKL}, and~\citet{Salimbeni2017Doubly}. Hierarchically parameterized NSGPs have been recently studied, for example, by \citet{zhao2020SSDGP}, \citet{emzir2019}, \citet{Karla2020}, \citet{heinonen2016}, \citet{lassi2019Matern}, \citet{Cheng2019}, \citet{Matthew2018JMLR}, \citet{paciorek2006spatial}, and~\citet{Ohagan2003}. In particular, \citet{heinonen2016} and \citet{zhao2020SSDGP} model the parameters of NSGPs as GPs, and they approximate the posterior distribution by using maximum a posteriori (MAP), Markov chain Monte Carlo (MCMC), and Bayesian smoothing methods. \citet{Matthew2018JMLR} discuss the connections between the compositional and parameterized NSGPs.

Sparsity is one application of $L^1$-regularization in R-NSGPs. It is worth noting that in this context sparsity is introduced on the posterior estimates of NSGPs to tackle ill-behaving signals, while sparse GP methods \citep{Snelson2006, quinonero2005unifying} form sparse approximations of GPs to speed up computation. Similarly, \citet{Feng2010Sparse} and~\citet{Kou2014Sparse} study the GP LASSO problem where the sparseness is included on the GP innovation variable to speed-up the computation. \citet{GYi2011} study sparse estimation of multivariate GP parameters to deal with high-dimensional datasets.

The methodology in this paper is closely related to constrained Gaussian processes which constrain the processes or predictions by some equalities or inequalities. One of the early works is \citet{Abrahamsen2001} who used inequality constraints in kriging. \citet{Jidling2017CGP} studied a GP regression problem where the GP must obey a linear partial differential equation. This constraint has a physical meaning in modeling the magnetic field which is curl-free. Similarly, \citet{Agrell2019} studied linear operator inequality constrained GPs. Other works on this topic include, for example,~\citet{Maatouk2017}, \citet{Jidling2018},~\citet{Solin:2018},~\citet{LangeHegermann2018}, and~\citet{Swiler2020}. However, the prior work is only concerned with conventional GPs instead of NSGPs which have a hierarchy induced by the parameter processes. 

The contributions of this paper are as follows. 1) We introduce regularized NSGPs (R-NSGPs), where the processes are estimated under $L^1$-regularization. 2) We present R-NSGP constructions for two commonly used types of NSGPs. 3) We develop computational methods for R-NSGP regression problems based on the alternating direction method of multipliers (ADMM). 4) We present a theoretical convergence analysis of the computational methods under mild conditions. 5) The experimental results demonstrate the efficiency of R-NSGPs on synthetic and real datasets.

The structure of this paper is as follows. In Section~\ref{sec:problem-formulation}, we formulate the non-stationary Gaussian process regression problems with $L^1$-regularization. In Section~\ref{sec:admm}, we introduce the ADMM method to solve the regularized NSGP regression problems. Sections~\ref{sec:convergence-analysis} and~\ref{sec:experiment} contain the convergence analysis and the numerical experiments, respectively.

\section{Problem formulation}
\label{sec:problem-formulation}
Consider a non-stationary Gaussian process (NSGP) regression problem with a hierarchical structure
\begin{equation}
	\begin{split}
		f(t)\mid u^\ell(t), u^\sigma(t) &\sim \mathcal{GP}(0, C_f(t, t';u^\ell, u^\sigma)),\\
		u^\ell(t) &\sim \mathcal{GP}(0, C_\ell(t, t')),\\
		u^\sigma(t) &\sim \mathcal{GP}(0, C_\sigma(t, t')),\\
		y_k &= f(t_k) + r_k,
	\end{split}
	\label{equ:ns-gp-reg-model}
\end{equation}
where $f(t)\colon \T\to\R$ is a conditional GP depending on two other GPs $u^\ell(t)\colon \T\to\R$ and $u^\sigma(t)\colon \T\to\R$. We denote by $\T = \left\lbrace t\in\R\colon t\geq t_0 \right\rbrace$ a totally-ordered temporal domain with an initial time $t_0$.

The GPs $u^\ell(t)$ and $u^\sigma(t)$ parameterize the length-scale $\ell(t)$ and magnitude $\sigma(t)$ of $f(t)$, respectively, by some transformation $g\colon\R\to(0, +\infty)$:
\begin{equation}
	\begin{split}
		\ell(t) &= g(u^\ell(t)), \\
		\sigma(t) &= g(u^\sigma(t)).
	\end{split}
\end{equation}
The function $g$ is analogous to the activation function of neural networks in the sense that it determines how the values of $u^\ell(t)$ and $u^\sigma(t)$ affect the parameters $\ell(t)$ and $\sigma(t)$ of $f(t)$. Some commonly used examples include the exponential function $g(u)=\exp(u)$ and the softplus function $g(u) = \exp(u) \,/\, (1 + \exp(u))$. In some cases, it is also useful to introduce a baseline level inside the function. For example, letting $\ell(t) = \exp(u^\ell(t) + b^\ell)$ implies that the length-scale $\ell(t)$ stays most of the time at the baseline level $b^\ell$ when $u^\ell(t)$ is sparse.

The measurements $y_k$ of $f(t)$ in \eqref{equ:ns-gp-reg-model} are contaminated by Gaussian noises $r_k\sim\mathcal{N}(0,R_k)$ for $k=1,2,\ldots$. We let $y_{1:T} = \left\lbrace y_k\colon k=1,2\ldots, T\right\rbrace$ be the set of measurements, the time interval $\Delta t_k = t_{k} - t_{k-1} > 0$ for $k=1,2,\ldots, T$, and the noise covariance $\cu{R}=\diag{\left(R_1, R_2,\ldots,R_T \right)}$. 

We assume, without loss of generality, that the mean functions of $f(t)$, $u^\ell(t)$, and $u^\sigma(t)$ are zero, and we endow the GPs with covariance functions $C_f\colon\T\times\T\to\R$, $C_\ell\colon\T\times\T\to\R$, and $C_\sigma\colon\T\times\T\to\R$, respectively. One possible option for a non-stationary covariance function $C_f$ for $f(t)$ that takes processes $u^\ell(t)$ and $u^\sigma(t)$ as parameters is
\begin{equation}
	\begin{split}
		&C_f(t, t';u^\ell,u^\sigma) = \frac{\sigma(t)\,\sigma(t')\,\ell^{\frac{1}{4}}(t)\,\ell^{\frac{1}{4}}(t')}{\Gamma(\nu)2^{\nu-1}}\bigg(\frac{\ell(t) + \ell(t')}{2}\bigg)^{\!\!-\frac{1}{2}} \\
		&\qquad\times\left(2\,\sqrt{\nu\frac{2(t-t')^2}{\ell(t) + \ell(t')}}\right)^{\!\nu}\,K_\nu\left(2\,\sqrt{\nu \frac{2(t-t')^2}{\ell(t) + \ell(t')}}\right),
		\label{equ:Cf-paciorek}
	\end{split}
\end{equation}
which is a non-stationary generalization of the conventional \matern covariance function by \citet{paciorek2006spatial}. Above in Equation~\eqref{equ:Cf-paciorek}, $K_\nu$ is the modified Bessel function of the second kind, $\Gamma$ is the Gamma function, and $\nu$ is a smoothness parameter. The covariance functions for $u^\ell(t)$ and $u^\sigma(t)$ do not need to be non-stationary because we are mostly concerned with the non-stationarity of $f(t)$.

NSGP regression aims to learn the posterior density
\begin{equation}
	\begin{split}
		&p(f(t), u^\ell(t), u^\sigma(t)\mid y_{1:T}) \\
		&\quad= \frac{p(y_{1:T}\mid f(t))\,p(f(t), u^\ell(t), u^\sigma(t))}{p( y_{1:T})},
	\end{split}
	\label{equ:ns-gp-posterior}
\end{equation}
which is unfortunately often intractable due to the non-linear hierarchy in the prior $p(f(t),\allowbreak u^\ell(t), u^\sigma(t))$. Approximation methods for the posterior inference include, for example, variational Bayes and Markov chain Monte Carlo methods~\citep{heinonen2016, Salimbeni2017ns}. In the temporal case, Bayesian smoothing methods can be used~\citep{zhao2020SSDGP}. 

Our interest is now to approximate the posterior density~\eqref{equ:ns-gp-posterior} while having the processes $f(t)$, $u^\ell(t)$, and $u^\sigma(t)$ estimated under $L^1$-regularization. Similarly to the LASSO regression and elastic nets~\citep{TrevorBook2015}, the regularization is realized in the maximum a posteriori (MAP) sense by augmenting the optimized function with additional regularization terms. 

In the sequel, we introduce the $L^1$-regularization on two types of NSGPs. The first one is based on explicit covariance function constructions such as~\eqref{equ:Cf-paciorek}, which we call the \textit{batch} NSGP. The second approach is based on stochastic differential equation representations of GPs, which we call the \textit{state-space} NSGP. The equivalence (in terms of the covariance functions) of these two constructions is given in Lemma~\ref{lemma:ss-cov}.

\subsection{Regularization in batch construction}
\label{sec:reg-batch}
In this section, the NSGP in Equation~\eqref{equ:ns-gp-reg-model} is constructed in a way that the covariance functions $C_f$, $C_\ell$, and $C_\sigma$ are explicitly given such as the one in Equation~\eqref{equ:Cf-paciorek}. Let $f_{1:T}\in\R^{T}$, $u^\ell_{1:T}\in\R^{T}$, and $u^\sigma_{1:T}\in\R^{T}$ denote the vectors of values of $f(t)$, $u^\ell(t)$, and $u^\sigma(t)$ at $t_1, t_2, \allowbreak \ldots, t_T$, respectively. By taking two times the negative logarithm of the unnormalized posterior density in Equation~\eqref{equ:ns-gp-posterior}, we get
\begin{align}
	&\mathcal{L}^{\mathrm{NSGP}} \coloneqq \mathcal{L}^{\mathrm{NSGP}}(f_{1:T}, u^\ell_{1:T}, u^\sigma_{1:T}) \nonumber\\
	&=\norm{f_{1:T} -y_{1:T}}^2_{\cu{R}} + \norm{f_{1:T}}^2_{\cu{C}_f} + \log \abs{2\,\pi\,\cu{C}_f} \label{equ:ns-gp-map}\\
	&\quad+\norm{u^\ell_{1:T}}^2_{\cu{C}_\ell} + \log \abs{2\,\pi\,\cu{C}_\ell}
	+\norm{u^\sigma_{1:T}}^2_{\cu{C}_\sigma} + \log \abs{2\,\pi\,\cu{C}_\sigma}.\nonumber
\end{align}
Above, we denote by $\norm{x}_G = \left(x^\trans\,G^{-1}\,x \right)^{1/2}$ the $G$-weighted Euclidean norm for any real vector $x$ and positive definite matrix $G$. Otherwise $\norm{\cdot}_2$ is understood as the Euclidean norm. We use $\cu{C}_f\in\R^{T\times T}$, $\cu{C}_\ell\in\R^{T\times T}$, and $\cu{C}_\sigma\in\R^{T\times T}$ to denote the covariance matrices obtained by evaluating the corresponding covariance functions $C_f$, $C_\ell$, and $C_\sigma$ at the Cartesian grid $(t_1, \ldots, t_T)\times (t_1, \ldots, t_T)$. It is important to recall that the matrix $\cu{C}_f$ depends on variables $u^\ell_{1:T}$ and $u^\sigma_{1:T}$ non-linearly. Hence the objective function~\eqref{equ:ns-gp-map} is non-linear and non-convex with respect to arguments $u^\ell_{1:T}$ and $u^\sigma_{1:T}$.

To estimate the variables $f_{1:T}$, $u^\ell_{1:T}$, and $u^\sigma_{1:T}$ under the $L^1$-regularization, let us introduce an additional term
\begin{equation}
	\begin{split}
		&\mathcal{L}^{\mathrm{REG}} \coloneqq \mathcal{L}^{\mathrm{REG}}(f^f_{1:T},u^\ell_{1:T}, u^\sigma_{1:T})\\
		&= \lambda_f \norm{\Phi_f\, f_{1:T}}_1 + \lambda_\ell \norm{\Phi_\ell\, u^\ell_{1:T}}_1 + \lambda_\sigma \norm{\Phi_\sigma\, u^\sigma_{1:T}}_1
	\end{split}
\end{equation}
to Equation~\eqref{equ:ns-gp-map}, where $\Phi_f\in\R^{T\times T}$, $\Phi_\ell\in\R^{T\times T}$, $\Phi_\sigma\in\R^{T\times T}$ are some regularization matrices; $\lambda_f>0$, $\lambda_\ell>0$, and $\lambda_\sigma>0$ are strength parameters; and $\norm{\cdot}_1$ is the $L^1$-norm. This $L^1$-regularization term enforces the $\Phi$-transformed parameters to be sparse in the $L^1$ sense. Example choices of $\Phi$ include the identity matrix and finite difference matrices which lead to sparsity promoting and total variation regularizations, respectively. Now the batch R-NSGP aims to solve 
\begin{equation}
	\left\lbrace f_{1:T}, u^\ell_{1:T}, u^\sigma_{1:T}\right\rbrace = \argmin_{f_{1:T}, u^\ell_{1:T}, u^\sigma_{1:T}} \mathcal{L}^{\mathrm{NSGP}} + \mathcal{L}^{\mathrm{REG}}.
	\label{equ:ns-gp-sparse}
\end{equation}
However, solving the above Equation~\eqref{equ:ns-gp-sparse} is often computationally expensive for large $T$ since $\mathcal{L}^{\mathrm{NSGP}}$ contains $T$-dimensional matrix inversions. Another computational difficulty is that the covariance matrices $\cu{C}_f$, $\cu{C}_\ell$, and $\cu{C}_\sigma$ might be numerically close to singular when the data positions $t_1,t_2,\ldots, t_T$ are dense.

\subsection{Regularization in state-space construction}
\label{sec:reg-ss}
A state-space NSGP (SS-NSGP) is an alternative construction to batch NSGP, which does not require choosing valid covariance functions beforehand~\citep{zhao2020SSDGP}. State-space NSGPs use stochastic differential equations (SDEs) to represent each conditional GP in Equation~\eqref{equ:ns-gp-reg-model} instead of specifying the mean and covariance functions. This state-space construction is especially useful because one can leverage the Markov property of SDEs to perform the regression in linear computational time.

To proceed, let us define states $\cu{f}(t)\colon\T\to\R^{D_f}$, $\cu{u}^\ell(t)\colon\T\to\R^{D_\ell}$, and $\cu{u}^\sigma(t)\colon\T\to\R^{D_\sigma}$, so that the GPs $f(t) = \cu{H}_f\,\cu{f}(t)$, $u^\ell(t) = \cu{H}_\ell\,\cu{u}^\ell(t)$, and $u^\sigma(t) = \cu{H}_\sigma \,\cu{u}^\sigma(t)$ are extracted from the states by some matrices $\cu{H}_f\in\R^{D_f}$, $\cu{H}_\ell\in\R^{D_\ell}$, and $\cu{H}_\sigma\in\R^{D_\sigma}$. The states might also contain the derivatives of the process (e.g., in smooth \matern construction). The state-space representation of the corresponding NSGP in Equation~\eqref{equ:ns-gp-reg-model} reads
\begin{align}
	\diff\cu{f}(t) &= \cu{A}\big(\cu{u}^\ell(t)\big)\,\cu{f}(t)\diff t + \cu{B}\big(\cu{u}^\ell(t), \cu{u}^\sigma(t)\big)\diff \cu{W}(t), \nonumber\\
	\diff\cu{u}^\ell(t)& = \cu{A}^\ell\,\cu{u}^\ell(t)\diff t + \cu{B}^\ell\diff \cu{W}^\ell(t), \label{equ:ss-gp-sde}\\
	\diff\cu{u}^\sigma(t) &= \cu{A}^\sigma\,\cu{u}^\sigma(t)\diff t + \cu{B}^\sigma\diff \cu{W}^\sigma(t), \nonumber\\
	y_k &= \cu{H}_f\,\cu{f}(t_k) + r_k,\nonumber
\end{align}
where $\cu{A}(t)\coloneqq\big( \cu{A}\circ \cu{u}^\ell\big) (t)\colon\T\rightarrow \R^{D_f\times D_f}$, $\cu{B}(t)\coloneqq\big( \cu{B}\circ (\cu{u}^\ell, \cu{u}^\sigma)\big) (t)\colon\T\rightarrow \R^{D_f\times D^W_f}$ are the SDE coefficients of $\cu{f}(t)$. The driving terms $\cu{W}(t)\colon\T\rightarrow \R^{D^W_f}$, $\cu{W}^\ell(t)\colon\T\rightarrow \R^{D^W_\ell}$, and $\cu{W}^\sigma(t)\colon\T\rightarrow \R^{D^W_\sigma}$ are mutually independent Wiener processes with unit spectral densities, and we define the associated stochastic integrals in the It\^{o} sense. The coefficients $\cu{A}^\ell \in\R^{D_\ell\times D_\ell}$, $\cu{A}^\sigma \in\R^{D_\sigma\times D_\sigma}$, $\cu{B}^\ell \in\R^{D_\ell\times D_\ell^W}$, and $\cu{B}^\sigma \in\R^{D_\sigma\times D_\sigma^W}$ are some constant matrices. The SDEs start from a suitable random initial condition $\left( \cu{f}(t_0), \cu{u}^\ell(t_0), \cu{u}^\sigma(t_0)\right)$ which has a finite second moment. For the model formulation to make sense, the SDE should have a solution which is also unique in some sense. Hence, without abusing the main scope of the paper, we present the existence of strong unique solution of the SDE~\eqref{equ:ss-gp-sde} in Appendix~\ref{appendix:proof-sde-solution}.

The state-space and batch NSGP constructions are equivalent in the sense that it is possible to find the implied covariance function by a state-space construction. To proceed, we need to fix the randomness from processes $\cu{u}^\ell(t)$ and $\cu{u}^\sigma(t)$, so that we can derive the covariance function of $\cu{f}(t)$ given $\cu{u}^\ell(t)$ and $\cu{u}^\sigma(t)$. Let the SDE system~\eqref{equ:ss-gp-sde} be defined on a probability space $\left(\Omega, \mathcal{F}, \mathbb{P} \right)$, and let $\mathcal{F}^u \subset \mathcal{F}$ be the sub-sigma-algebra generated by $\cu{u}^\ell(t)$ and $\cu{u}^\sigma(t)$ for all $t\in\T$. Also let $\mathbb{P}|_{\mathcal{F}^u}\colon \Omega \to [0, 1]$ be the restricted probability measure on $\left(\Omega, \mathcal{F}^u, \mathbb{P}|_{\mathcal{F}^u} \right)$, that is, $\mathbb{P}|_{\mathcal{F}^u}(E) = \mathbb{P}(E)$ for all $E\in\mathcal{F}^u$~\citep{reneProbabilityBook2005}. We understand the covariance function $C^S_f$ as the conditional covariance function $C_f^S(t, t'; \cu{u}^\ell, \cu{u}^\sigma) \coloneqq \cu{H}_f\,\cov\left[\cu{f}(t), \cu{f}(t') \mid \mathcal{F}^u\right]\,\cu{H}^\trans_f$. We formulate the state-space GP covariance function in the following Lemma~\ref{lemma:ss-cov}.

\begin{lemma}
	\label{lemma:ss-cov}
	Suppose that $\mathbb{P}|_{\mathcal{F}^u}$-almost surely $\cu{A}(t)$ is $t$-continuous. Then $\mathbb{P}|_{\mathcal{F}^u}$-almost surely the conditional covariance function is
	\begin{align}
		&C_f^S(t, t'; \cu{u}^\ell, \cu{u}^\sigma) \label{equ:ss-covs}\\
		&\quad= \cu{H}_f \bigg[ \cu{\Lambda}(t,t_0)\cov\big[\cu{f}(t_0)\mid \cu{u}^\ell(t_0), \cu{u}^\sigma(t_0)\big]\cu{\Lambda}^\trans(t,t_0) \nonumber\\
		&\qquad\qquad+ \int^{t\,\wedge\,t'}_{t_0}\cu{\Lambda}(t,s)\,\cu{B}(s)\,\cu{B}^\trans(s)\,\cu{\Lambda}^\trans(t,s)\diff s \bigg]\,\cu{H}_f^\trans,\nonumber
	\end{align}
	where $\cu{\Lambda}(t,t_0)$ is given by the Peano--Baker series generated by $\cu{A}(t)$. 
\end{lemma}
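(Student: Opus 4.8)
The plan is to fix the randomness coming from $\cu{u}^\ell$ and $\cu{u}^\sigma$ and argue conditionally on $\mathcal{F}^u$, so that the coefficients in the first line of~\eqref{equ:ss-gp-sde} become (measurable, hence conditionally deterministic) functions of time and $\cu{f}(t)$ solves a \emph{linear} time-varying It\^{o} SDE. First I would construct its transition matrix: since $\mathbb{P}|_{\mathcal{F}^u}$-almost surely $t\mapsto\cu{A}(t)$ is continuous and therefore locally bounded, the matrix ODE $\partial_t\,\cu{\Lambda}(t,s)=\cu{A}(t)\,\cu{\Lambda}(t,s)$ with $\cu{\Lambda}(s,s)=\cu{I}$ has a unique solution that is represented locally uniformly by the Peano--Baker series generated by $\cu{A}(t)$. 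I would then record the flow identity $\cu{\Lambda}(t,s)\,\cu{\Lambda}(s,r)=\cu{\Lambda}(t,r)$, the backward equation $\partial_s\,\cu{\Lambda}(t,s)=-\cu{\Lambda}(t,s)\,\cu{A}(s)$, and the observation that $\cu{\Lambda}(t,s)$, being a functional of the $\mathcal{F}^u$-measurable path $\cu{A}(\cdot)$, is itself $\mathcal{F}^u$-measurable.

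Next I would establish the variation-of-constants representation $\cu{f}(t)=\cu{\Lambda}(t,t_0)\,\cu{f}(t_0)+\int_{t_0}^{t}\cu{\Lambda}(t,s)\,\cu{B}(s)\diff\cu{W}(s)$, which can be verified either by applying It\^{o}'s formula to $s\mapsto\cu{\Lambda}(t_0,s)\,\cu{f}(s)$ or by differentiating the right-hand side and checking that it solves~\eqref{equ:ss-gp-sde} with the correct initial value (the strong solution itself being supplied by Appendix~\ref{appendix:proof-sde-solution}). The probabilistic inputs here are that $\cu{W}$ is independent of $\cu{W}^\ell$, $\cu{W}^\sigma$ and of the initial data generating $\cu{u}^\ell,\cu{u}^\sigma$, so that $\cu{W}$ is still a Wiener process under $\mathbb{P}|_{\mathcal{F}^u}$ and the It\^{o} integral against it is unaffected by the conditioning; that $\cu{f}(t_0)$ is independent of $\cu{W}$; and that $\cov[\cu{f}(t_0)\mid\mathcal{F}^u]=\cov[\cu{f}(t_0)\mid\cu{u}^\ell(t_0),\cu{u}^\sigma(t_0)]$ almost surely.

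Finally I would compute the conditional covariance directly from this representation. The deterministic-mean component $\cu{\Lambda}(t,t_0)\,\cu{f}(t_0)$ contributes the first bracketed term of~\eqref{equ:ss-covs}; the cross terms between the mean component and the two It\^{o} integrals vanish since $\cu{f}(t_0)$ is independent of $\cu{W}$; and the two stochastic integrals together contribute the second bracketed term, via the conditional It\^{o} isometry (legitimate because $\cu{B}(s)$ is $\mathcal{F}^u$-measurable and $\cu{W}$ is conditionally Brownian) and the independence of the Wiener increments, which is exactly what produces the $t\wedge t'$ upper limit. Multiplying the resulting covariance of $\cu{f}(t)$ on the left by $\cu{H}_f$ and on the right by $\cu{H}_f^\trans$ then yields~\eqref{equ:ss-covs}.

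I expect the algebra to be routine and the real work to lie in making the conditioning rigorous: one has to pass to a regular conditional probability given $\mathcal{F}^u$, verify that $\cu{W}$ remains a Wiener process and that both its stochastic integral and the It\^{o} isometry survive this conditioning, and argue that the transition matrix is $\mathbb{P}|_{\mathcal{F}^u}$-almost surely well defined as a functional of $\cu{A}(\cdot)$ --- which is precisely where the $t$-continuity hypothesis on $\cu{A}(t)$ enters, since it simultaneously ensures convergence of the Peano--Baker series and existence and uniqueness of the conditional linear SDE.
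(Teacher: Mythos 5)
Your proposal is correct and follows essentially the same route as the paper's proof: conditioning on $\mathcal{F}^u$ to obtain a linear time-varying SDE, constructing $\cu{\Lambda}$ via the Peano--Baker series (using the $t$-continuity of $\cu{A}$), writing the variation-of-constants solution via It\^{o}'s formula, and then computing the conditional covariance with the It\^{o} isometry to produce the $t\wedge t'$ integral. The extra care you take in justifying the conditioning (regular conditional probability, $\cu{W}$ remaining a Wiener process given $\mathcal{F}^u$) is left implicit in the paper but is the same underlying argument.
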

\begin{proof}
	See Appendix~\ref{appendix:proof-ss-cov}.
\end{proof}
\begin{remark}
	The Peano--Baker series ${\cu{\Lambda}(t, t_0)}$ does not have a closed-form representation in general, except for some special cases. For example, if $\cu{A}$ is one-dimensional or $\cu{A}$ is self-commuting (i.e., $\cu{A}(t)\,\cu{A}(\tau) - \cu{A}(\tau)\,\cu{A}(t)=\cu{0}$ for all $t,\tau\in\T$), then ${\cu{\Lambda}(t, t_0)} = \exp \big( \int^t_{t_0} \cu{A}(s) \diff s\big)$~\citep{Baake2011}.
\end{remark}
\begin{remark}
	The covariance functions of $\cu{u}^\ell(t)$ and $\cu{u}^\sigma(t)$ have closed-form solutions because their associated SDEs are linear time invariant. The results are already known in literature, see, for example, Chapter~6 of~\citet{sarkkabook2019} for details.
\end{remark}

The above Lemma~\ref{lemma:ss-cov} gives covariance function $C^S_f$ of state-space formulation of $f(t)$ given $\cu{u}^\ell(t)$ and $\cu{u}^\sigma(t)$, and we also neglect null events defined by $\mathbb{P}|_{\mathcal{F}^u}$. The proof is rooted in the Peano--Baker series which is widely used for solving linear time dependent ODEs in control theory~\citep{Brogan2011}. In addition to the Peano--Baker series, it is also possible to use the Magnus expansion if one needs an exponential type of transition matrix $\cu{\Lambda}(t, t_0) = \exp\left( \cdot\right)$. However the convergence of Magnus expansion often requires strict conditions on $\cu{A}(t)$~\citep{Moan2008MagnusConv}. Lemma~\ref{lemma:ss-cov} is useful in the sense that it bridges the batch and state-space NSGP constructions. 

The state-space covariance function $C^S_f$ is different from $C_f$ in the sense that $C_f$ uses only the point values of parameters at $t$ and $t'$, while $C^S_f$ also uses all the past information of parameters in integrals. Although evaluating $C^S_f$ often requires numerical techniques, we do not necessarily need to compute it in the state-space NSGP regression. Instead, we can leverage the Markov property to compute the posterior density sequentially. Let $\cu{z}_k\coloneqq\cu{z}(t_k) = \begin{bmatrix}(\cu{f}(t_k))^\trans & (\cu{u}^\ell(t_k))^\trans & (\cu{u}^\sigma(t_k))^\trans\end{bmatrix}^\trans\in\R^{\varrho}$ be the full state vector starting from a Gaussian initial condition $\cu{z}_0 \sim \mathcal{N}(\cu{0}, \cu{P}_0)$, and $\cu{H} = \begin{bmatrix}\cu{H}_f & \cu{0}&\cu{0}\end{bmatrix}$. Due to the Markov property of the process, the posterior density of $\cu{z}_{0:T} = \left\lbrace \cu{z}_k\colon k=0,1,\ldots, T\right\rbrace$ reads
\begin{equation}
	\begin{split}
		&p(\cu{z}_{0:T}\mid y_{1:T})  \\
		&\quad\propto p(\cu{z}_0)\prod^T_{k=1} \mathcal{N}(y_k \mid \cu{H}\, \cu{z}_k, R_k) \prod^T_{k=1} p(\cu{z}_k \mid \cu{z}_{k-1}).
	\end{split}
	\label{equ:ss-nsgp-posterior}
\end{equation}
However, the transition density $p(\cu{z}_k \mid \cu{z}_{k-1})$ of SDEs~\eqref{equ:ss-gp-sde} is usually intractable, and thus we have to approximate it with a discrete-time state-space model of the form
\begin{equation}
	\begin{split}
		\cu{z}_{k} &= \cu{a}(\cu{z}_{k-1}) + \cu{q}(\cu{z}_{k-1}), \\
		\cu{q}(\cu{z}_{k-1})&\sim\mathcal{N}(\cu{0},\cu{Q}(\cu{z}_{k-1})),\\
		y_k &= \cu{H}\,\cu{z}_k + r_k,
		\label{equ:ss-gp-disc}
	\end{split}
\end{equation}
where functions $\cu{a}\colon\R^{\varrho}\to \R^{\varrho}$ and $\cu{q}\colon\R^{\varrho}\to \R^{\varrho}$ depend on the discretization scheme (e.g., the Euler--Maruyama or the Milstein's method). The transition density $p(\cu{z}_k \mid \cu{z}_{k-1}) \approx \mathcal{N}(\cu{z}_k \mid \cu{a}(\cu{z}_{k-1}), \cu{Q}(\cu{z}_{k-1}))$. It is worth pointing out that the Euler--Maruyama routine can lead to singular covariance $\cu{Q}$ in many state-space NSGP constructions, albeit the simplicity in terms of implementation. One can alternatively use the moment-based discretization method to obtain functions $\cu{a}$ and $\cu{Q}$~\citep{zhao2020TME}. 

Similarly to the batch method in Equation~\eqref{equ:ns-gp-sparse}, the regularized state-space NSGP aims to solve
\begin{equation}
	\begin{split}
		\cu{z}_{0:T}  &= \argmin_{\cu{z}_{0:T}} \mathcal{L}^{\mathrm{S-NSGP}} + \mathcal{L}^\mathrm{S-REG},
	\end{split}
	\label{equ:ss-gp-sparse}
\end{equation}
where by taking two times the negative logarithm of Equation~\eqref{equ:ss-nsgp-posterior} we have
\begin{align}
		&\mathcal{L}^{\mathrm{S-NSGP}}= \cu{z}_0^\trans\,\cu{P}_0^{-1}\,\cu{z}_0 \nonumber\\
		&\quad+\sum^T_{k=1}\Big[ \norm{y_k - \cu{H}\,\cu{z}_k}^2_{R_k} + \norm{\cu{z}_k - \cu{a}(\cu{z}_{k-1}) }_{\cu{Q}(\cu{z}_{k-1})}^2 \nonumber\\
		&\qquad\qquad+ \log\abs{2\,\pi\,\cu{Q}(\cu{z}_{k-1})} \Big].\label{equ:s-map}
\end{align}
As opposed to the batch $\mathcal{L}^\mathrm{NSGP}$, the state-space $\mathcal{L}^{\mathrm{S-NSGP}}$ requires linear computational complexity with respect to time. The regularization term $\mathcal{L}^\mathrm{S-REG}$ in sequential formulation reads
\begin{equation}
	\begin{split}
		&\mathcal{L}^\mathrm{S-REG} \\
		&= \sum^T_{k=0}\left[ \lambda_f\norm{\cu{\Psi}_f\,\cu{z}_k}_1 + \lambda_\ell\norm{\cu{\Psi}_\ell\,\cu{z}_k}_1 + \lambda_\sigma\norm{\cu{\Psi}_\sigma\,\cu{z}_k}_1\right],
	\end{split}
\end{equation}
where $\cu{\Psi}_f$, $\cu{\Psi}_\ell$, and $\cu{\Psi}_\sigma$ are regularization matrices chosen to keep consistency with $\mathcal{L}^\mathrm{REG}$. For example, if $\Phi_\ell$ is an identity matrix then we let $\cu{\Psi}_\ell\,\cu{z}_k = u^\ell(t_k)$ such that $\sum^T_{k=0}\norm{\cu{\Psi}_\ell\,\cu{z}_k}_1 = \norm{\Phi_\ell\,u^\ell_{1:T}}_1$. However it might be difficult to find an equivalent $\cu{\Psi}_\ell$ of $\Phi_\ell$ if $\Phi_\ell$ gives correlation across time because $\left\lbrace \cu{z}_k\colon k=0,1,\ldots, T \right\rbrace $ are independent in $\sum^T_{k=0}\norm{\cu{\Psi}_\ell\,\cu{z}_k}_1$. On the other hand, it is straightforward to formulate the regularization among the state components (e.g., derivatives), while in the batch NSGP it is harder. 

Despite the equivalence of the two NSGP constructions shown in Lemma~\ref{lemma:ss-cov}, it is important to recall that the objective functions $\mathcal{L}^{\mathrm{NSGP}}$ and $\mathcal{L}^{\mathrm{S-NSGP}}$ are not necessarily equal because there are approximations involved in the SDE discretization. The accuracy of the approximation varies depending on the discretization scheme that is used~\citep{peter-sde-num, pedersenSDEMLE1995, sarkkabook2019, zhao2020TME}. 

\section{ADMM solution of regularized NSGPs}
\label{sec:admm}
Solving of the R-NSGP regression problem in the MAP sense is a challenging optimization problem because the objective functions are non-linear, non-smooth, and non-convex in general. There is a vast number of studies on solving this type of optimization problems~\citep{Ruszczynski2006}. One commonly used approach is the gradient descent (GD) method, and the gradient in the non-differentiable part is often interpreted as a subgradient~\citep{TrevorBook2015}. However, the GD-based approaches usually suffer from a slow rate of convergence.

In this paper, we solve the problem by using the alternating direction method of multipliers \citep[ADMM,][]{Boyd2011admm}. The ADMM method belongs to the class of variable-splitting methods. It solves the optimization problem by alternatively updating the primal and dual variables in an augmented Lagrangian function until convergence. The main benefit of ADMM is its ability to split a complicated problem into simple subproblems, and in this case, the MAP and regularization terms are optimized one by one. In~\citet{Gao2020Variable, Gao2019ieks}, ADMM was used to solve regularized and constrained non-linear state estimation problems which are mathematically close to the problems in this paper.

\subsection{Batch solution}
\label{sec:batch-nsgp-admm}
In this section, we solve the batch formulation in Equation~\eqref{equ:ns-gp-sparse} using ADMM. We start by introducing auxiliary variables $v_{1:T}^{\ell}\in\R^{T}$ and $v_{1:T}^{\sigma}\in\R^{T}$, and rewrite Equation~\eqref{equ:ns-gp-sparse} as the equality constrained optimization problem 
\begin{align}
		&\min_{\substack{f_{1:T}, u^\ell_{1:T},u^\sigma_{1:T}}}
		\norm{f_{1:T} -y_{1:T}}^2_{\cu{R}} + \norm{f_{1:T}}^2_{\cu{C}_f}
		+ \log \abs{2\, \pi\,\cu{C}_f} \nonumber\\
		&\qquad+ \norm{u^\ell_{1:T}}^2_{\cu{C}_\ell}
		+ \lambda_\ell \norm{ v_{1:T}^{\ell} }_1
		+ \norm{u^\sigma_{1:T}}^2_{\cu{C}_\sigma}
		+ \lambda_\sigma \norm{ v_{1:T}^{\sigma} }_1 \nonumber\\
		&\qquad + \lambda_f \norm{ v_{1:T}^{f} }_1+ \log \abs{2\,\pi\,\cu{C}_\ell} + \log \abs{2\,\pi\,\cu{C}_\sigma} \label{eq:admm}\\
		&\qquad \mathrm{s.t.} \quad v_{1:T}^{f} =\Phi_f\, f_{1:T}, \,
		v_{1:T}^{\ell} =\Phi_\ell\, u^\ell_{1:T}, \,
		v_{1:T}^{\sigma} = \Phi_\sigma\, u^\sigma_{1:T}.\nonumber
\end{align}
Note that the logarithms of determinants of $\cu{C}_\ell$ and $\cu{C}_\sigma$ are constant. We then define the augmented Lagrangian function $\mathcal{L}(f_{1:T}, u^\ell_{1:T},u^\sigma_{1:T},v_{1:T}^{f},v_{1:T}^{\ell},v_{1:T}^{\sigma},\allowbreak \eta_{1:T}^f,\eta_{1:T}^\ell,$ $\eta_{1:T}^\sigma)$ associated with the problem~\eqref{eq:admm} as
\begin{align}
		&\mathcal{L}(f_{1:T}, u^\ell_{1:T},u^\sigma_{1:T},v_{1:T}^{f},v_{1:T}^{\ell},v_{1:T}^{\sigma},\eta_{1:T}^f,\eta_{1:T}^\ell,\eta_{1:T}^\sigma) \nonumber
		\\
		&=\norm{f_{1:T} -y_{1:T}}^2_{\cu{R}} + \norm{f_{1:T}}^2_{\cu{C}_f} + \log \abs{2\,\pi\,\cu{C}_f} \nonumber\\
		&\quad+ \norm{u^\ell_{1:T}}^2_{\cu{C}_\ell} + \norm{u^\sigma_{1:T}}^2_{\cu{C}_\sigma} + \log \abs{2\,\pi\,\cu{C}_\ell} + \log \abs{2\,\pi\,\cu{C}_\sigma} \nonumber
		\\ 
		&\quad+ \lambda_f \norm{ v_{1:T}^{f} }_1  
		+ (\eta_{1:T}^f)^\trans (\Phi_f \,f_{1:T} - v_{1:T}^{f})\nonumber\\
		&\quad+ \lambda_\ell \norm{ v_{1:T}^{\ell} }_1  
		+ (\eta_{1:T}^\ell)^\trans (\Phi_\ell \,u^\ell_{1:T} - v_{1:T}^{\ell}) \label{equ:batch-auglag}\\
		&\quad+ \lambda_\sigma \norm{ v_{1:T}^{\sigma} }_1 
		+ (\eta_{1:T}^\sigma)^\trans (\Phi_\sigma \,u^\sigma_{1:T} - v_{1:T}^{\sigma}) \nonumber\\
		&\quad+ \frac{\rho_f}{2} \norm{\Phi_f \, f_{1:T} -v_{1:T}^{f}}^2_2 \nonumber\\
		&\quad+ \frac{\rho_\ell}{2}  \norm{\Phi_\ell \, u^\ell_{1:T} - v_{1:T}^{\ell}}^2_2
		+ \frac{\rho_\sigma}{2} \norm{\Phi_\sigma \, u^\sigma_{1:T} -v_{1:T}^{\sigma}}^2_2,
\end{align}
where $\rho_f>0$, $\rho_\ell>0$, $\rho_\sigma >0$ are penalty parameters and $\eta_{1:T}^f\in\R^{T}$, $\eta_{1:T}^\ell\in\R^{T}$, $\eta_{1:T}^\sigma\in\R^{T}$ are Lagrange multipliers. The ADMM method iteratively finds the optimum values for the variables $\big\lbrace f_{1:T}, u^\ell_{1:T}, u^\sigma_{1:T}, v_{1:T}^{f}, v_{1:T}^{\ell},\allowbreak v_{1:T}^{\sigma}\big\rbrace $ by forming a sequence of estimates $\Big\lbrace f_{1:T}^{(i)}, u^{\ell, (i)}_{1:T}, u^{\sigma,(i)}_{1:T},v_{1:T}^{f,(i)}, v_{1:T}^{\ell,(i)}, v_{1:T}^{\sigma,(i)}, \eta_{1:T}^{f,(i)}, \eta_{1:T}^{\ell,(i)}, \eta_{1:T}^{\sigma,(i)}\colon$ $ i=0,1,\ldots\Big\rbrace $.

Now suppose that the initial values at $i=0$ are given, then at each iteration the estimates are updated by solving the following subproblems:
\begin{align}
	&\left\lbrace f_{1:T}^{(i+1)}, u_{1:T}^{\ell,(i+1)},u_{1:T}^{\sigma,(i+1)}\right\rbrace \nonumber\\
	&= \argmin_{f_{1:T}, u^\ell_{1:T},u^\sigma_{1:T}} 
	\norm{f_{1:T} -y_{1:T}}^2_{\cu{R}} +\norm{f_{1:T}}^2_{\cu{C}_f} \label{equ:batch-sub-fu}\\ 
	&+\norm{u^\ell_{1:T}}^2_{\cu{C}_\ell}
	+\log \abs{2\,\pi \,\cu{C}_f}  +\norm{u^\sigma_{1:T}}^2_{\cu{C}_\ell} \nonumber\\
	&+ \log \abs{2\,\pi\,\cu{C}_\ell} + \log \abs{2\,\pi\,\cu{C}_\sigma} \nonumber \\
	&+ \Big(\eta_{1:T}^{f,(i)}\Big)^\trans \Big( \Phi_f \,f_{1:T} - v_{1:T}^{f,(i)} \Big) + \frac{\rho_f}{2}  \norm*{\Phi_f \,f_{1:T} - v_{1:T}^{f,(i)} }^2_2 \nonumber\\
	&+ \Big(\eta_{1:T}^{\ell,(i)}\Big)^\trans \Big( \Phi_\ell \,u^\ell_{1:T} - v_{1:T}^{\ell,(i)} \Big) + \frac{\rho_\ell}{2}  \norm*{\Phi_\ell \,u^\ell_{1:T} - v_{1:T}^{\ell,(i)} }^2_2 \nonumber\\
	&+ \Big(\eta_{1:T}^{\sigma,(i)}\Big)^\trans \Big( \Phi_\sigma u^\sigma_{1:T} - v_{1:T}^{\sigma,(i)} \Big) + \frac{\rho_\sigma}{2}  \norm*{\Phi_\sigma u^\sigma_{1:T} - v_{1:T}^{\sigma,(i)} }^2_2,\nonumber
\end{align}
\begin{equation}
	\begin{split}
		& v_{1:T}^{f,(i+1)} 
		=\argmin_{ v_{1:T}^{f}} 
		\lambda_f \,\norm{ v_{1:T}^{f} }_1 \\
		&\qquad\qquad+ \frac{\rho_f}{2}  \norm*{\Phi_f \, f^{(i+1)}_{1:T} - v_{1:T}^{f} + {\eta_{1:T}^{f,(i)}}\,/\,\rho_f}^2_2,\\  
		& v_{1:T}^{\ell,(i+1)} 
		=\argmin_{ v_{1:T}^{\ell}} 
		\lambda_\ell \,\norm{ v_{1:T}^{\ell} }_1 \\
		&\qquad\qquad+ \frac{\rho_\ell}{2}  \norm*{\Phi_\ell \, u^{\ell,(i+1)}_{1:T} - v_{1:T}^{\ell} + {\eta_{1:T}^{\ell,(i)}}\,/\,\rho_\ell}^2_2,\\  
		& v_{1:T}^{\sigma,(i+1)} 
		=\argmin_{ v_{1:T}^{\sigma}} 
		\lambda_\sigma \, \norm{ v_{1:T}^{\sigma} }_1 \\
		&\qquad\qquad+ \frac{\rho_\sigma}{2}  \norm*{\Phi_\ell \, u^{\sigma,(i+1)}_{1:T} - v_{1:T}^{\sigma} + {\eta_{1:T}^{\sigma,(i)}}\,/\,\rho_\sigma}^2_2,
		\label{equ:batch-sub-v}
	\end{split}
\end{equation}
\begin{equation}
	\begin{split}
		&\eta_{1:T}^{f,(i+1)}= \eta^{f,(i)}_{1:T}+ \rho_f \left( \Phi_f \, f^{(i+1)}_{1:T} - v_{1:T}^{f,(i+1)} \right), \\
		&\eta_{1:T}^{\ell,(i+1)}= \eta^{\ell,(i)}_{1:T}+ \rho_\ell \left( \Phi_\ell \, u^{\ell,(i+1)}_{1:T} - v_{1:T}^{\ell,(i+1)} \right), \\
		&\eta_{1:T}^{\sigma,(i+1)}= \eta^{\sigma,(i)}_{1:T} + \rho_\sigma \left(  \Phi_\sigma \, u^{\sigma,(i+1)}_{1:T} - v_{1:T}^{\sigma,(i+1)}\right).
		\label{equ:batch-sub-eta}
	\end{split}
\end{equation}
The $\{f_{1:T}, u_{1:T}^{\ell},u_{1:T}^{\sigma}\}$-subproblem in Equation~\eqref{equ:batch-sub-fu} is an unconstrained problem, and the objective function is differentiable everywhere. Hence we can use standard non-linear optimization methods~\citep{WrightNumericalOPT2006} to solve it. The solutions to the $v_{1:T}^{\ell}$ and $v_{1:T}^{\sigma}$ subproblems in Equation~\eqref{equ:batch-sub-v} can be obtained via the soft thresholding approach~\citep[see, e.g., Section 2.4.1 of][]{TrevorBook2015}. The ADMM routine for solving the R-NSGP regression problem is summarized in Algorithm~\ref{alg:batch-nd-gp-admm}.

\begin{algorithm}
	\SetAlgoLined
	\KwData{$y_{1:T}$ and $\left\lbrace t_k\colon k=1,2,\ldots,T\right\rbrace $}
	\KwParas{$\lambda_f$, $\lambda_\ell$, $\lambda_\sigma$, $\rho_f$, $\rho_\ell$, $\rho_\sigma$, $\Phi_\ell$, $\Phi_f$, $\Phi_\sigma$, $\cu{R}$, and hyperparameters of GPs $u^\ell(t)$ and $u^\sigma(t)$.}
	\KwInit{$f_{1:T}^{(0)}, u^{\ell, (0)}_{1:T}, u^{\sigma,(0)}_{1:T}, v_{1:T}^{f,(0)},v_{1:T}^{\ell,(0)},\allowbreak v_{1:T}^{\sigma,(0)}, \eta_{1:T}^{f,(0)},\eta_{1:T}^{\ell,(0)}, \allowbreak\eta_{1:T}^{\sigma,(0)}$}
	$i=0$ \;
	\While{not converged}{
	compute $f_{1:T}^{(i+1)}, u_{1:T}^{\ell,(i+1)}$, and $u_{1:T}^{\sigma,(i+1)}$ by optimizing Equation~\eqref{equ:batch-sub-fu} \;
	compute $v_{1:T}^{f,(i+1)}$, $v_{1:T}^{\ell,(i+1)}$, and $ v_{1:T}^{\sigma,(i+1)}$ from Equation~\eqref{equ:batch-sub-v} by using the soft thresholding \;
	compute $\eta_{1:T}^{f,(i+1)}$, $\eta_{1:T}^{\ell,(i+1)}$, and $\eta_{1:T}^{\sigma,(i+1)}$ by Equation~\eqref{equ:batch-sub-eta} \;
	$i = i+1$ \;
	}
	\KwRet{$f_{1:T}^{(i)}, u^{\ell, (i)}_{1:T}, u^{\sigma,(i)}_{1:T}$}
	\caption{Regularized batch NSGP with ADMM (R-NSGP ADMM)}
	\label{alg:batch-nd-gp-admm}
\end{algorithm}

\subsection{State-space solution}
\label{sec:ss-nsgp-admm}
In this section, we solve the state-space-form problem in Equation~\eqref{equ:ss-gp-sparse} using ADMM, which leads to a similar solution to the batch case. Let the collections of auxiliary variables $w^f_k\in\R$, $w^\ell_k\in\R$, and $w^\sigma_k\in\R$ be denoted as $\cu{w}^f_{1:T}=\left\lbrace w^f_k\colon k=0,1,\ldots, T \right\rbrace $,  $\cu{w}^\ell_{1:T}=\left\lbrace w^\ell_k\colon k=0,1,\ldots, T \right\rbrace$, and $\cu{w}^\sigma_{1:T}=\left\lbrace w^\sigma_k\colon k=0,1,\ldots, T \right\rbrace $, respectively. We rewrite Equation~\eqref{equ:ss-gp-sparse} as an equality constrained problem by 
\begin{equation}
	\begin{split}
		&\min_{\cu{z}_{0:T}, \cu{w}^f_{1:T},\cu{w}^\ell_{1:T}, \cu{w}^\sigma_{1:T}} \cu{z}_0^\trans\,\cu{P}_0^{-1}\,\cu{z}_0 \\
		&\quad+ \sum^T_{k=1}\Big[ \norm{y_k - \cu{H}\,\cu{z}_k}^2_{R_k} + \norm{\left(\cu{z}_k - \cu{a}(\cu{z}_{k-1}) \right)}_{\cu{Q}(\cu{z}_{k-1})}^2 \\
		&\qquad\qquad+ \log\abs{2\,\pi\,\cu{Q}(\cu{z}_{k-1})} \Big] \\
		&\quad+\sum^T_{k=0}\left[ \lambda_f\norm{w^f_k}_1 + \lambda_\ell\norm{w^\ell_k}_1 + \lambda_\sigma\norm{w^\sigma_k}_1\right] \\
		&\quad\mathrm{s.t.}\quad w^f_k = \cu{\Psi}_f\,\cu{z}_k, \quad w^\ell_k = \cu{\Psi}_\ell\,\cu{z}_k,\quad w^\sigma_k = \cu{\Psi}_\sigma\,\cu{z}_k.
	\end{split}\nonumber
\end{equation}
The corresponding augmented Lagrangian function reads
\begin{equation}
	\begin{split}
		&\mathcal{L}(\cu{z}_{0:T}, \cu{w}^f_{1:T}, \cu{w}^\ell_{1:T}, \cu{w}^\sigma_{1:T}, \cu{\mu}^f_{1:T},\cu{\mu}^\ell_{1:T}, \cu{\mu}^\sigma_{1:T}) \\
		&= \cu{z}_0^\trans\,\cu{P}_0^{-1}\,\cu{z}_0 + \sum^T_{k=1}\Big[ \norm{y_k - \cu{H}\,\cu{z}_k}^2_{R_k} \\
		&\quad+ \norm{\left(\cu{z}_k - \cu{a}(\cu{z}_{k-1}) \right)}_{\cu{Q}(\cu{z}_{k-1})}^2 + \log\abs{2\,\pi\,\cu{Q}(\cu{z}_{k-1})} \\
		&\quad+ \lambda_f\norm{w^f_k}_1 + \mu_k^f\,(\cu{\Psi}_f \, \cu{z}_k - w^{f}_k) + \frac{\rho_f}{2}\,\norm{\cu{\Psi}_f \, \cu{z}_k - w^{f}_k}^2_2 \\
		&\quad+\lambda_\ell\norm{w^\ell_k}_1 + \mu_k^\ell\,(\cu{\Psi}_\ell \, \cu{z}_k - w^{\ell}_k) + \frac{\rho_\ell}{2}\,\norm{\cu{\Psi}_\ell \, \cu{z}_k - w^{\ell}_k}^2_2 \\
		&\quad+ \lambda_\sigma\norm{w^\sigma_k}_1 + \mu_k^\sigma\,(\cu{\Psi}_\ell \, \cu{z}_k - w^{\sigma}_k) + \frac{\rho_\sigma}{2}\,\norm{\cu{\Psi}_\sigma \, \cu{z}_k - w^{\sigma}_k}^2_2\Big].
	\end{split}
	\label{equ:ss-gp-aug-lag}
\end{equation}

Now suppose that the initial solution $\left\lbrace \cu{z}_{0:T}^{(i)}, \cu{w}^{f, (i)}_{1:T}, \cu{w}^{\ell, (i)}_{1:T}, \cu{w}^{\sigma, (i)}_{1:T}, \cu{\mu}_{1:T}^{f, (i)}, \cu{\mu}_{1:T}^{\ell, (i)}, \cu{\mu}_{1:T}^{\sigma, (i)}\right\rbrace $ at $i=0$ is given. We solve the problem by iteratively solving the following subproblems:
\begin{align}
	\cu{z}_{0:T}^{(i+1)} &= \argmin_{\cu{z}_{0:T}} \cu{z}_0^\trans\,\cu{P}_0^{-1}\,\cu{z}_0 \nonumber\\
	&\quad+\sum^T_{k=1}\Big[ \norm{y_k - \cu{H}\,\cu{z}_k}^2_{R_k} + \norm{\left(\cu{z}_k - \cu{a}(\cu{z}_{k-1}) \right)}_{\cu{Q}(\cu{z}_{k-1})}^2 \nonumber\\
	&\qquad\qquad+ \log\abs{2\,\pi\,\cu{Q}(\cu{z}_{k-1})} \Big] \label{equ:ss-sub-z}\\
	&\quad+ \frac{\rho_f}{2}\sum^T_{k=0}\norm*{\cu{\Psi}_f \, \cu{z}_k - w^{f, (i)}_k + \mu^{f, (i)}_k}_2^2 \nonumber\\
	&\quad+ \frac{\rho_\ell}{2}\sum^T_{k=0}\norm*{\cu{\Psi}_\ell \, \cu{z}_k - w^{\ell, (i)}_k + \mu^{\ell, (i)}_k}_2^2 \nonumber\\
	&\quad+ \frac{\rho_\sigma}{2}\sum^T_{k=0}\norm*{\cu{\Psi}_\sigma \, \cu{z}_k - w^{\sigma, (i)}_k + \mu^{\sigma, (i)}_k}_2^2,\nonumber
\end{align}
and the following for $k=0,1,\ldots,T$:
\begin{equation}
	\begin{split}
		w^{f,(i+1)}_{k} &= \argmin_{w^{f}_{k}} \lambda_f \norm{w^f_{k}}_1\\ 
		&\quad+\frac{\rho_f}{2}\norm*{\cu{\Psi}_f \, \cu{z}_k^{(i+1)} - w^{f, (i)}_k + \mu^{f, (i)}_k}^2_2, \\
		w^{\ell,(i+1)}_{k} &= \argmin_{w^{\ell}_{k}} \lambda_\ell \norm{w^\ell_{k}}_1\\ 
		&\quad+\frac{\rho_\ell}{2}\norm*{\cu{\Psi}_\ell \, \cu{z}_k^{(i+1)} - w^{\ell, (i)}_k + \mu^{\ell, (i)}_k}^2_2, \\
		w^{\sigma,(i+1)}_{k} &= \argmin_{w^{\sigma}_{k}} \lambda_\sigma  \norm{w^\sigma_{k}}_1\\ 
		&\quad+\frac{\rho_\sigma}{2}\norm*{\cu{\Psi}_\sigma \, \cu{z}_k^{(i+1)} - w^{\sigma, (i)}_k + \mu^{\sigma, (i)}_k}^2_2,
		\label{equ:ss-sub-w}
	\end{split}
\end{equation}
along with
\begin{equation}
	\begin{split}
		\mu_{k}^{f,(i+1)} &= \mu^{f,(i)}_{k}+  \left( \cu{\Psi}_f\, \cu{z}_k^{(i+1)} - w_{k}^{f,(i +1 )} \right), \\
		\mu_{k}^{\ell,(i+1)} &= \mu^{\ell,(i)}_{k}+  \left( \cu{\Psi}_\ell\, \cu{z}_k^{(i+1)} - w_{k}^{\ell,(i +1 )} \right), \\
		\mu_{k}^{\sigma,(i+1)}&= \mu^{\sigma,(i)}_{k} +  \left(\cu{\Psi}_\sigma\, \cu{z}_k^{(i+1)} - w_{k}^{\sigma,(i +1 )}\right).
		\label{equ:ss-sub-mu}
	\end{split}
\end{equation}

The solutions to subproblems~\eqref{equ:ss-sub-z},~\eqref{equ:ss-sub-w}, and~\eqref{equ:ss-sub-mu} can be obtained in the same was as in the batch version in Section~\ref{sec:batch-nsgp-admm}. For the sake of space, we do not show the algorithm there, but it can be obtained by modifying Algorithm~\ref{alg:batch-nd-gp-admm} by substituting the corresponding state-space equations. It is also important to mention that the solution to subproblems~\eqref{equ:ss-sub-w} and~\eqref{equ:ss-sub-mu} can be fully vectorized to remove the loop over $k=0,1,\ldots,T$. 

\subsection{Uncertainty quantification}
\label{sec-map-uncertainty}
Because the methods in Section~\ref{sec:admm} for solving the regularized NSGP are based on the MAP approach, they cannot properly account for uncertainties in the estimation results. One approach to tackle this is to use the Laplace's method which approximates the posterior density with a Gaussian by using the MAP estimate as the mean and the negative inverse Hessian as the covariance~\citep{Bishop2006}. However, in our case the objective function is not differentiable and hence this approach is directly applicable. Still, one possible approach is to use the Hessians of the non-regularized objective functions~\eqref{equ:ns-gp-map} and~\eqref{equ:s-map} instead. Although this is in principle straightforward to do by using automatic differentiation toolboxes, it can be computationally demanding. Furthermore, the Hessian is a matrix of dimension determined by the square of the number of measurements $T^2$, which can lead to large memory consumption.

It is also possible to concentrate on approximating the marginal posterior density $p(f(t) \mid y_{1:T})$ instead of the joint posterior density. After the MAP estimates (i.e., $u^\ell_\ast(t)$ and $u^\sigma_\ast(t)$) have been obtained, one can approximate $p(f(t) \mid u^\ell(t), u^\sigma(t), y_{1:T})\approx \widetilde{p}(f(t) \mid u^{\ell}_\ast(t), u^\sigma_\ast(t), y_{1:T})$. Therefore obtaining density $p(f(t) \mid y_{1:T})$ is a conventional GP regression problem~\citep{heinonen2016}. As for the state-space NSGPs, density $p(f(t) \mid y_{1:T})$ can be solved by applying the Kalman filter and Rauch--Tung--Striebel smoother on the linear time varying SDE, where the SDE coefficients are determined by the MAP estimates~\citep{sarkkabook2019}. Although these approaches are computationally cheap, their limitation is that they can only be used to approximate the marginal posterior density $p(f(t) \mid y_{1:T})$ instead of the full posterior density $p(f(t), u^\ell(t), u^\sigma(t) \mid y_{1:T})$.

\section{Convergence analysis}
\label{sec:convergence-analysis}
In this section, we analyze the convergence of the ADMM method for the regularized NSGP. More general discussion on the convergence of augmented Lagrangian splitting methods can be found, for example, in \citet{Boyd2011admm}, \citet{admm2019convergence}, and \citet{Gao2019ieks}. Although the analysis in this section is based on a batch formulation, it also applies to state-space formulations provided that the implied covariance function (when also the discretization is taken into account) satisfies the assumptions. Hence the main goal is to show that the Algorithm~\ref{alg:batch-nd-gp-admm} converges to a local optimum. The main result is revealed in Theorem~\ref{thm:convergence}.

The following notations are used in this section. To simplify the notation, we concatenate variables $f_{1:T}$, $u^\ell_{1:T}$, and $u^\sigma_{1:T}$ into one vector $z_{1:T} = \begin{bmatrix}(f_{1:T})^\trans & (u^\ell_{1:T})^\trans & (u^\sigma_{1:T})^\trans\end{bmatrix}^\trans \in\R^{2\,T}$ so that they are discussed jointly. We let $C^{k}_\infty(M)$ be the space of $k$-times continuously differentiable functions $\varsigma\colon M\to \R$ for which $\sum_{\abs{\alpha}\leq k} \sup_{m\in M} \abs{\varsigma^{(\alpha)}(m)}$ is bounded, where $\varsigma^{(\alpha)}$ denotes the $\alpha$-th derivative of $\varsigma$, and $\alpha$ is a multi-index. We denote by $\nabla$ the gradient, and we also let $\nabla_x^\trans g(x,y)$ be the transpose of the gradient of $g(x,y)$ with respect to $x$. 

\subsection{Theoretical results}
The assumptions used in the theoretical results are the following. 
\begin{assumption}
	\label{assu:g}
	We have $g\in C^{2}_\infty(\R)$ and $g$ is uniformly lower bounded by a positive constant.
\end{assumption}

\begin{assumption}
	\label{assu:eig-of-C}
	$\cu{C}_f$ is strictly positive definite. That is, $\lambda_{\mathrm{min}}(\cu{C}_f) \geq c > 0$ for all $u^\ell_{1:T}\in\R^{T}$ and $u^\sigma_{1:T}\in\R^{T}$, where $\lambda_{\mathrm{min}}$ denotes the smallest eigenvalue. 
\end{assumption}

\begin{assumption}
	\label{assu:f-in-compact}
	$f_{1:T} \in G\subseteq\R^T$, where $G$ is compact. 
\end{assumption}

\begin{assumption}
	\label{assu:rho-large}
	Parameters $\rho_f$, $\rho_\ell$, and $\rho_\sigma$ satisfy
	\begin{equation}
		\begin{split}
			\left( \frac{\rho_f}{2}\,\lambda_{\mathrm{min}}^2(\Phi_f) - \frac{L_z}{2}\right) &\geq 0, \\
			\left( \frac{\rho_\ell}{2}\,\lambda_{\mathrm{min}}^2(\Phi_\ell) - \frac{L_z}{2}\right) &\geq 0, \nonumber
		\end{split}
	\end{equation}
	and
	\begin{equation}
		\left( 	\frac{\rho_\sigma}{2}\,\lambda_{\mathrm{min}}^2(\Phi_\sigma) - \frac{L_z}{2}\right) \geq 0,\nonumber
	\end{equation}
\end{assumption}
where constant $L_z$ is defined in Lemma~\ref{lemma:lip-cont}.

Assumption~\ref{assu:g} is related to the NSGP construction, where we need a suitable positive transformation function $g$. The function is chosen to be twice continuously differentiable, and the function and its derivatives are uniformly bounded. This assumption aims to ensure the Lipschitz continuity of $\nabla\mathcal{L}^\mathrm{NSGP}$ in the following Lemma~\ref{lemma:lip-cont}. This condition is quite strong because it narrows the choice of transformation functions. However, ensuring the positivity of GP parameters is the primary goal for choosing $g$, and in practice one can always bound the functions numerically. The conditions can be further weakened by using weak derivatives and boundedness almost everywhere while still ensuring the Lipschitz continuity of $\nabla\mathcal{L}^\mathrm{NSGP}$~\citep[see, Theorem 4 in Section 5.8 of][]{EvansPDE}.

Assumption~\ref{assu:eig-of-C} is related to the covariance matrix of $f(t)$. Note that $\cu{C}_f$ is the covariance matrix evaluated at the data points, and it is a matrix-valued function with respect to parameters $u^\ell_{1:T}$ and $u^\sigma_{1:T}$ (i.e., $\cu{C}_f\colon \R^{2\,T}\to \R^{T\times T}$). Although $\cu{C}_f$ is positive definite by definition, it might be possible that the smallest eigenvalue of $\cu{C}_f$ approaches zero in limit of some sequences on the domain, and consequently the maximum eigenvalue of $\cu{C}_f^{-1}$ may diverge to infinity. This assumption postulates a positive lower bound on $\cu{C}_f$ so that the determinant of $\cu{C}_f$ does not approach zero. 

Assumption~\ref{assu:f-in-compact} is a technical assumption related to the optimization of subproblem~\eqref{equ:batch-sub-fu}. The value of $f_{1:T}$ during the optimization must be constrained somehow to ensure convergence in all conditions. Due to finite range of floating point values this is always the case. Alternatively, we can use constrained optimizers, such as interior point methods~\citet{WrightNumericalOPT2006} to ensure this.

Assumption~\ref{assu:rho-large} is related to the ADMM solver. The penalty parameters $\rho_f$, $\rho_\ell$, and $\rho_\sigma$ in the augmented Lagrangian function need to be chosen large enough depending on the regularization matrices and the Lipschitz constant $L_z$. The constant $L_z$ which is shown in Lemma~\ref{lemma:lip-cont} is concerned with $\nabla\mathcal{L}^\mathrm{NSGP}$ from the NSGP model. 

The following Lemma~\ref{lemma:lip-cont} is an auxiliary result to the main theorem.
\begin{lemma}
	\label{lemma:lip-cont}
	Under Assumptions~\ref{assu:g} and~\ref{assu:f-in-compact}, there exists a constant $L_z>0$ such that
	\begin{equation}
		\begin{split}
			&\Big\lvert \mathcal{L}^{\mathrm{NSGP}}\left(z^{(i)}_{1:T}\right) - \mathcal{L}^{\mathrm{NSGP}}\left(z_{1:T}^{(j)}\right) \\
			&\quad- \nabla^\trans_{z_{1:T}}\mathcal{L}^{\mathrm{NSGP}}\left(z_{1:T}^{(j)}\right)\,\left(z_{1:T}^{(i)} - z_{1:T}^{(j)}\right) \Big\rvert \\ 
			&\leq\frac{L_z}{2}\,\norm{z_{1:T}^{(i)} - z_{1:T}^{(j)}}^2_2,
			\label{equ:lip-grad}
		\end{split}
	\end{equation}
	holds for all $z_{1:T}^{(i)}, z_{1:T}^{(j)}\in\R^{2 \, T} \times G$.
\end{lemma}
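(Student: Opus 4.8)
The plan is to show that $\mathcal{L}^{\mathrm{NSGP}}$ is twice continuously differentiable on $\R^{2T}\times G$ with a Hessian whose operator norm is bounded everywhere on this set by a single constant $L_z$; the claimed inequality is then the standard second-order (``descent'') estimate obtained from Taylor's theorem with integral remainder along the segment joining $z_{1:T}^{(j)}$ and $z_{1:T}^{(i)}$. For the segment to stay inside the domain we may, without loss of generality, take $G$ to be convex (otherwise replace it by its convex hull, which is still compact); note that $\mathcal{L}^{\mathrm{NSGP}}$ is in fact defined and smooth for every $f_{1:T}\in\R^T$, so the restriction $f_{1:T}\in G$ only serves to keep $\norm{f_{1:T}}$ bounded.

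First I would analyse how $\cu{C}_f$ depends on $(u^\ell_{1:T},u^\sigma_{1:T})$. Each entry $(\cu{C}_f)_{ab}=C_f(t_a,t_b;u^\ell,u^\sigma)$ depends on the $u$-variables only through the four numbers $\ell(t_a)=g(u^\ell(t_a))$, $\ell(t_b)=g(u^\ell(t_b))$, $\sigma(t_a)=g(u^\sigma(t_a))$, $\sigma(t_b)=g(u^\sigma(t_b))$. By Assumption~\ref{assu:g}, $g$ maps $\R$ into a compact interval $[g_{\min},g_{\max}]\subset(0,\infty)$ and $g',g''$ are uniformly bounded, so these four numbers range over the compact box $[g_{\min},g_{\max}]^4$. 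Using the explicit form~\eqref{equ:Cf-paciorek} together with the classical fact that $x\mapsto x^{\nu}K_\nu(x)$ extends to a function that is smooth (indeed analytic in $x^2$) across $x=0$, the map $(\ell_a,\ell_b,\sigma_a,\sigma_b)\mapsto C_f(t_a,t_b;\cdot)$ is $C^2$ on $(0,\infty)^4$ and hence has bounded first and second derivatives on the box. Composing with $g$ and applying the chain rule, whose extra factors $g',g''$ are bounded by Assumption~\ref{assu:g}, shows that $(u^\ell_{1:T},u^\sigma_{1:T})\mapsto\cu{C}_f$ is $C^2$ on all of $\R^{2T}$ with uniformly bounded first and second partial derivatives.

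Next I would propagate these bounds through $\cu{C}_f^{-1}$ and $\log\abs{2\pi\cu{C}_f}$. Since $(\ell_{1:T},\sigma_{1:T})\mapsto\cu{C}_f$ is continuous on the compact box $[g_{\min},g_{\max}]^{2T}$ and $\cu{C}_f$ is (strictly) positive definite for every admissible choice of length-scales and magnitudes by validity of the Paciorek--Schervish kernel at distinct time points, $\lambda_{\min}(\cu{C}_f)$ is a continuous strictly positive function on that compact box and therefore attains a positive minimum; this yields a uniform bound $\norm{\cu{C}_f^{-1}}\le 1/c$, which is precisely the content of Assumption~\ref{assu:eig-of-C}. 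Combining this with the matrix-calculus identities $\partial\cu{C}_f^{-1}=-\cu{C}_f^{-1}(\partial\cu{C}_f)\cu{C}_f^{-1}$ and $\partial\log\det\cu{C}_f=\tr(\cu{C}_f^{-1}\partial\cu{C}_f)$ together with their once-differentiated versions (obtained by the product rule), the uniform bounds on $\cu{C}_f$, $\partial\cu{C}_f$, $\partial^2\cu{C}_f$ and on $\cu{C}_f^{-1}$ give uniform bounds on the first and second $u$-derivatives of $\cu{C}_f^{-1}$ and of $\log\abs{2\pi\cu{C}_f}$. The quadratic form $\norm{f_{1:T}}^2_{\cu{C}_f}=f_{1:T}^\trans\cu{C}_f^{-1}f_{1:T}$ then has bounded derivatives in all of $f_{1:T},u^\ell_{1:T},u^\sigma_{1:T}$, because $f_{1:T}$ ranges over the compact set $G$ (Assumption~\ref{assu:f-in-compact}): its pure $f$-Hessian is $2\cu{C}_f^{-1}$, and the mixed and pure-$u$ derivatives are products of bounded $f$-factors with bounded derivatives of $\cu{C}_f^{-1}$. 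The remaining terms of $\mathcal{L}^{\mathrm{NSGP}}$ are either quadratic in $f$ with fixed Hessian ($\norm{f_{1:T}-y_{1:T}}^2_{\cu{R}}$), quadratic in $u^\ell$ or $u^\sigma$ with fixed Hessian ($\norm{u^\ell_{1:T}}^2_{\cu{C}_\ell}$ and $\norm{u^\sigma_{1:T}}^2_{\cu{C}_\sigma}$), or constant ($\log\abs{2\pi\cu{C}_\ell}$ and $\log\abs{2\pi\cu{C}_\sigma}$), so each contributes a bounded Hessian. Summing, $\nabla^2\mathcal{L}^{\mathrm{NSGP}}$ is continuous and bounded in operator norm by some $L_z>0$ on $\R^{2T}\times G$, and the integral form of Taylor's theorem applied along the segment yields~\eqref{equ:lip-grad}.

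The main obstacle is the smoothness claim in the second step: verifying that $C_f(t_a,t_b;\cdot)$ is genuinely $C^2$ in the length-scales and magnitudes and, crucially, that its derivatives do not blow up as $t_a\to t_b$, where the argument of $K_\nu$ tends to zero. This is exactly the point where the smooth extension of $x\mapsto x^{\nu}K_\nu(x)$ at the origin is needed. Everything afterwards is routine chain-rule and matrix-calculus bookkeeping, rendered uniform by the compactness supplied through Assumptions~\ref{assu:g} and~\ref{assu:f-in-compact}.
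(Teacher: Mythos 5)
Your proposal is correct and follows essentially the same route as the paper: establish that the Hessian of $\mathcal{L}^{\mathrm{NSGP}}$ is continuous and, by the compactness supplied through Assumptions~\ref{assu:g} and~\ref{assu:f-in-compact}, uniformly bounded, then invoke the standard second-order Taylor estimate (the paper cites Lemmas 1.2.2--1.2.3 of Nesterov where you use the integral remainder directly). You simply fill in more of the bookkeeping the paper leaves implicit --- the smooth extension of $x\mapsto x^{\nu}K_\nu(x)$ at the origin, the uniform lower bound on $\lambda_{\min}(\cu{C}_f)$ via continuity on the compact box, and the convexity of the segment's domain --- all of which are sound.
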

\begin{proof}
	The aim is to show that the Hessian of $\mathcal{L}^{\mathrm{NSGP}}$ (i.e., the the Jacobian of $\nabla_{z_{1:T}} \mathcal{L}^{\mathrm{NSGP}}$) is in a suitable sense bounded, so that we can use the results by~\citet{Yurii2004}. The gradient $\nabla_{z_{1:T}} \mathcal{L}^{\mathrm{NSGP}}$ is derived in Appendix~\ref{appendix:gradients}. Because the Hessian of $\mathcal{L}^{\mathrm{NSGP}}$ is continuous with respect to $f_{1:T}$, the Hessian is entrywise bounded with respect to $f_{1:T}$ due to the compactness in Assumption~\ref{assu:f-in-compact}. We are left with showing the boundedness for $u^\ell_{1:T}$ and $u^\sigma_{1:T}$. For every fixed time steps, the covariance matrix $\cu{C}_f$ is a matrix-valued function of $u^\ell_{1:T}$ and $u^\sigma_{1:T}$. Assumption~\ref{assu:g} implies that $\ell(t) = g(u^\ell(t))$ and $\sigma(t)=g(u^\sigma(t))$ are mapped onto compact intervals in the positive real line for every $t\in\T$. Due to the smoothness of covariance function $C_f$ with respect to $\ell(t)$ and $\sigma(t)$ (i.e., $K_\nu$ being analytic for positive arguments) it follows that the Hessian of $\mathcal{L}^{\mathrm{NSGP}}$ is continuous with respect to $\ell_{1:T}$ and $\sigma_{1:T}$. From Assumption~\ref{assu:g} it further follows that for every $u^\ell_{1:T} \in \R^T$ and $u^\sigma_{1:T} \in \R^T$ the Hessian with respect to $u_{1:T}$ is entrywise bounded. Finally, by Lemma 1.2.2 and 1.2.3 of~\citet{Yurii2004} we arrive at Equation~\eqref{equ:lip-grad}. 
\end{proof}


The following Lemma~\ref{lemma:non-increasing} shows that the sequence generated by Algorithm~\ref{alg:batch-nd-gp-admm} is non-increasing and lower bounded. 

\begin{lemma}
	\label{lemma:non-increasing}
	Let Assumptions~\ref{assu:g}--\ref{assu:rho-large} be satisfied. Then the sequence $\mathcal{L}\Big(z_{1:T}^{(i)},v_{1:T}^{f,(i)}, v_{1:T}^{\ell,(i)},v_{1:T}^{\sigma,(i)},\allowbreak\eta_{1:T}^{f,(i)},\eta_{1:T}^{\ell,(i)},\eta_{1:T}^{\sigma,(i)}\Big)$ generated by Algorithm~\ref{alg:batch-nd-gp-admm} is lower bounded and non-increasing for $i=0,1,2,\ldots$.
\end{lemma}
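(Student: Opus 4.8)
The plan is to follow the standard template for ADMM convergence on nonconvex problems (as in \citet{admm2019convergence} and \citet{Gao2019ieks}), adapted to our three-block splitting. The argument splits into two parts: (i) showing the augmented Lagrangian is non-increasing along the iterates, and (ii) showing it is bounded below.

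For the non-increasing part, I would bound the change in $\mathcal{L}$ after one full sweep of updates by decomposing it across the three subproblem types. \textbf{The $v$-updates:} since each $v$-subproblem in \eqref{equ:batch-sub-v} minimizes $\mathcal{L}$ exactly over the corresponding $v$-block (the other variables held fixed), each such step is non-increasing; no work needed beyond invoking optimality. \textbf{The dual updates:} using \eqref{equ:batch-sub-eta}, the change in $\mathcal{L}$ contributed by updating $\eta^{f}$ (and similarly $\eta^{\ell},\eta^{\sigma}$) equals $\tfrac{1}{\rho_f}\norm{\eta_{1:T}^{f,(i+1)}-\eta_{1:T}^{f,(i)}}_2^2$, which is positive, so this step can \emph{increase} $\mathcal{L}$. \textbf{The $z$-update:} this is where the decrease must be harvested. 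The $z$-subproblem \eqref{equ:batch-sub-fu} minimizes over $z_{1:T}$ the sum $\mathcal{L}^{\mathrm{NSGP}}(z_{1:T})+\text{(linear and quadratic penalty terms)}$; since $\mathcal{L}^{\mathrm{NSGP}}$ is nonconvex, I cannot use strong convexity directly, but I can use Lemma~\ref{lemma:lip-cont}: the $\tfrac{L_z}{2}\norm{\cdot}_2^2$-type descent-lemma bound plus the strong convexity modulus $\rho_f\lambda_{\min}^2(\Phi_f)$ (resp.\ $\ell,\sigma$) of the quadratic penalty terms gives, after one $z$-step, a decrease of at least roughly $\tfrac{1}{2}\big(\rho_f\lambda_{\min}^2(\Phi_f)-L_z\big)\norm{f_{1:T}^{(i+1)}-f_{1:T}^{(i)}}_2^2$ and analogous terms for the $u$-blocks. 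Assumption~\ref{assu:rho-large} is exactly what makes these coefficients nonnegative. The remaining task is the bookkeeping trick standard in nonconvex ADMM: bound the dual-increase term $\tfrac{1}{\rho_f}\norm{\Delta\eta^f}_2^2$ by the primal-change term using the dual update relation and the optimality condition of the $z$-subproblem — i.e., since $\Phi_f^\trans\eta^{f,(i+1)}$ is tied (through the stationarity condition of \eqref{equ:batch-sub-fu}) to $\nabla_{f}\mathcal{L}^{\mathrm{NSGP}}$, which is $L_z$-Lipschitz, one gets $\norm{\Delta\eta^f}_2^2 \lesssim L_z^2\,\sigma_{\min}^{-2}(\Phi_f)\norm{\Delta f}_2^2$ (requiring $\Phi_f$ invertible, so $\lambda_{\min}(\Phi_f)>0$, which I would state as implicitly required by Assumption~\ref{assu:rho-large}). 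Combining, the net change of $\mathcal{L}$ over a full iteration is bounded above by a nonpositive quadratic form in the primal increments, establishing monotonicity.

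For the lower-bound part, I would first use the dual update to rewrite $\mathcal{L}$ in a ``completed-square'' form, replacing the cross term $(\eta^f)^\trans(\Phi_f f_{1:T}-v^f_{1:T})$ and penalty by $\tfrac{\rho_f}{2}\norm{\Phi_f f_{1:T}-v^f_{1:T}+\eta^f/\rho_f}_2^2-\tfrac{1}{2\rho_f}\norm{\eta^f}_2^2$ and similarly for the other blocks. The $\tfrac{1}{2\rho}\norm{\eta}_2^2$ terms are handled by again using the $z$-optimality condition to express $\eta^{f}$ in terms of $\nabla_f\mathcal{L}^{\mathrm{NSGP}}$ evaluated at the current $f_{1:T}$, which is bounded on the compact set $G$ (Assumption~\ref{assu:f-in-compact}) and uniformly bounded in $u^\ell_{1:T},u^\sigma_{1:T}$ by Assumption~\ref{assu:g} (same argument as in the proof of Lemma~\ref{lemma:lip-cont}). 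The remaining terms $\norm{f_{1:T}-y_{1:T}}^2_{\cu{R}}+\norm{f_{1:T}}^2_{\cu{C}_f}+\log\abs{2\pi\cu{C}_f}+\lambda_f\norm{v^f_{1:T}}_1+\cdots$ are bounded below: the quadratic norms are nonnegative, the $L^1$ terms are nonnegative, and the log-determinant $\log\abs{2\pi\cu{C}_f}$ is bounded below because Assumption~\ref{assu:eig-of-C} gives $\lambda_{\min}(\cu{C}_f)\geq c>0$ (and boundedness of entries from Assumption~\ref{assu:g} gives an upper bound on $\lambda_{\max}$, hence a two-sided bound on the log-det); the log-dets of $\cu{C}_\ell,\cu{C}_\sigma$ are constants. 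Putting these together yields a finite lower bound uniform in $i$.

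\textbf{Main obstacle.} The delicate step is controlling the dual-variable increment $\norm{\Delta\eta^{f}}_2^2$ (and the $\ell,\sigma$ analogues) by the primal increments — this is the well-known sticking point in nonconvex ADMM, and it is precisely here that one needs $\Phi_f$ to have full column rank (so $\lambda_{\min}(\Phi_f)>0$) together with the Lipschitz gradient from Lemma~\ref{lemma:lip-cont}, and it is the reason Assumption~\ref{assu:rho-large} takes the form it does. A secondary subtlety is that $\mathcal{L}^{\mathrm{NSGP}}$ is only Lipschitz-differentiable on $\R^{2T}\times G$ (not globally), so I must keep the $f$-iterates in $G$ throughout — which is exactly what Assumption~\ref{assu:f-in-compact} (enforced via a constrained optimizer or floating-point range) guarantees, and I would note this explicitly when invoking the descent lemma.
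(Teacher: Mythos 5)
Your skeleton coincides with the paper's on three of the four ingredients. The $z$-step is handled exactly as in the paper's Appendix~\ref{appendix:proof-non-increasing}: expand the difference of Lagrangians, substitute the first-order condition of subproblem~\eqref{equ:batch-sub-fu}, invoke Lemma~\ref{lemma:lip-cont}, and use Assumption~\ref{assu:rho-large} to make the coefficients of the squared primal increments nonnegative. The $v$-step is exact minimization of a convex subproblem, as in the paper. Your ``completed-square'' lower bound is algebraically the same device as the paper's auxiliary Lemma~\ref{lemma:useless} combined with the log-determinant bound from Assumption~\ref{assu:eig-of-C}; the one difference is that the paper extracts the needed bound on $\eta^{(i)}$ from the $v$-subproblem's optimality condition ($\eta^{(i)}=\lambda\,\partial\norm{v^{(i)}}_1$, hence uniformly bounded by $\lambda$), which is more direct than your route through the $z$-stationarity condition and does not require $\Phi$ to be invertible.

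The genuine divergence, and also the gap in your sketch, is the dual update. You account for it as an increase of $\mathcal{L}$ by $\tfrac{1}{\rho}\norm{\Delta\eta}_2^2$ (which is the standard computation: from~\eqref{equ:batch-sub-eta}, $\Phi\,f^{(i+1)}-v^{f,(i+1)}=(\eta^{f,(i+1)}-\eta^{f,(i)})/\rho_f$, so the linear term in $\mathcal{L}$ changes by exactly $+\tfrac{1}{\rho_f}\norm{\Delta\eta^f}_2^2$), whereas the paper evaluates the same difference and reports it as a decrease, so that no absorption step is needed at all. Your proposed remedy --- bounding $\norm{\Delta\eta^f}_2^2$ by a constant times $L_z^2\,\lambda_{\mathrm{min}}^{-2}(\Phi_f)\norm{\Delta f_{1:T}}_2^2$ via the $z$-stationarity condition --- does not close under the stated hypotheses for two reasons. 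First, with the update order used here ($z$, then $v$, then $\eta$), the stationarity condition~\eqref{equ:optimal-cond} involves $v^{(i)}$ and $\eta^{(i)}$, so the identity you need, $\Phi^\trans\eta^{(i+1)}=-\nabla\mathcal{L}^{\mathrm{NSGP}}(z^{(i+1)})$, acquires a correction term $\rho\,\Phi^\trans\big(v^{(i+1)}-v^{(i)}\big)$; the clean identity holds only when the smooth block is updated last. Second, even granting it, the net per-iteration change becomes $\big(\tfrac{\rho}{2}\lambda_{\mathrm{min}}^2(\Phi)-\tfrac{L_z}{2}-\tfrac{C\,L_z^2}{\rho\,\lambda_{\mathrm{min}}^2(\Phi)}\big)\norm{\Delta z_{1:T}}_2^2$, and Assumption~\ref{assu:rho-large} is calibrated only to cancel the $L_z/2$ term: you would need a strictly stronger lower bound on $\rho$, plus full rank of $\Phi$, to absorb the additional dual contribution. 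As written, therefore, your monotonicity argument is not complete; you must either strengthen the penalty-parameter assumption and repair the order-of-updates issue, or justify the paper's direct treatment of the $\eta$-step.
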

\begin{proof}
	See, Appendix~\ref{appendix:proof-non-increasing}.
\end{proof}

With help of the above lemma, we now arrive at the main theorem.
\begin{theorem}
	\label{thm:convergence}
	Let Assumptions~\ref{assu:g}--\ref{assu:rho-large} hold and 
	suppose that subproblem~\eqref{equ:batch-sub-fu} has a stationary point, then the sequence $\Big\lbrace z_{1:T}^{(i)},\allowbreak v_{1:T}^{f,(i)}, v_{1:T}^{\ell,(i)}, v_{1:T}^{\sigma,(i)}\colon i=0,1,\ldots\Big\rbrace $ generated 
	by Algorithm~\ref{alg:batch-nd-gp-admm} converges to a local minimum.
\end{theorem}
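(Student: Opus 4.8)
The plan is to run the by-now-standard argument for convergence of nonconvex ADMM, with Lemmas~\ref{lemma:lip-cont} and~\ref{lemma:non-increasing} supplying the two nontrivial ingredients. First, since Lemma~\ref{lemma:non-increasing} gives that the augmented-Lagrangian values $\mathcal{L}^{(i)}\coloneqq\mathcal{L}\big(z_{1:T}^{(i)},v_{1:T}^{f,(i)},v_{1:T}^{\ell,(i)},v_{1:T}^{\sigma,(i)},\eta_{1:T}^{f,(i)},\eta_{1:T}^{\ell,(i)},\eta_{1:T}^{\sigma,(i)}\big)$ form a non-increasing sequence that is bounded below, they converge, hence the decrements $\mathcal{L}^{(i)}-\mathcal{L}^{(i+1)}$ are summable and in particular tend to zero. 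Reading off the proof of Lemma~\ref{lemma:non-increasing} in Appendix~\ref{appendix:proof-non-increasing}, each decrement dominates a nonnegative combination — its coefficients being nonnegative exactly by Assumption~\ref{assu:rho-large} — of $\norm{z_{1:T}^{(i+1)}-z_{1:T}^{(i)}}_2^2$ and of the squared primal increments $\norm{v_{1:T}^{f,(i+1)}-v_{1:T}^{f,(i)}}_2^2$, $\norm{v_{1:T}^{\ell,(i+1)}-v_{1:T}^{\ell,(i)}}_2^2$, $\norm{v_{1:T}^{\sigma,(i+1)}-v_{1:T}^{\sigma,(i)}}_2^2$. Summability then forces all of these successive differences to vanish as $i\to\infty$.

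Second, I would show that the dual increments vanish and that feasibility holds asymptotically. Because subproblem~\eqref{equ:batch-sub-fu} is smooth and, by hypothesis, has a stationary point, its first-order optimality condition holds at $z_{1:T}^{(i+1)}$; combining it with the dual update~\eqref{equ:batch-sub-eta} and stacking the three blocks gives, schematically, $\Phi^\trans\eta_{1:T}^{(i+1)}=-\nabla_{z_{1:T}}\mathcal{L}^{\mathrm{NSGP}}(z_{1:T}^{(i+1)})-\Phi^\trans P\big(v_{1:T}^{(i+1)}-v_{1:T}^{(i)}\big)$ for the block-diagonal $\Phi$ and penalty matrix $P$, so that $\Phi^\trans(\eta_{1:T}^{(i+1)}-\eta_{1:T}^{(i)})$ is controlled by the gradient difference $\nabla_{z_{1:T}}\mathcal{L}^{\mathrm{NSGP}}(z_{1:T}^{(i+1)})-\nabla_{z_{1:T}}\mathcal{L}^{\mathrm{NSGP}}(z_{1:T}^{(i)})$ and by two consecutive $v$-increments. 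Invoking the Lipschitz estimate of Lemma~\ref{lemma:lip-cont} (equivalently, the bounded-Hessian property established in its proof), the vanishing increments from the first step, and the invertibility of $\Phi_f,\Phi_\ell,\Phi_\sigma$ implicit in Assumption~\ref{assu:rho-large} (its inequalities require $\lambda_{\mathrm{min}}(\Phi_\bullet)>0$, so $\Phi^\trans$ is injective), one obtains $\eta_{1:T}^{f,(i+1)}-\eta_{1:T}^{f,(i)}\to0$ and likewise for the $\ell$- and $\sigma$-blocks; the dual updates~\eqref{equ:batch-sub-eta} then read off $\Phi_f\,f_{1:T}^{(i)}-v_{1:T}^{f,(i)}\to0$, $\Phi_\ell\,u_{1:T}^{\ell,(i)}-v_{1:T}^{\ell,(i)}\to0$, and $\Phi_\sigma\,u_{1:T}^{\sigma,(i)}-v_{1:T}^{\sigma,(i)}\to0$, i.e. asymptotic primal feasibility.

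Third, I would establish boundedness of the iterates and extract a convergent subsequence. The first-order optimality conditions of the $v$-subproblems~\eqref{equ:batch-sub-v} together with the dual updates~\eqref{equ:batch-sub-eta} give $\eta_{1:T}^{f,(i)}\in\lambda_f\,\partial\norm{v_{1:T}^{f,(i)}}_1$ for $i\geq1$ (and analogously for $\ell,\sigma$), so the multipliers lie in a fixed bounded set. Using $\mathcal{L}^{(i)}\leq\mathcal{L}^{(0)}$, completing the square on each linear-plus-quadratic penalty block, the lower bound $\log\abs{2\,\pi\,\cu{C}_f}\geq T\log(2\,\pi\,c)$ furnished by Assumption~\ref{assu:eig-of-C}, the constancy of $\log\abs{2\,\pi\,\cu{C}_\ell}$ and $\log\abs{2\,\pi\,\cu{C}_\sigma}$, and the coercivity of the fixed positive-definite quadratic forms $\norm{u^\ell_{1:T}}_{\cu{C}_\ell}^2$ and $\norm{u^\sigma_{1:T}}_{\cu{C}_\sigma}^2$, one deduces that $u_{1:T}^{\ell,(i)}$, $u_{1:T}^{\sigma,(i)}$ (hence $z_{1:T}^{(i)}$, using also the compactness of $G$ from Assumption~\ref{assu:f-in-compact}) and the $v$-iterates are all bounded. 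Bolzano--Weierstrass then yields a subsequence converging to some $\big(z_{1:T}^{\ast},v_{1:T}^{f,\ast},v_{1:T}^{\ell,\ast},v_{1:T}^{\sigma,\ast},\eta_{1:T}^{f,\ast},\eta_{1:T}^{\ell,\ast},\eta_{1:T}^{\sigma,\ast}\big)$.

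Finally, I would pass to the limit along this subsequence in the three optimality systems. Continuity of $\nabla_{z_{1:T}}\mathcal{L}^{\mathrm{NSGP}}$ (Assumption~\ref{assu:g} and the analyticity of $K_\nu$ exploited in Lemma~\ref{lemma:lip-cont}), together with the vanishing of the $z$- and $v$-increments, turns the $z$-subproblem stationarity into $\nabla_{z_{1:T}}\mathcal{L}^{\mathrm{NSGP}}(z_{1:T}^{\ast})+\Phi^\trans\eta^{\ast}=0$ for the stacked $\Phi$ and $\eta^{\ast}$; outer semicontinuity of the subdifferential of the $L^1$-norm turns the $v$-subproblem stationarity into $\eta_{1:T}^{f,\ast}\in\lambda_f\,\partial\norm{v_{1:T}^{f,\ast}}_1$ and analogously; and the feasibility limit gives $v_{1:T}^{f,\ast}=\Phi_f\,f_{1:T}^{\ast}$, etc. Eliminating $v^{\ast}$ and $\eta^{\ast}$ via the subdifferential chain rule then shows $0\in\nabla_{z_{1:T}}\mathcal{L}^{\mathrm{NSGP}}(z_{1:T}^{\ast})+\partial\mathcal{L}^{\mathrm{REG}}(z_{1:T}^{\ast})$, i.e. $z_{1:T}^{\ast}$ is a KKT point of~\eqref{eq:admm}; since $\mathcal{L}^{(i)}$ decreases monotonically to the objective value attained at the (feasible) limit and the $z$-subproblems are solved to minimizing stationary points, this KKT point is a \emph{local} minimum of $\mathcal{L}^{\mathrm{NSGP}}+\mathcal{L}^{\mathrm{REG}}$, and, adding a standard regularity argument for nonconvex descent methods (every accumulation point of the bounded iterate sequence is such a KKT point and the accumulation set is connected because the increments vanish, so convergence of the whole sequence follows once the KKT points are locally isolated), the iterates converge to it. The main obstacle is the second step: making the dual increments vanish is precisely where nonconvexity bites, and it forces one to combine Assumption~\ref{assu:rho-large} on the penalty sizes, the Lipschitz-gradient bound of Lemma~\ref{lemma:lip-cont}, and the invertibility of the $\Phi_\bullet$ all at once; a secondary difficulty is the upgrade from ``KKT/stationary point'' to ``local minimum'', which in the genuinely nonconvex $\mathcal{L}^{\mathrm{NSGP}}$ part rests on the hypothesis that subproblem~\eqref{equ:batch-sub-fu} is solved to a minimizing stationary point rather than an arbitrary one, and, for whole-sequence convergence, on the limit being isolated among the KKT points.
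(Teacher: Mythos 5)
Your proposal is correct in outline and rests on the same two pillars as the paper's own proof --- the descent--plus--lower--bound property of the augmented Lagrangian from Lemma~\ref{lemma:non-increasing} and the Lipschitz-gradient estimate of Lemma~\ref{lemma:lip-cont} --- but it is substantially more complete than what the paper actually writes down. The paper's proof is two sentences: it cites Lemma~\ref{lemma:non-increasing} for monotonicity and boundedness of the Lagrangian values, notes that the $v$-subproblems~\eqref{equ:batch-sub-v} are convex, and then ``deduces'' convergence to a local minimum. It never shows that successive iterate differences vanish, never controls the dual increments, never establishes boundedness of the iterates or extracts a convergent subsequence, and never passes to the limit in the optimality systems --- all of which are needed to get from ``the Lagrangian values converge'' to ``the iterates converge to a local minimum.'' Your steps two through four supply exactly this missing machinery, and your observation that Assumption~\ref{assu:rho-large} forces $\lambda_{\mathrm{min}}(\Phi_\bullet)>0$ (hence injectivity of $\Phi_\bullet^\trans$) is the key to the dual-increment control that the paper skips entirely. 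The one place where your argument, like the paper's, remains genuinely incomplete is the final upgrade: vanishing increments plus a connected accumulation set yield whole-sequence convergence only under an additional isolation or Kurdyka--{\L}ojasiewicz-type hypothesis, and the passage from ``KKT point'' to ``local minimum'' of the nonconvex objective needs more than stationarity of subproblem~\eqref{equ:batch-sub-fu}; you flag both caveats explicitly, which the paper does not, so your write-up is an honest strengthening rather than a divergence from the intended argument.
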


\begin{proof}
Lemma~\ref{lemma:non-increasing} states that the Lagrangian function $\mathcal{L}\Big(z_{1:T}^{(i)}, v_{1:T}^{f,(i)} ,v_{1:T}^{\ell,(i)},v_{1:T}^{\sigma,(i)},\eta_{1:T}^{f,(i)},\eta_{1:T}^{\ell,(i)},\eta_{1:T}^{\sigma,(i)}\Big)$ is non-increasing and lower bounded for $i=0,1,\ldots$. Also, the subproblem of $v_{1:T}$ in Equation~\eqref{equ:batch-sub-v} is convex and hence has a local minimum~\citep{boyd2004Convex,Nesterov2018optimization}. We deduce that the iterative sequence generated by Algorithm~\ref{alg:batch-nd-gp-admm} converges to a local minimum.
\end{proof}

\section{Experiments}
\label{sec:experiment}
In this section, we numerically demonstrate the regularized NSGPs (R-NSGPs) on a synthetic model as well as on a real dataset\footnote{Companion codes are attached in the supplementary material for the reviewers and will be publicly available upon acceptance.}. Both the batch and state-space constructions (R-SS-NSGP) of NSGPs are present. For the experiments in the sequel we particularly consider sparsity regularization on parameters $u^\ell(t)$ and $u^\sigma(t)$, and the regularization matrices $\Phi_\ell$, $\Phi_\sigma$, $\cu{\Psi}_\ell$, and $\cu{\Psi}_\sigma$ are set to be identity matrices. Regularization on $f(t)$ is not used. Since the proposed R-NSGP is an extension of NSGPs, we are going to compare the R-NSGP with conventional GPs and NSGPs. The fully independent conditional (FIC) and deterministic training conditional (DTC) sparse GPs (SGPs) are also compared as reference methods~\citep{quinonero2005unifying, Csato2002}. The uncertainty quantification of R-NSGP and R-SS-NSGP from Algorithm~\ref{alg:batch-nd-gp-admm} is given by using the marginal approach from Section~\ref{sec-map-uncertainty}.

\subsection{Synthetic rectangular signal}
\label{sec:exp-rect}
Let us consider a noisily measured rectangular signal
\begin{equation}
	\begin{split}
		f(t) &= \begin{cases}
		0, & t\in [0, \frac{1}{3}), \\
		1, & t\in [\frac{1}{3}, \frac{2}{3}), \\
		0.5, & t\in [\frac{2}{3}, 1),
	\end{cases} \\
	y(t_k) &= f(t_k) + r_k, \quad r_k \sim \mathcal{N}(0, 0.002).
	\end{split}
	\label{equ:exp-rect}
\end{equation}

\begin{table*}[t!]
	\centering
	\begin{tabular}{@{}llll@{}}
		\toprule
		\multirow{2}{*}{} & without uncertainty & \multicolumn{2}{l}{with uncertainty} \\ \cmidrule(l){2-4} 
		& RMSE $\times 10^{-2}$ & RMSE $\times 10^{-2}$ & NLPD \\ \midrule
		GP & $3.99 \pm 0.25$ & $3.99 \pm 0.25$ & $-139.84 \pm 6.44$ \\
		SGP FIC & $8.34 \pm 0.28$ & $8.34 \pm 0.28$ & $-119.00 \pm 5.15$ \\
		SGP DTC & $3.47 \pm 0.28$ & $3.47 \pm 0.28$ & $-133.52 \pm 4.66$ \\ \midrule
		NSGP & $4.43 \pm 0.31$ & $4.43 \pm 0.31$ &  $-134.90 \pm 7.05$\\
		R-NSGP GD & $3.69 \pm 0.28$ & $3.69 \pm 0.28$ & $-141.05 \pm 8.96$ \\
		R-NSGP ADMM & $3.56 \pm 0.32$ & $\textbf{3.56} \pm 0.32$ & $\mathbf{-143.17} \pm 8.79$ \\ \midrule
		SS-NSGP & $3.79 \pm 0.26$ & $4.09 \pm 0.25$ & $-136.68 \pm 6.58$ \\
		R-SS-NSGP GD & $1.67 \pm 0.28$ & $6.24 \pm 0.16$ & $-116.05 \pm 10.41$ \\
		R-SS-NSGP ADMM & $\mathbf{1.63} \pm 0.25$ & $6.25 \pm 0.16$ & $-116.14 \pm 10.46$ \\ \bottomrule
	\end{tabular}
	\caption{RMSE and NLPD results with mean and standard deviation on the regression of Equation~\eqref{equ:exp-rect}.}
	\label{tbl:rect-result}
\end{table*}

\begin{figure}[t!]
	\centering
	\includegraphics[width=.8\linewidth]{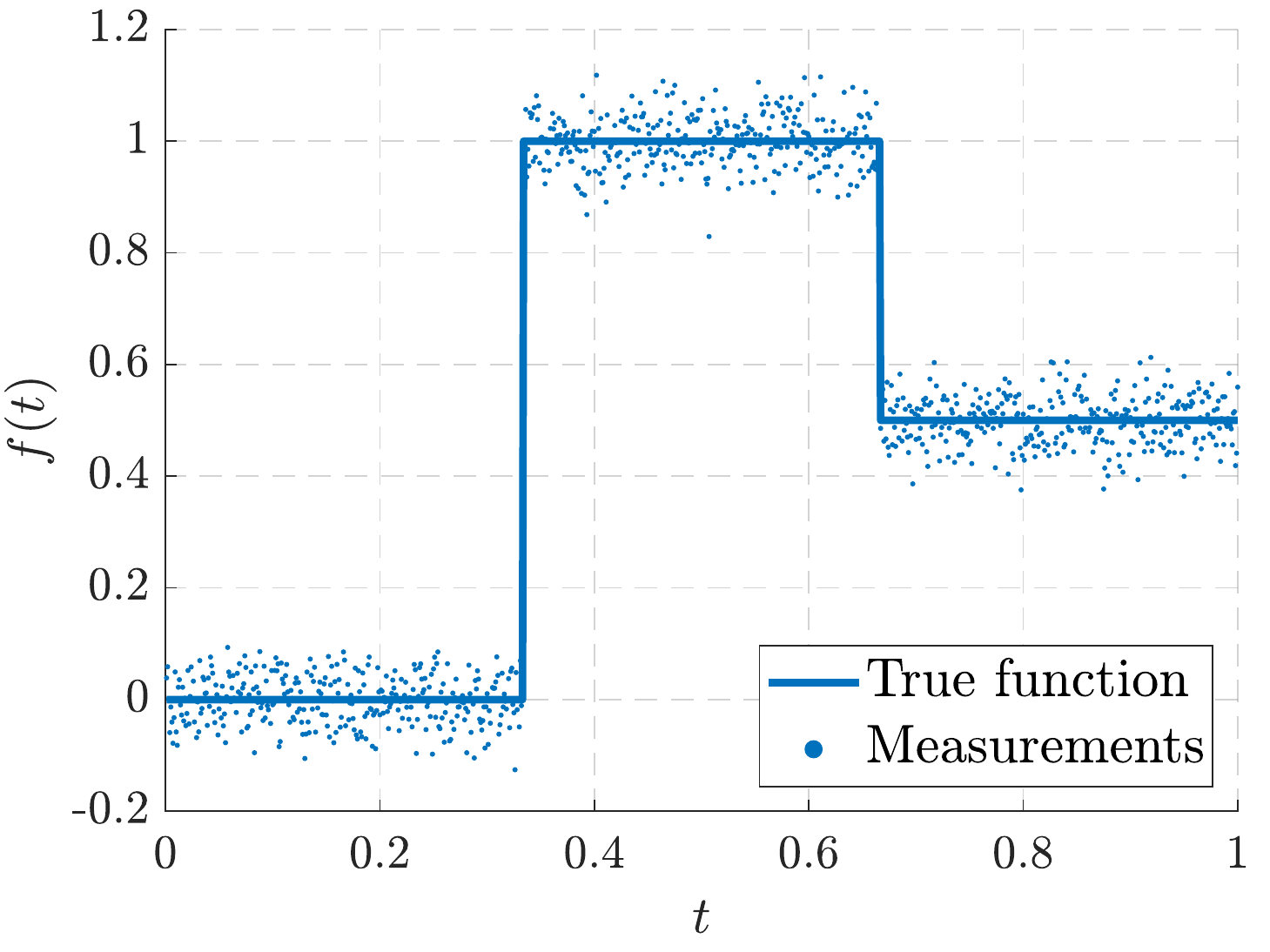}
	\caption{Demonstration of the rectangular signal in Equation~\eqref{equ:exp-rect}.}\label{fig:rect-y} 
\end{figure}

A realization of this model is shown in Figure~\ref{fig:rect-y}. Estimating $f(t)$ from measurements is found to be challenging for conventional GPs and NSGPs because the function is flat almost everywhere and is discontinuous at $t=1/3$ and $t=2/3$. The jumps at $t=1/3$ and $t=2/3$ also have different vertical magnitudes. In this case, GPs require large length-scale $\ell$ and small magnitude $\sigma$ to deal with the flatness. On the other hand, large $\ell$ and small $\sigma$ do not cope well with the discontinuous jumps.

However, we can use R-NSGPs to force the length-scale and magnitude to stay mostly flat in some reasonable baseline levels and then let them jump quickly in some small neighborhoods around the discontinuity points. Consequently, we choose the transformation functions as
\begin{equation}
	\begin{split}
		\ell(t) &= \exp(u^\ell(t) + b^\ell), \\
		\sigma(t) &= \exp(u^\sigma(t) + b^\sigma),\nonumber
	\end{split}
\end{equation}
where $b^\ell = 2$, $b^\sigma = -1$ define the baseline levels. For example, if $u^\ell(t)$ is sparse then $\ell(t)$ is flat around $\exp(2)$. As for the regularization parameters we choose $\lambda_\ell = \lambda_\sigma = 18$ and $\lambda_\ell = \lambda_\sigma = 8$ for the R-NSGP and R-SS-NSGP, respectively. 

\begin{figure*}[t!]
	\centering
	\includegraphics[width=.328\linewidth]{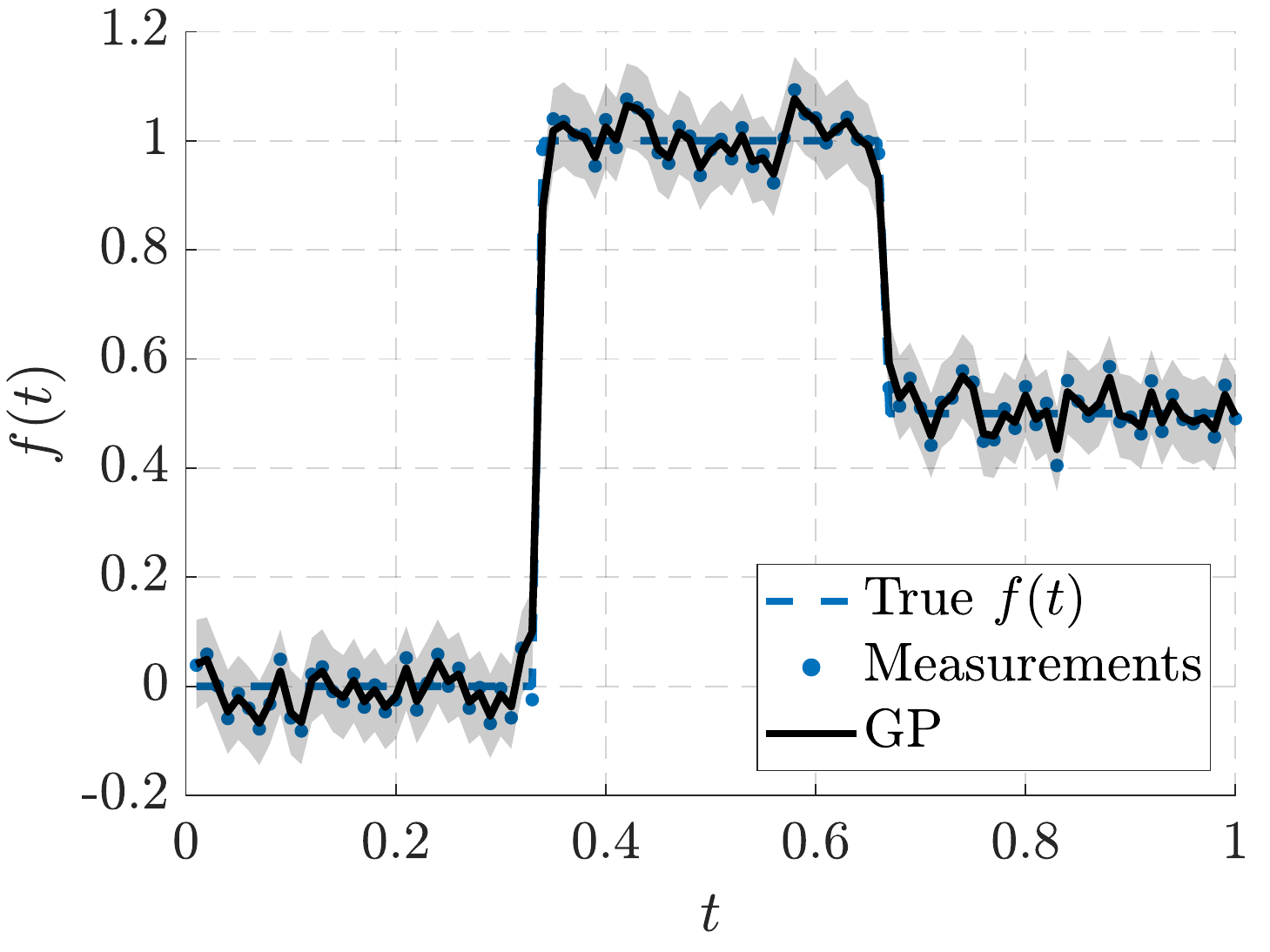}
	\includegraphics[width=.328\linewidth]{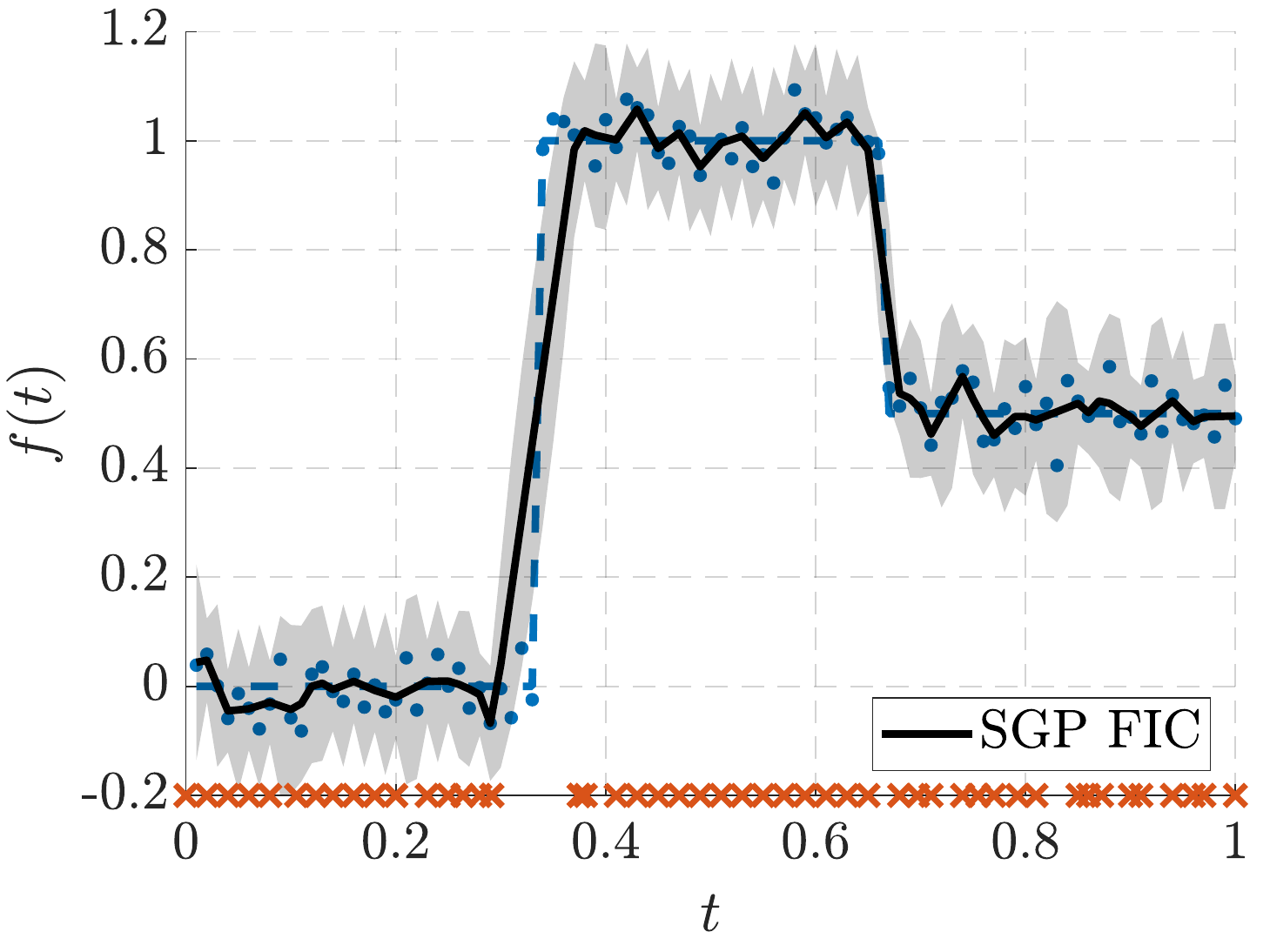}
	\includegraphics[width=.328\linewidth]{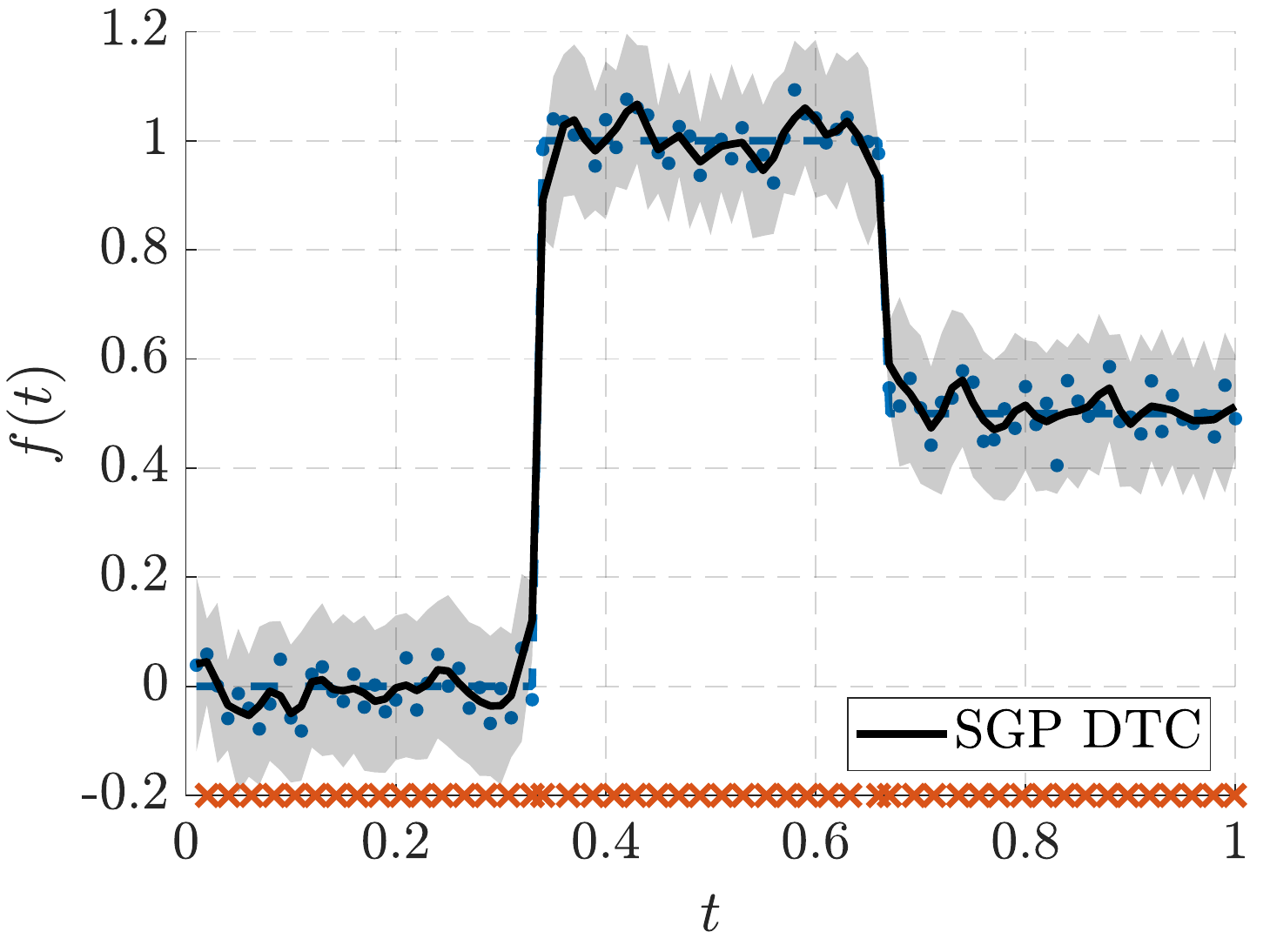}
	\caption{Demonstration of GP, SGP FIC, and SGP DTC on the rectangular signal in Equation~\eqref{equ:exp-rect}. The shaded area stands for 0.95 confidence. The red crosses indicate the positions of learnt 50 inducing points. }
	\label{fig:rect-gp-sgp}
\end{figure*}
\begin{figure*}[t!]
	\centering
	\includegraphics[width=.328\linewidth]{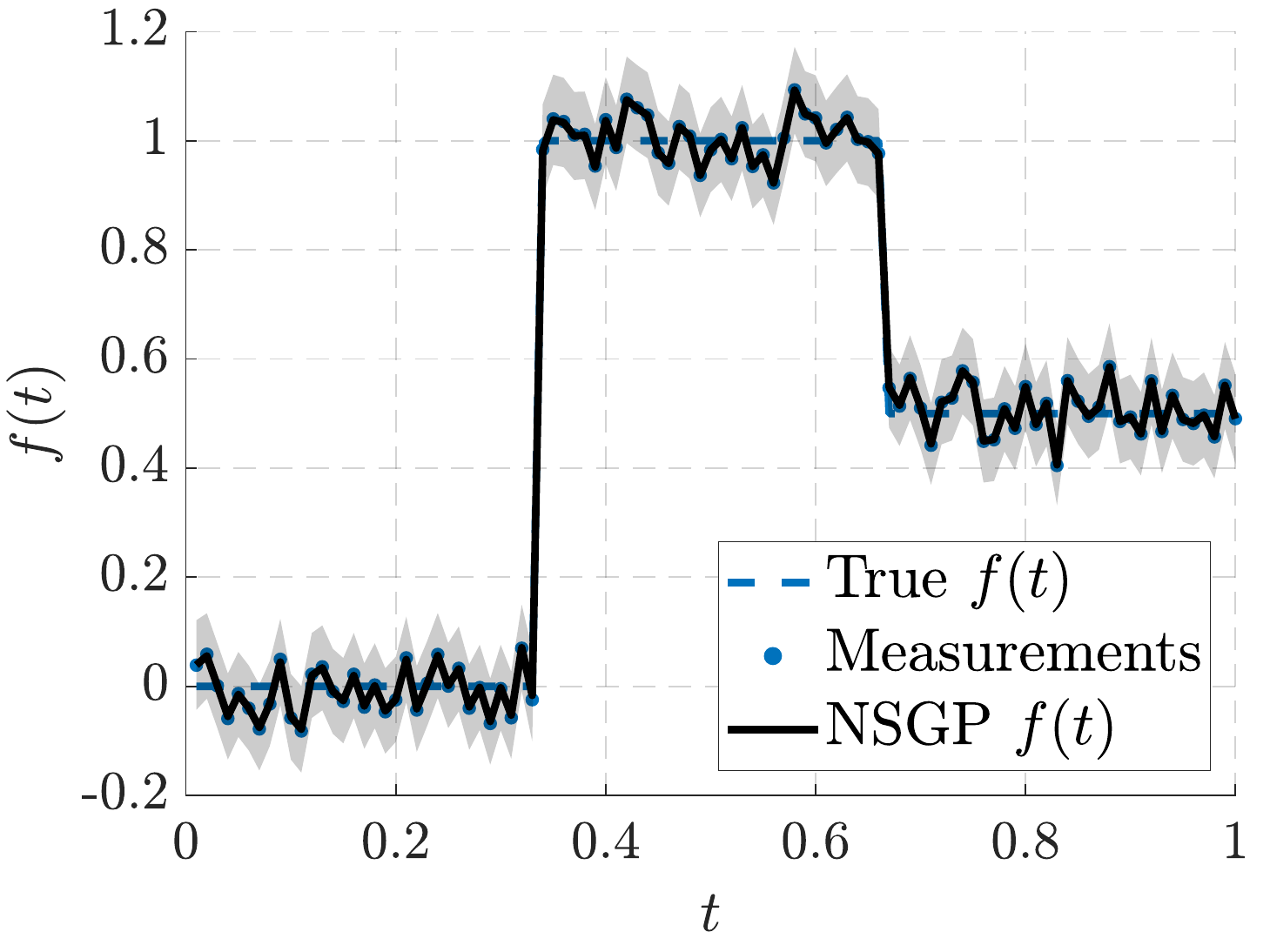}
	\includegraphics[width=.328\linewidth]{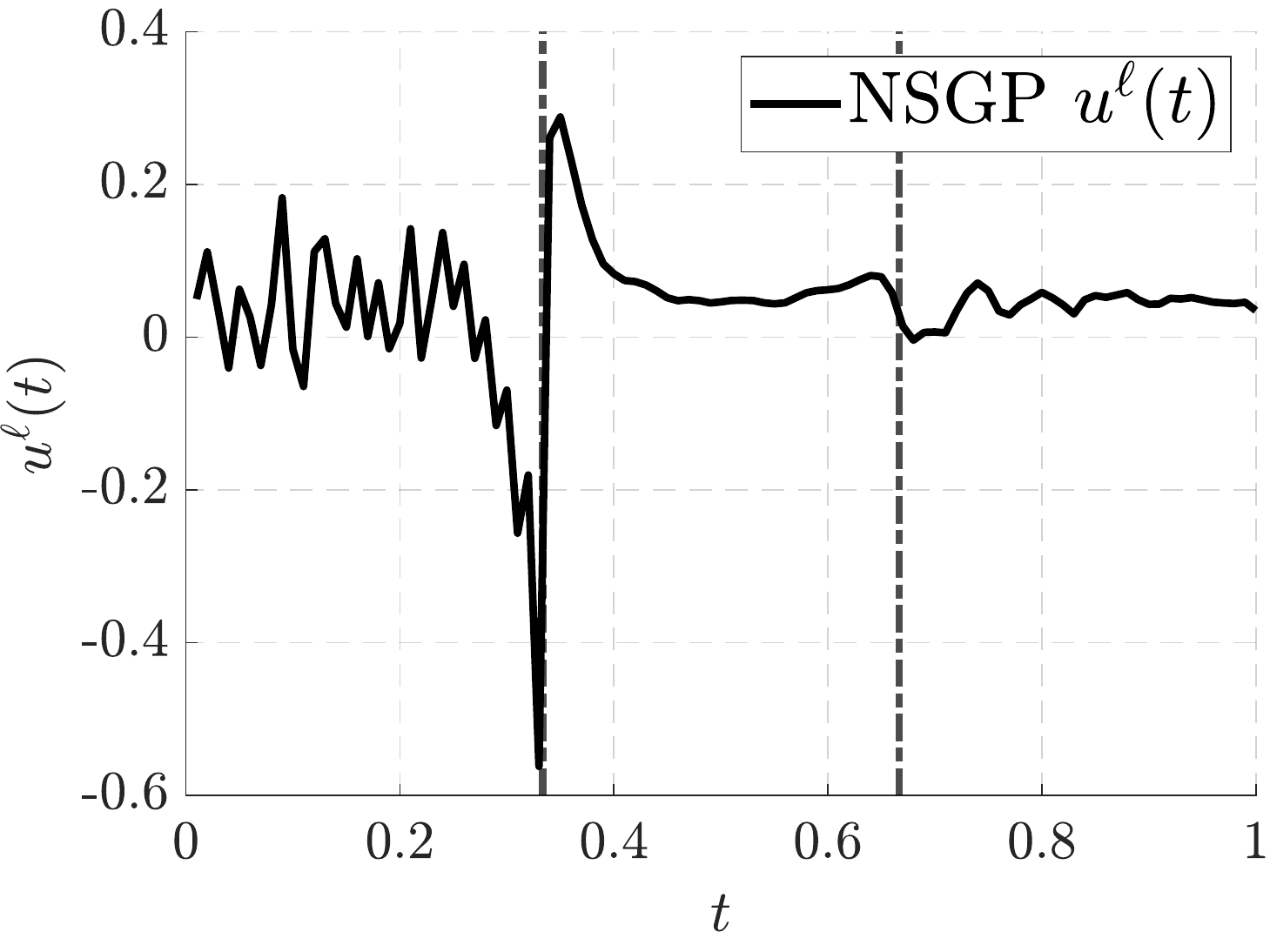}
	\includegraphics[width=.328\linewidth]{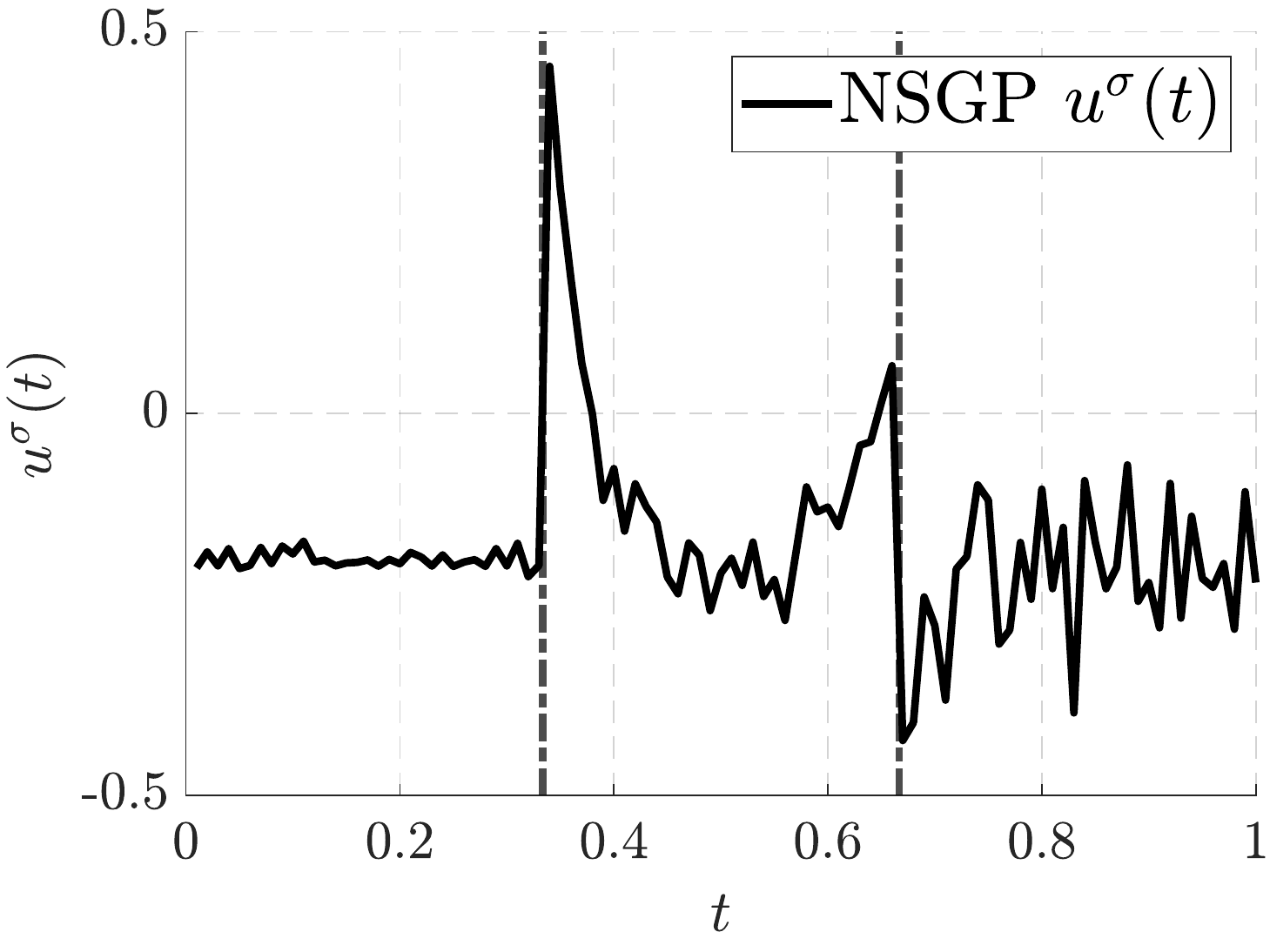}
	\caption{Demonstration of NSGP on the rectangular signal in Equation~\eqref{equ:exp-rect}. The shaded area stands for 0.95 confidence.}
	\label{fig:rect-nsgp}
\end{figure*}
\begin{figure*}[t!]
	\centering
	\includegraphics[width=.245\linewidth]{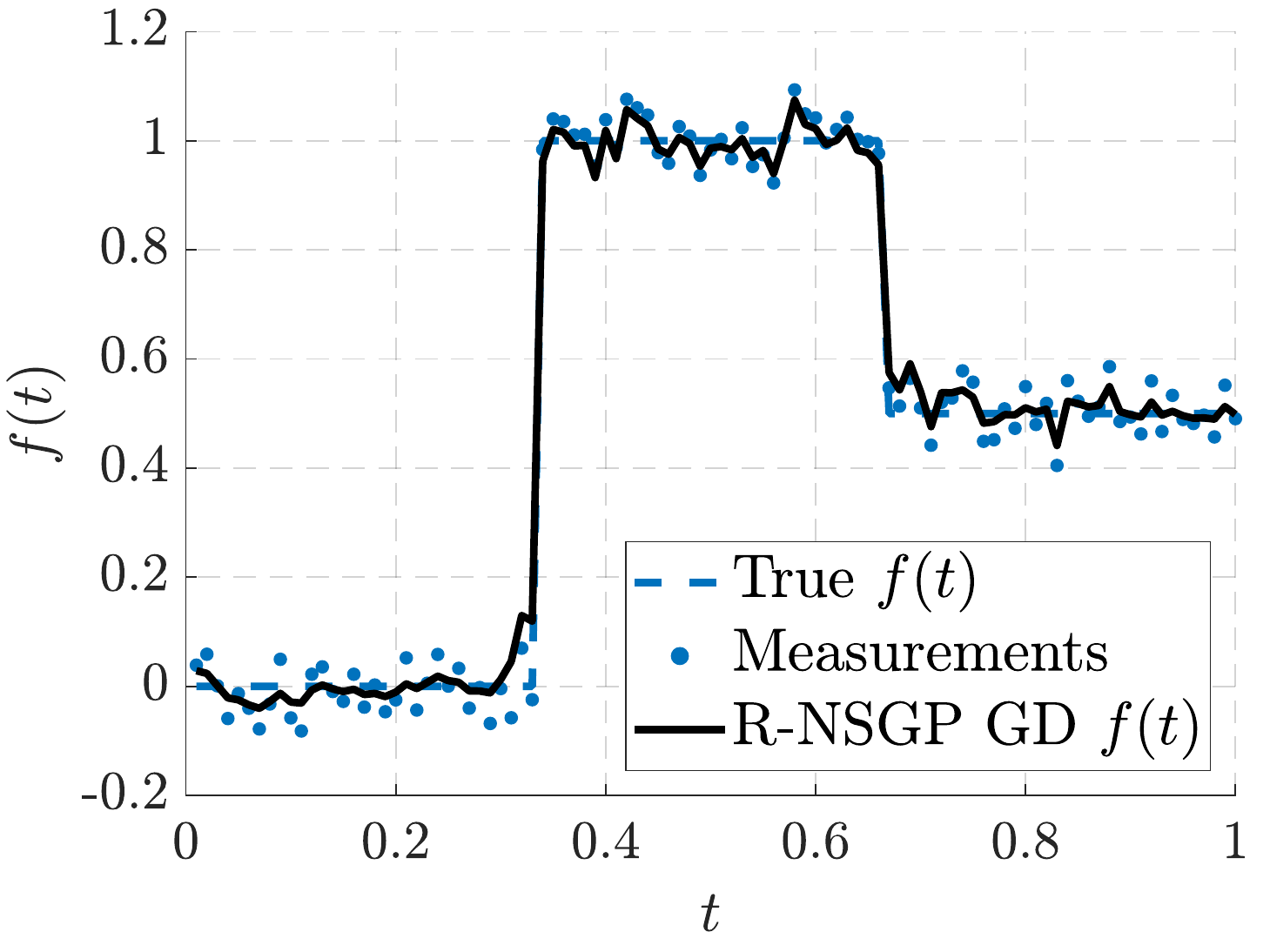}
	\includegraphics[width=.245\linewidth]{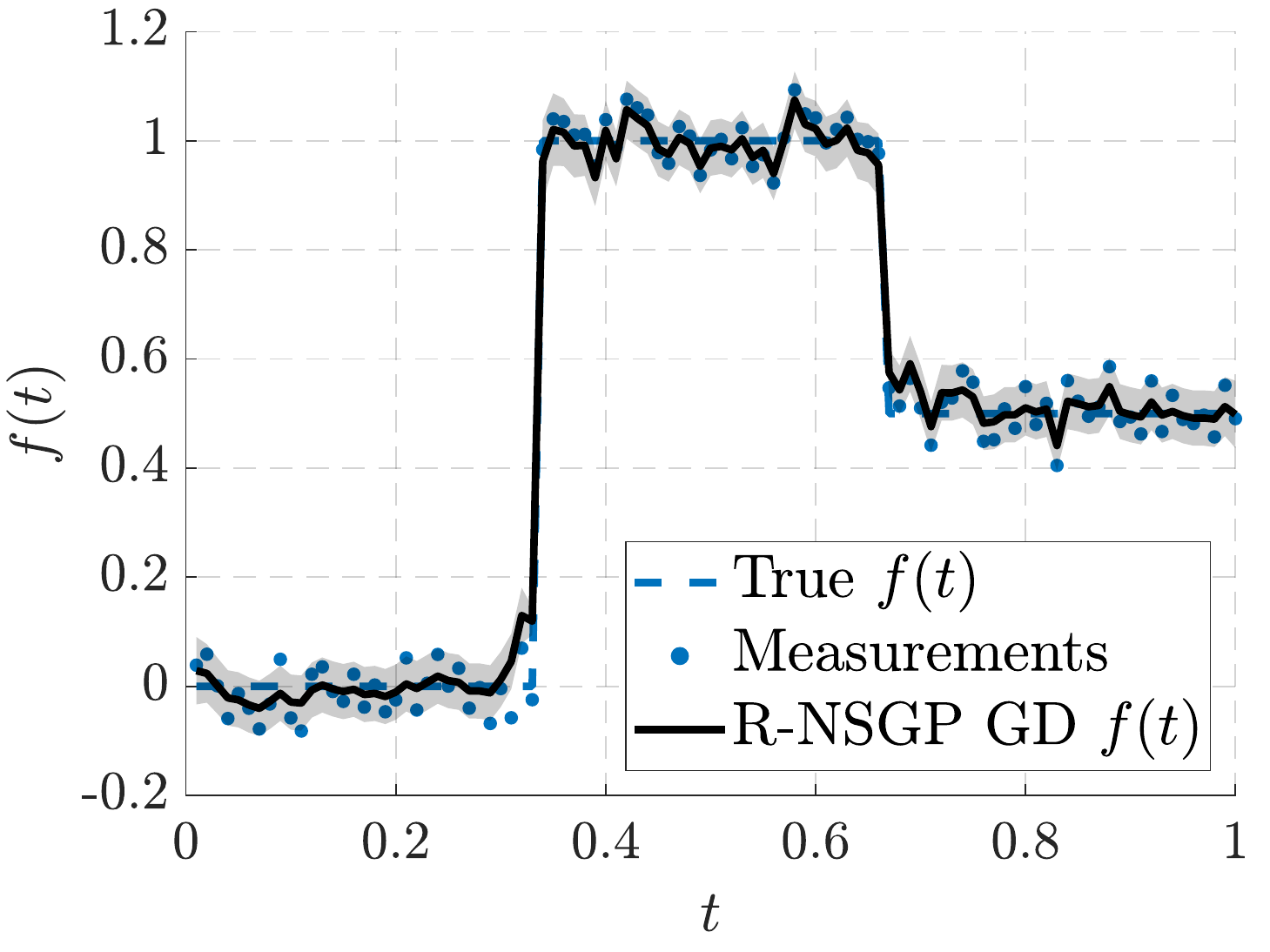}
	\includegraphics[width=.245\linewidth]{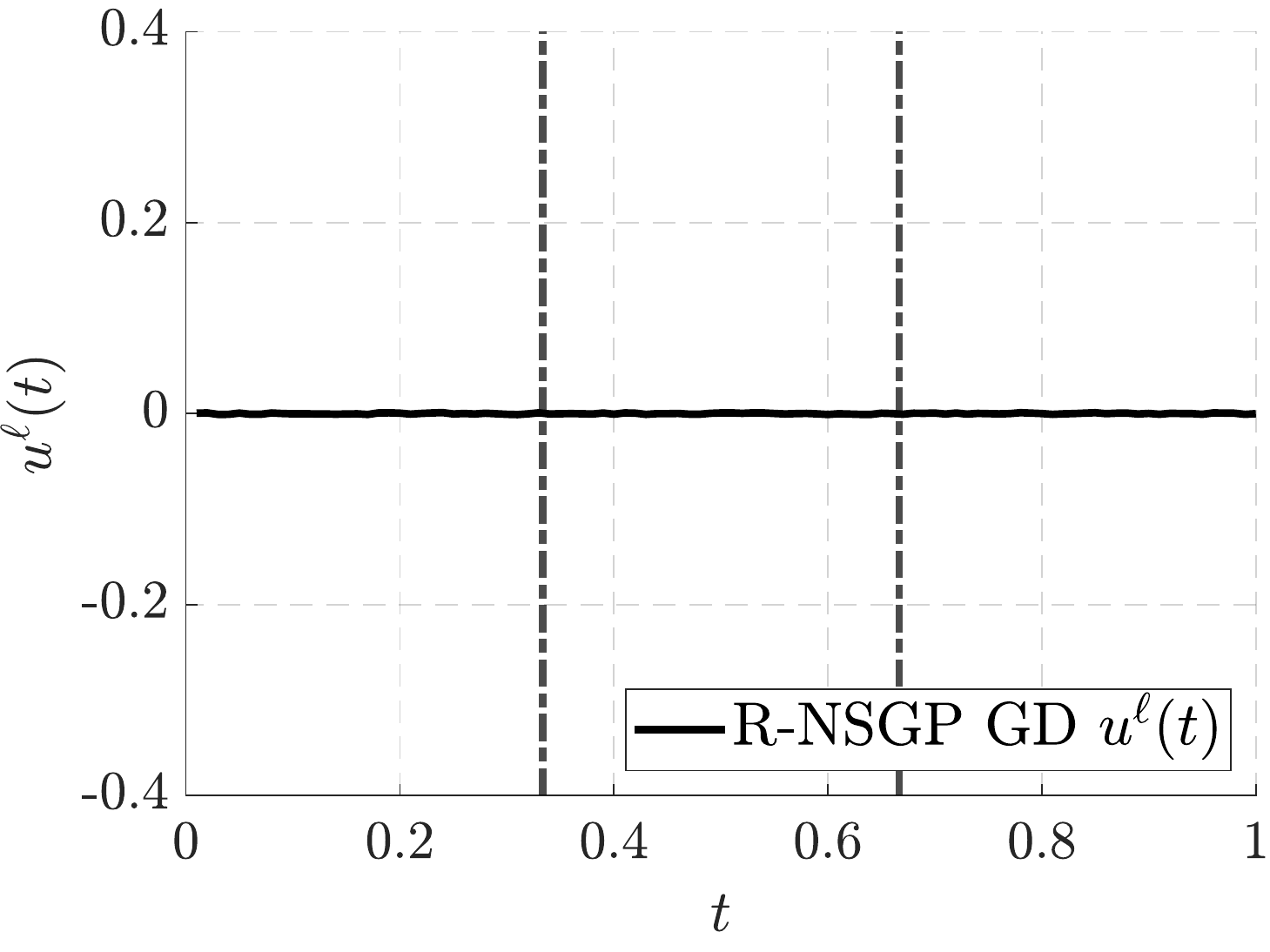}
	\includegraphics[width=.245\linewidth]{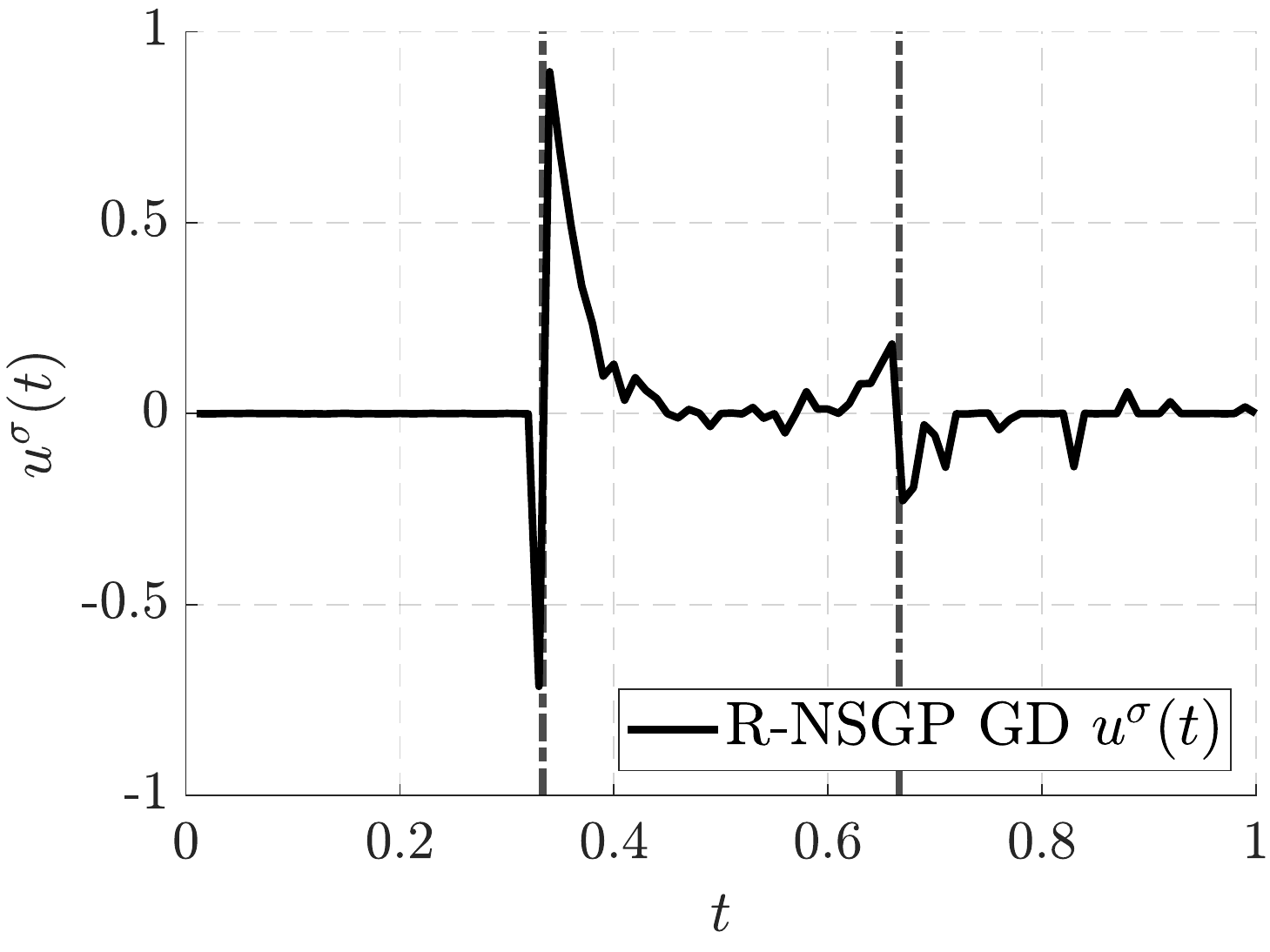}\\
	\includegraphics[width=.245\linewidth]{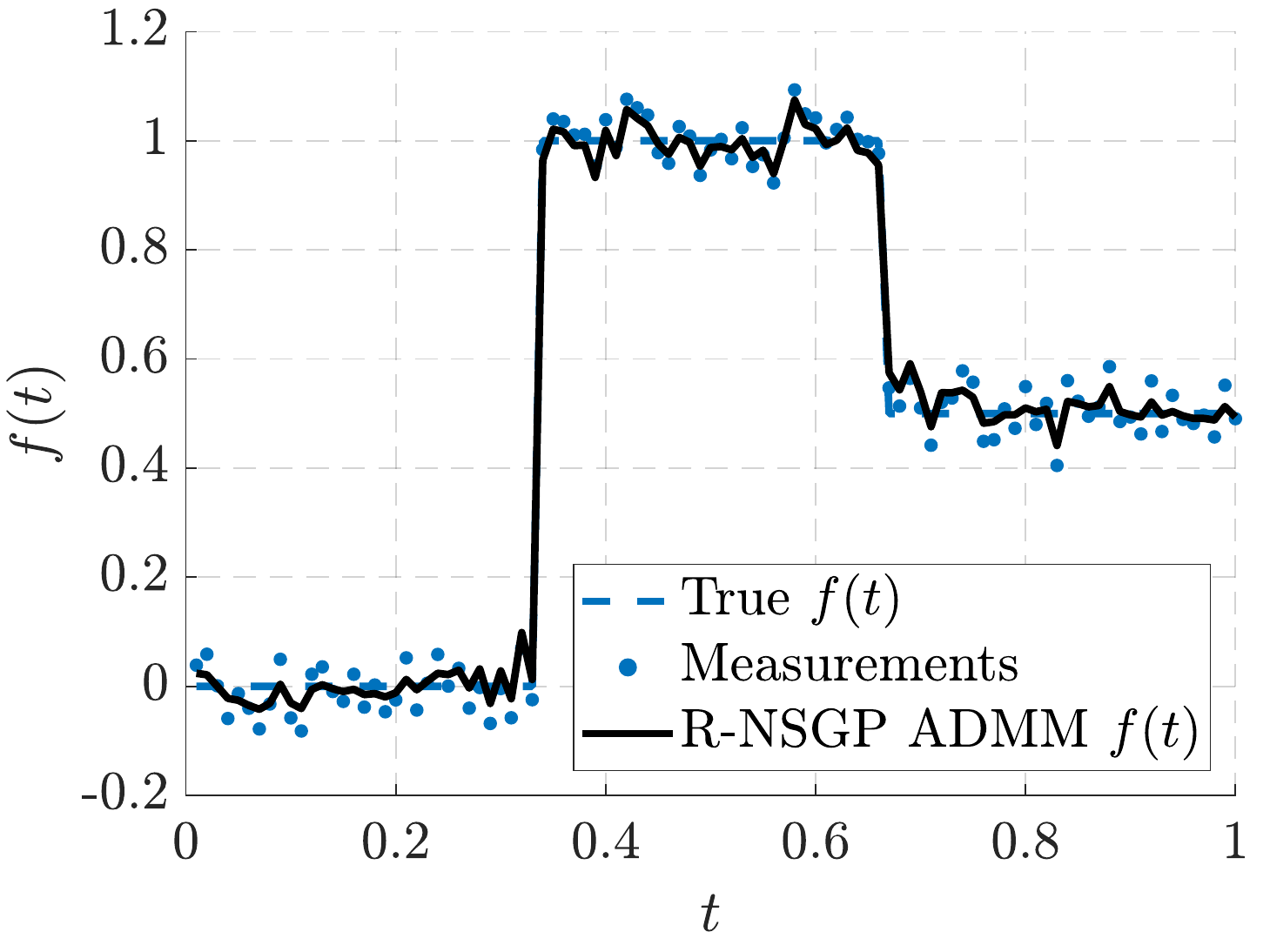}
	\includegraphics[width=.245\linewidth]{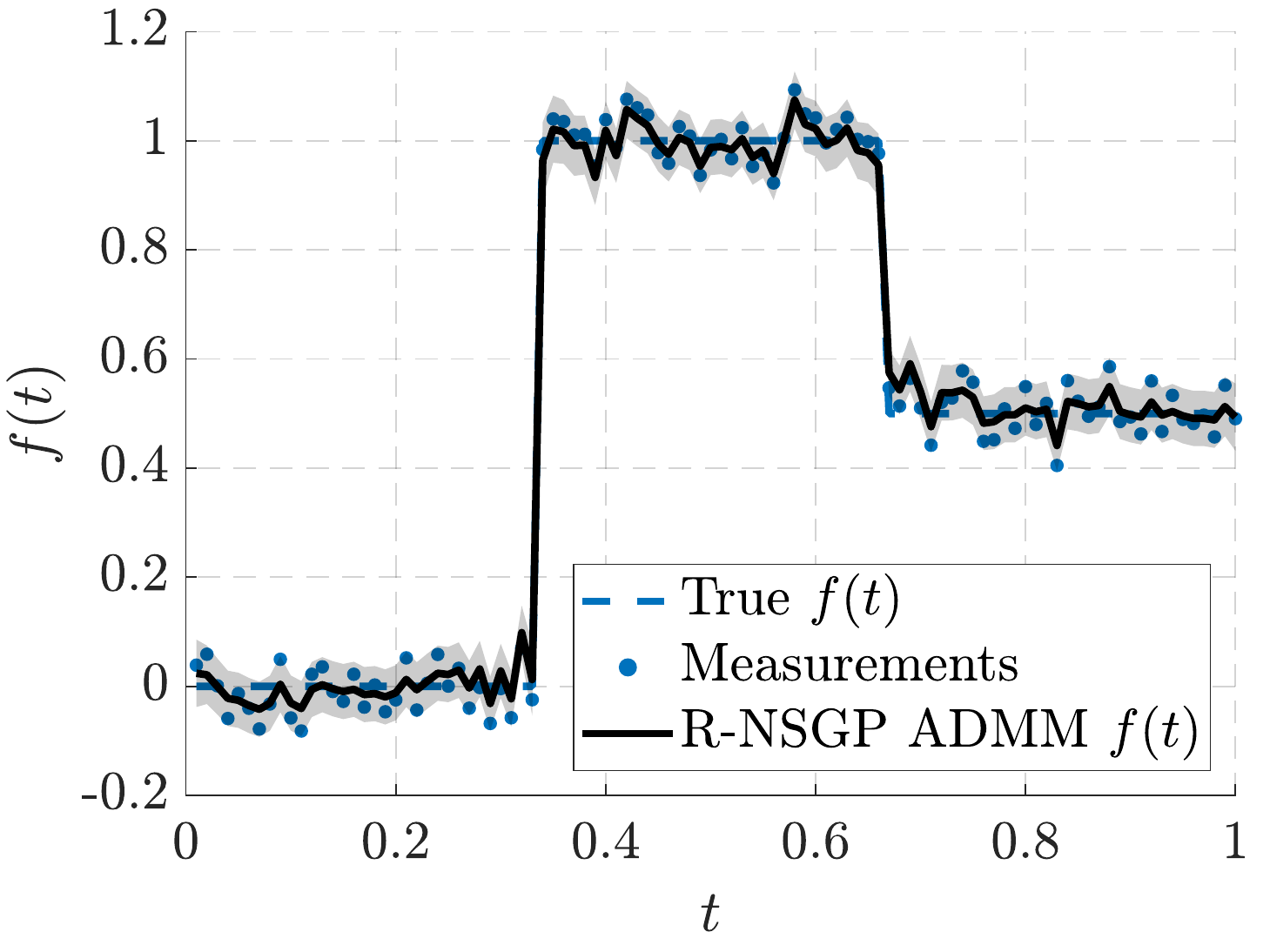}
	\includegraphics[width=.245\linewidth]{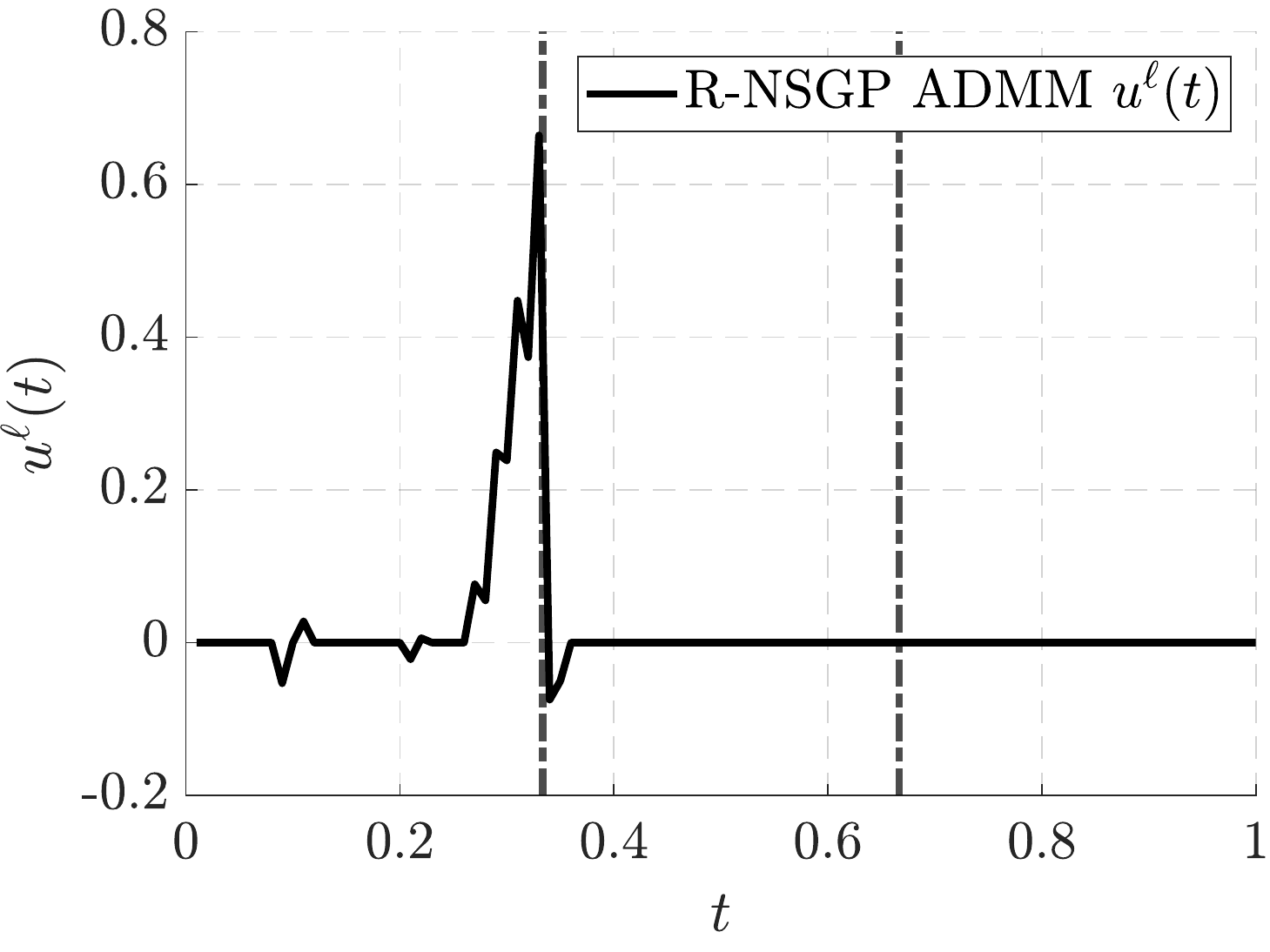}
	\includegraphics[width=.245\linewidth]{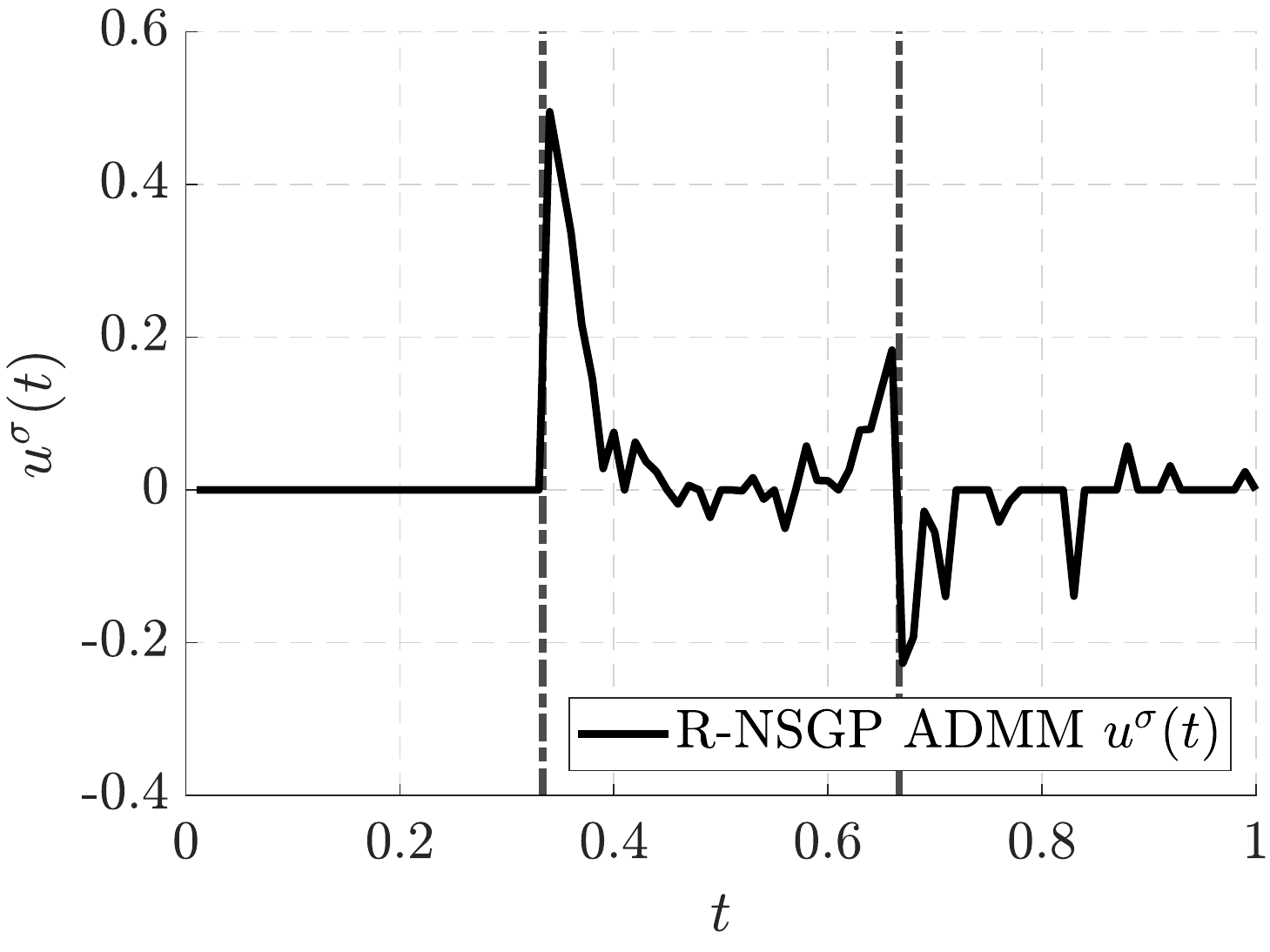}
	\caption{Demonstration of R-NSGP by using GD (first row) and ADMM (second row) on the rectangular signal in Equation~\eqref{equ:exp-rect}. The shaded area stands for 0.95 confidence.}
	\label{fig:rect-r-nsgp}
\end{figure*}
\begin{figure*}[t!]
	\centering
	\includegraphics[width=.245\linewidth]{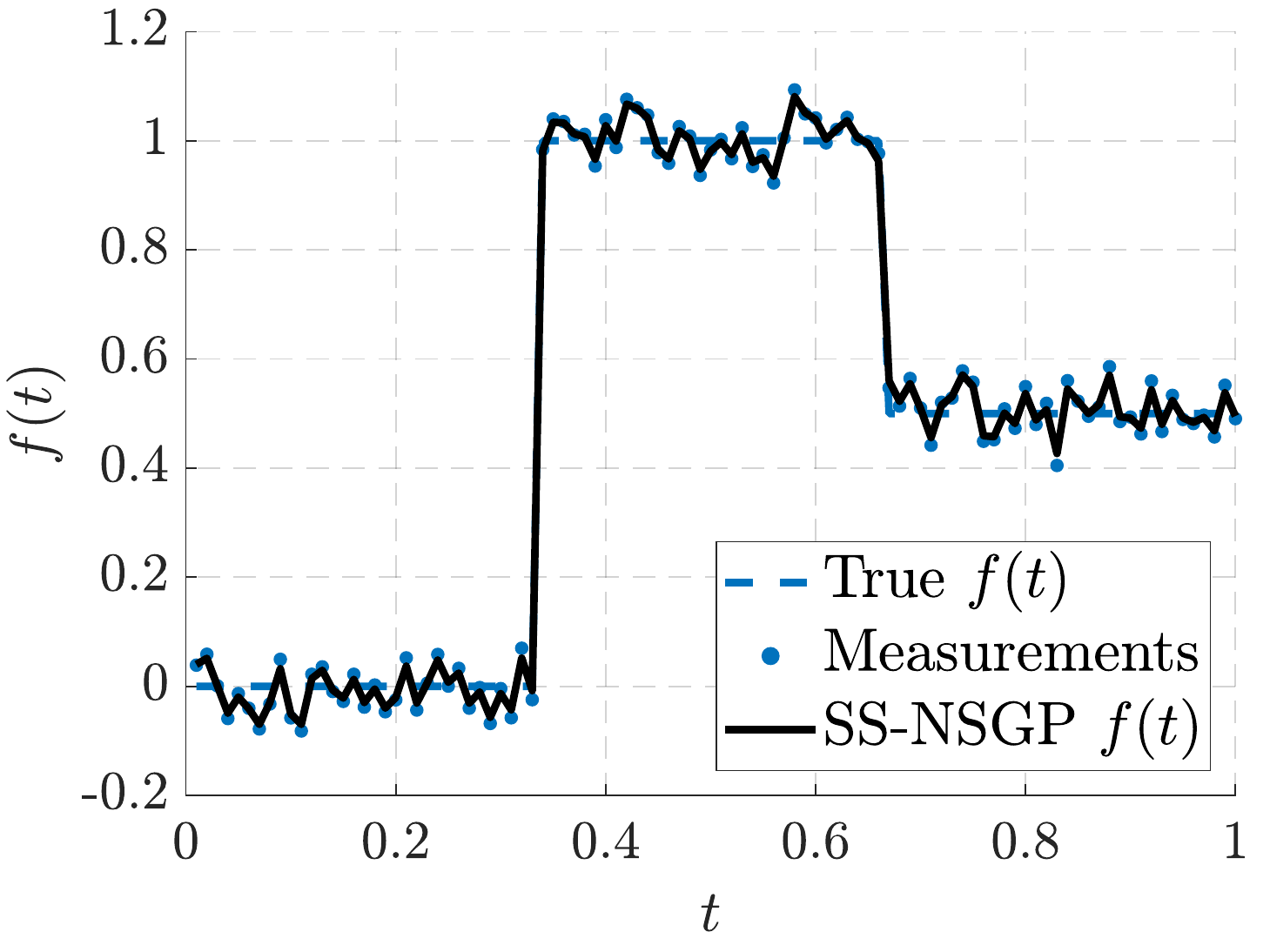}
	\includegraphics[width=.245\linewidth]{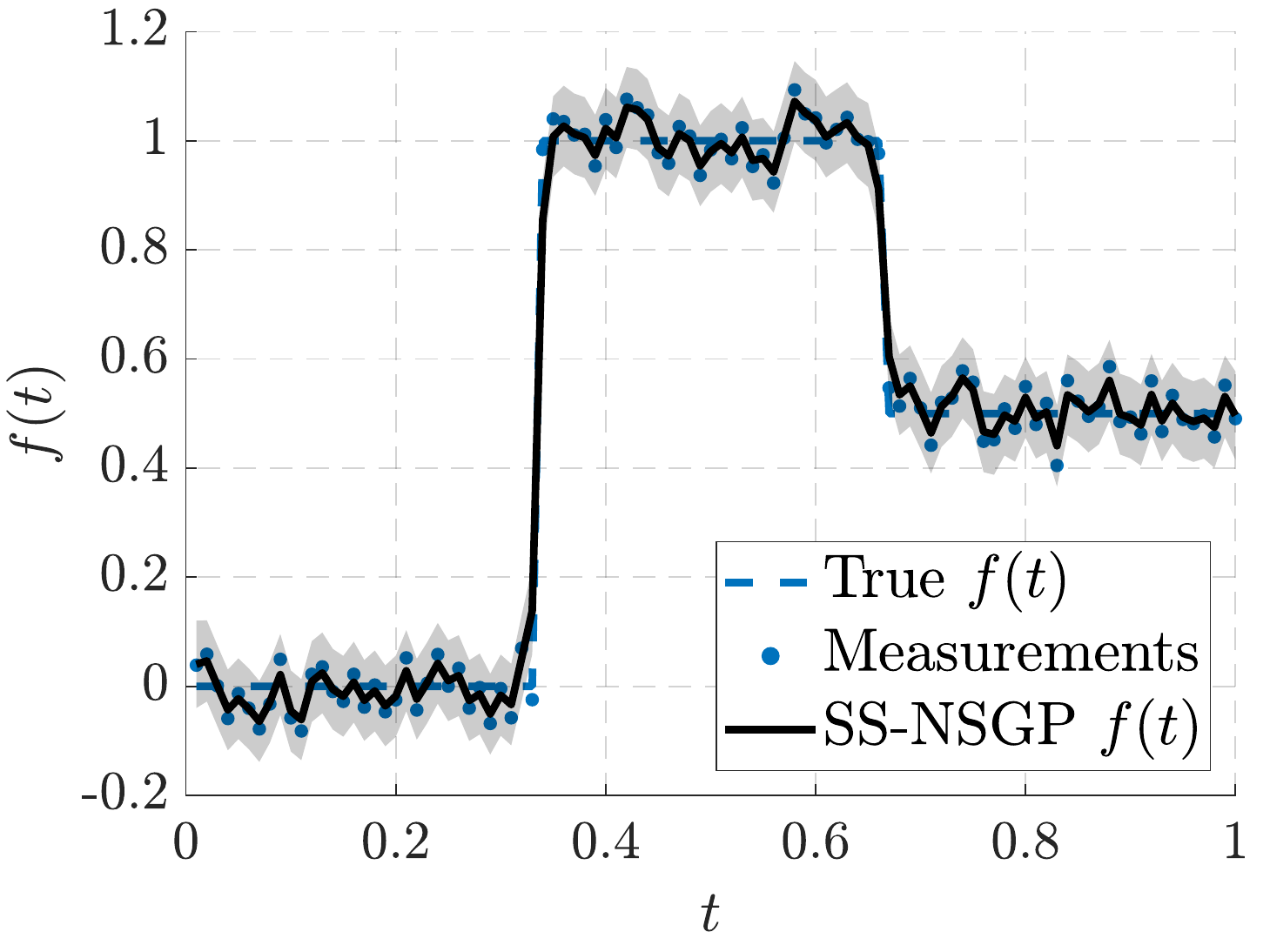}
	\includegraphics[width=.245\linewidth]{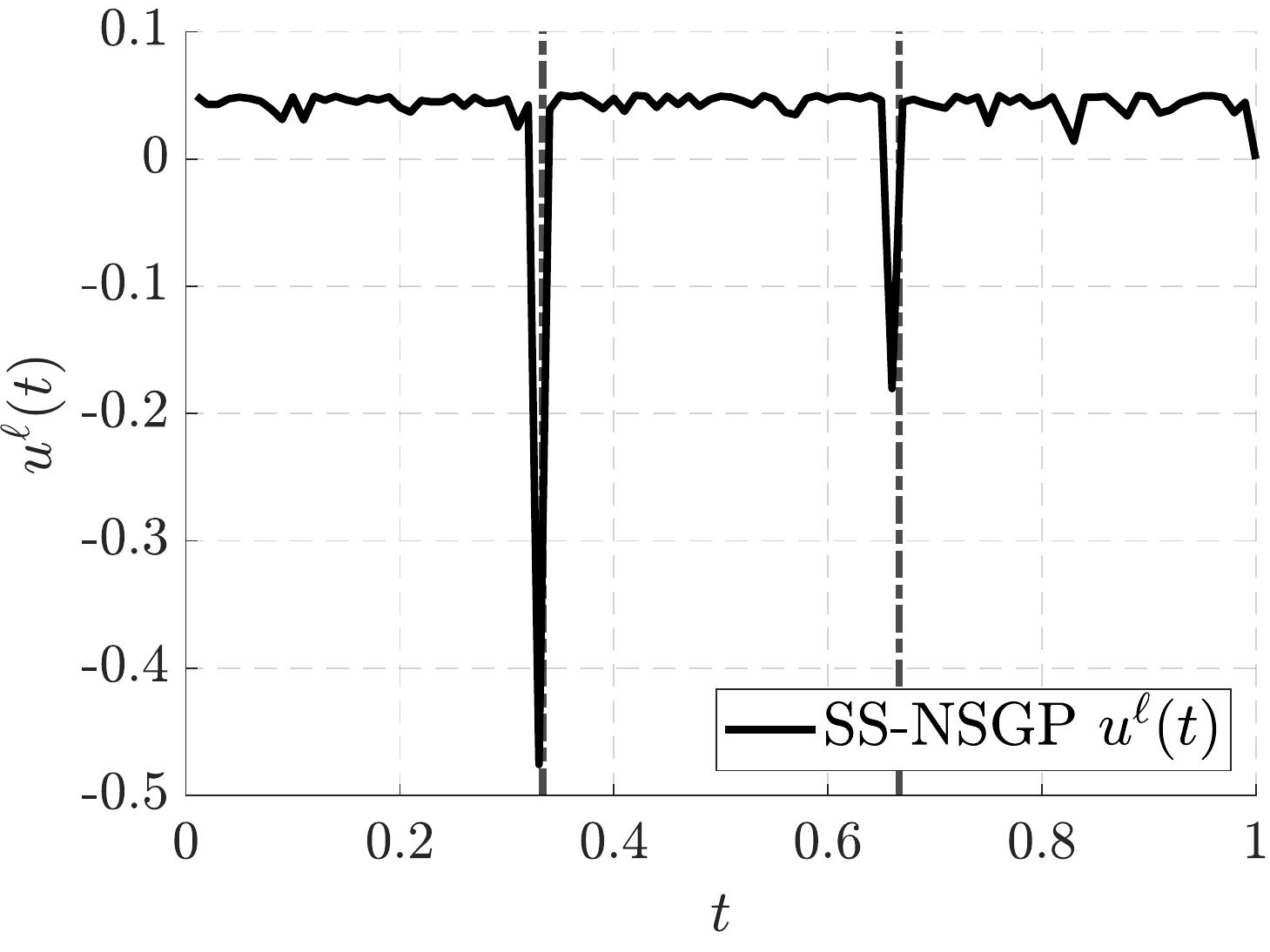}
	\includegraphics[width=.245\linewidth]{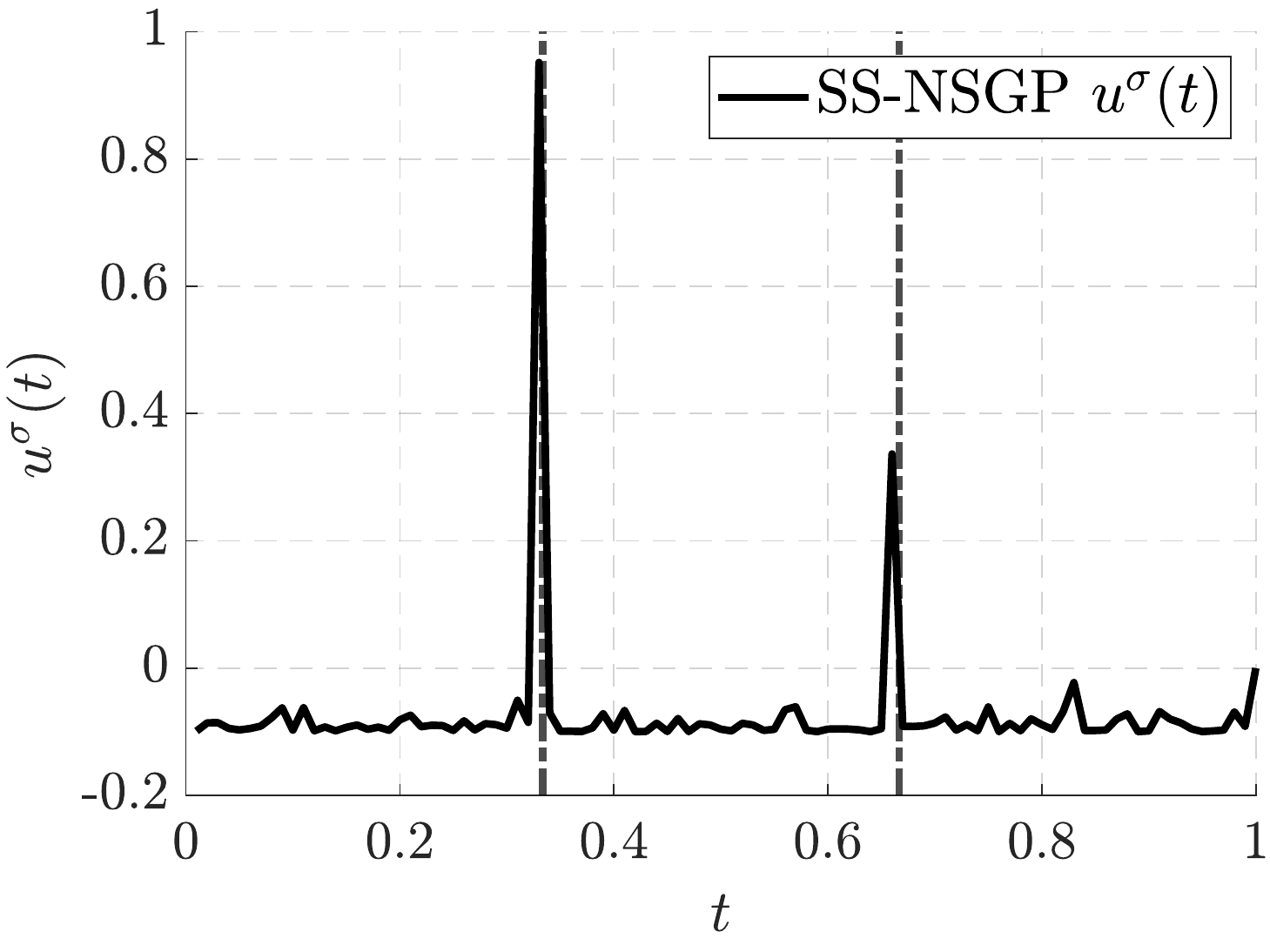}
	\caption{Demonstration of SS-NSGP on the rectangular signal in Equation~\eqref{equ:exp-rect}. The shaded area stands for 0.95 confidence.}
	\label{fig:rect-ss-nsgp}
\end{figure*}
\begin{figure*}[t!]
	\centering
	\includegraphics[width=.245\linewidth]{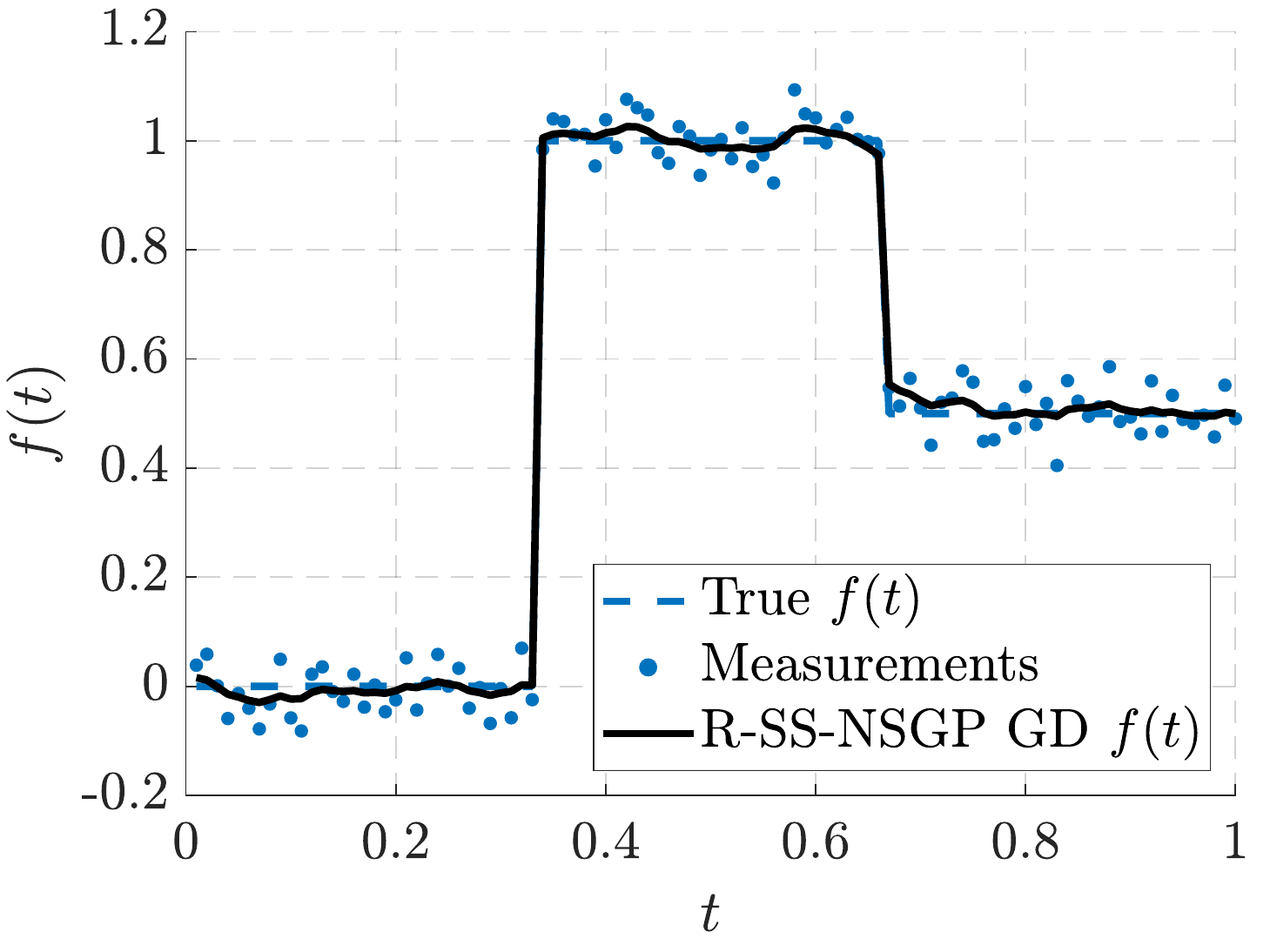}
	\includegraphics[width=.245\linewidth]{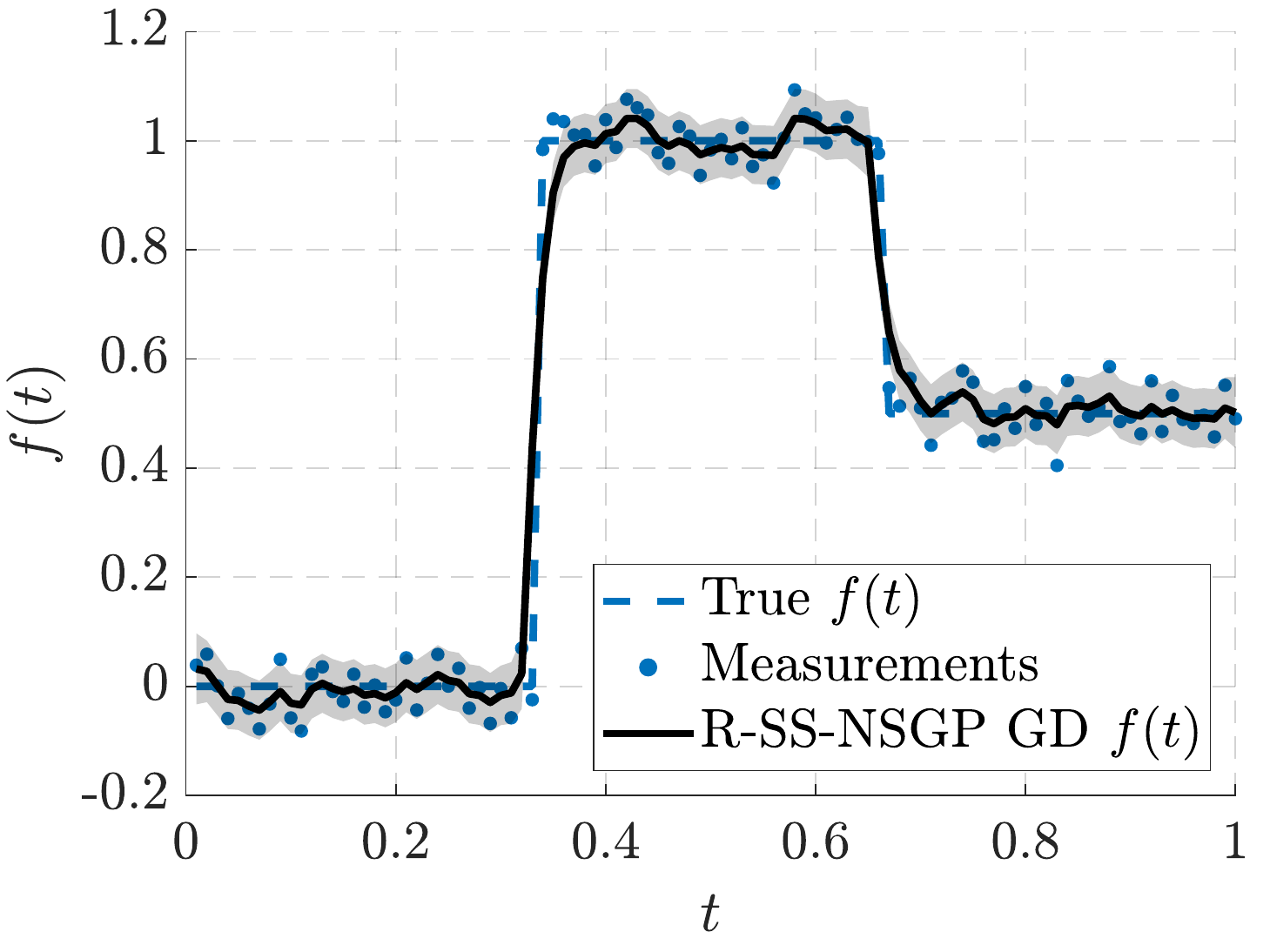}
	\includegraphics[width=.245\linewidth]{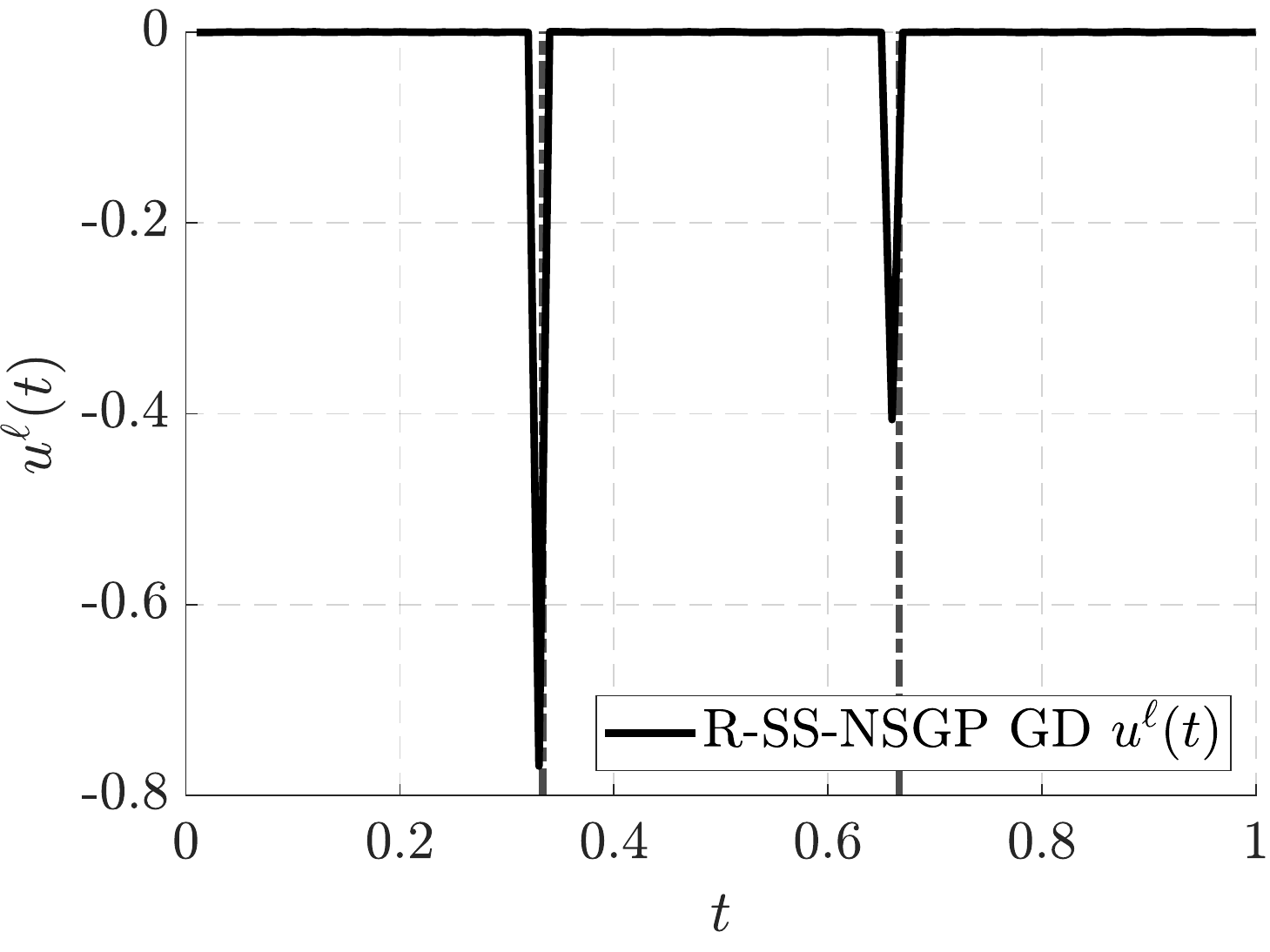}
	\includegraphics[width=.245\linewidth]{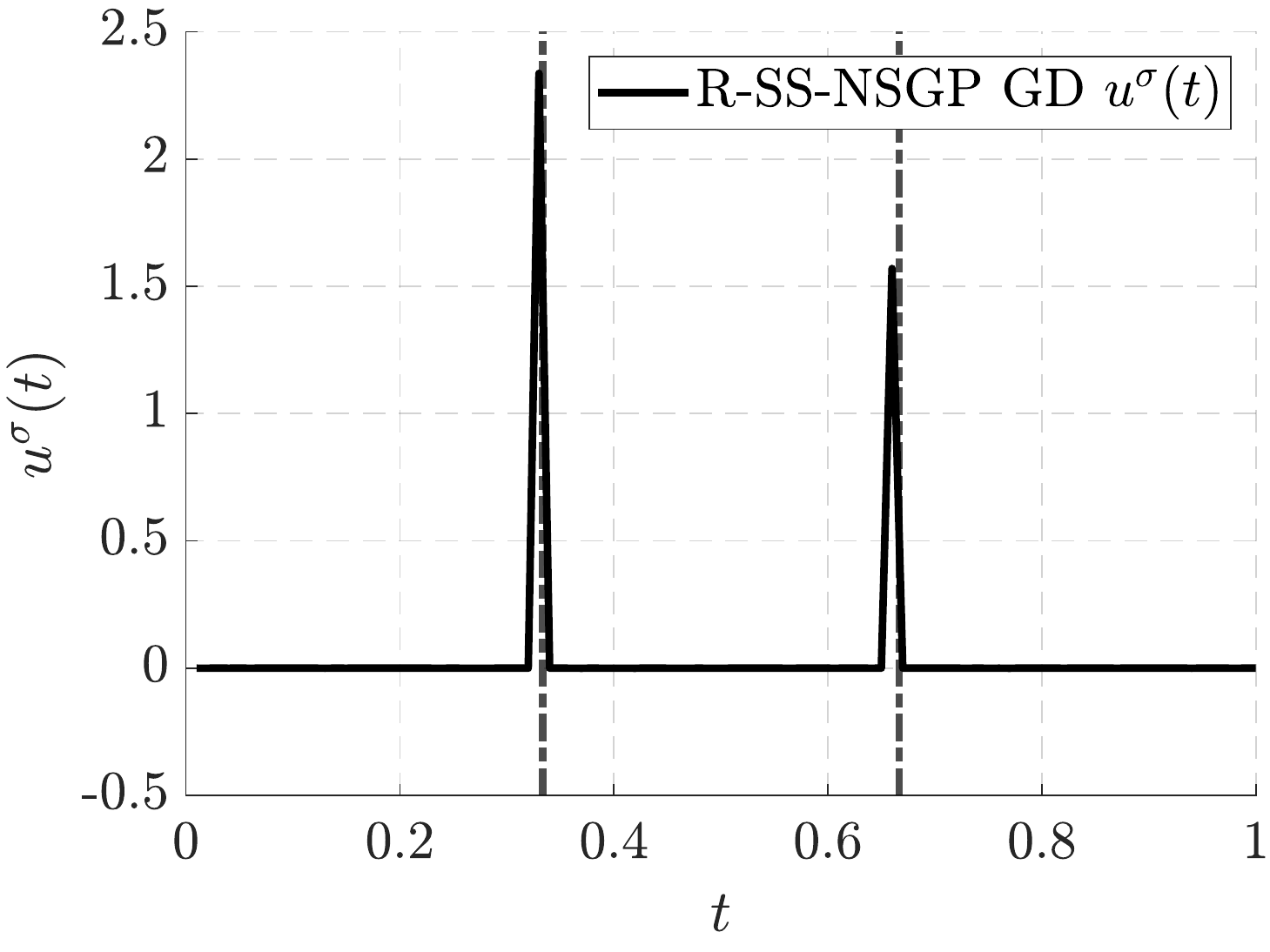}\\
	\includegraphics[width=.245\linewidth]{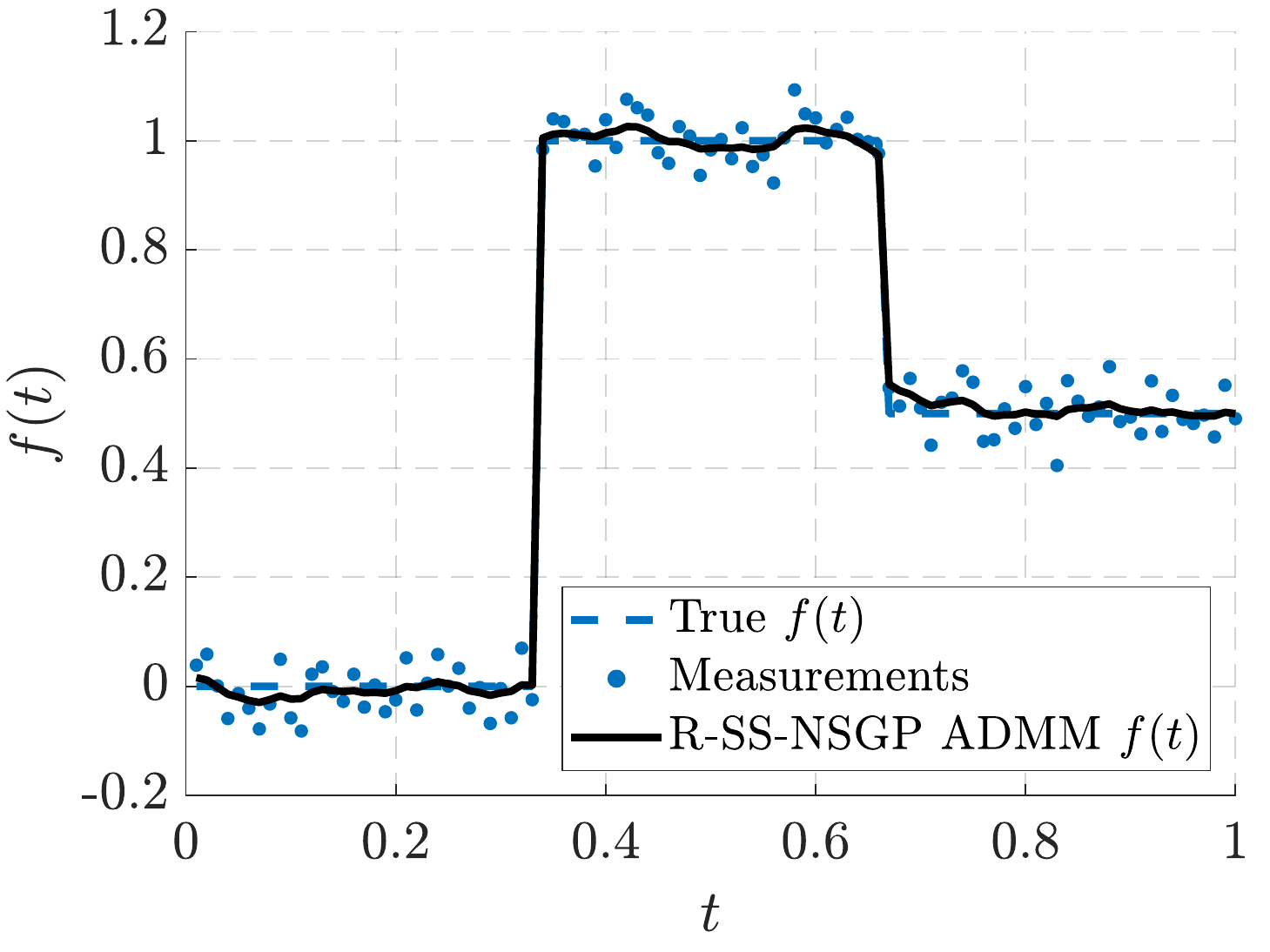}
	\includegraphics[width=.245\linewidth]{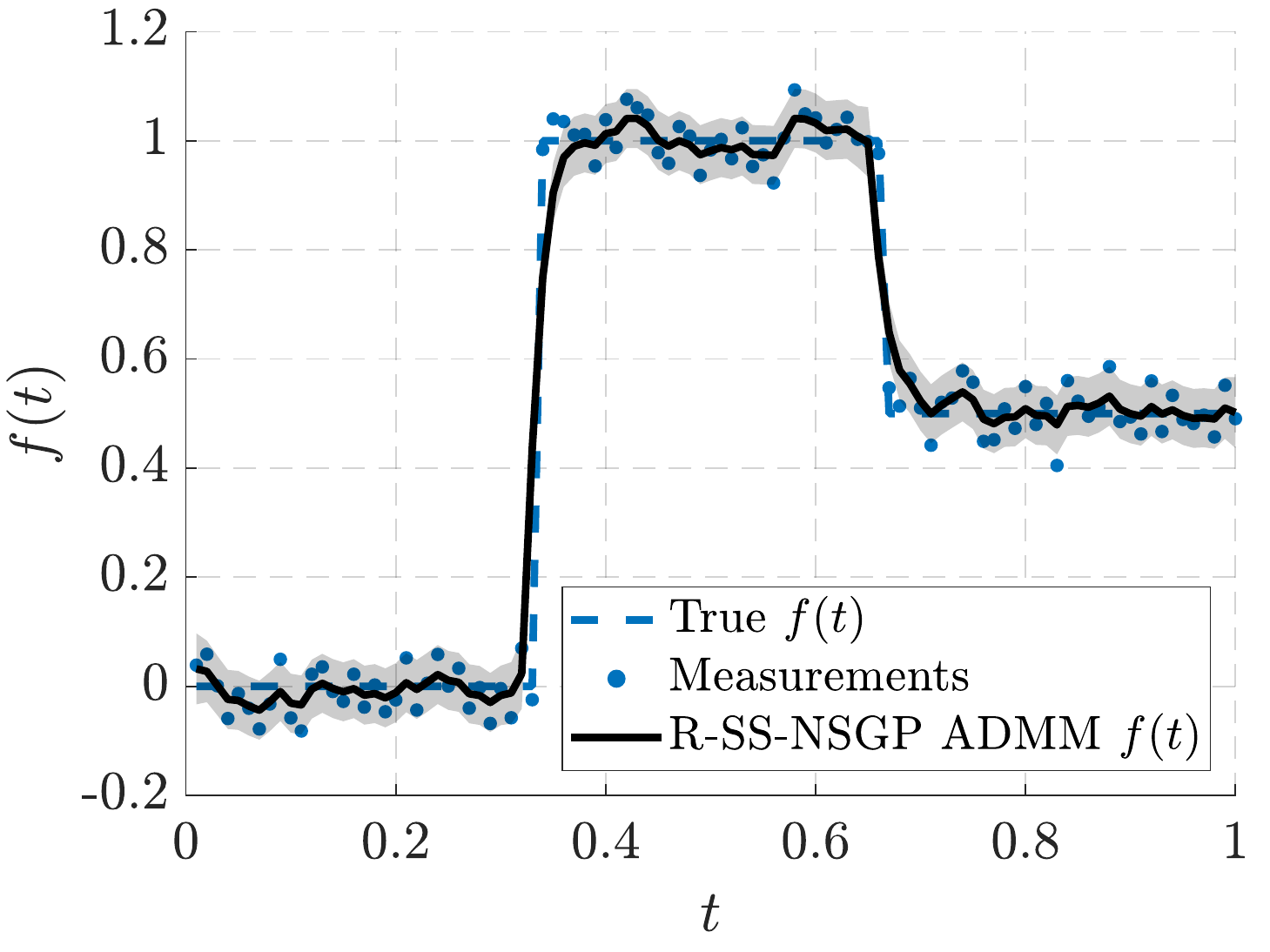}
	\includegraphics[width=.245\linewidth]{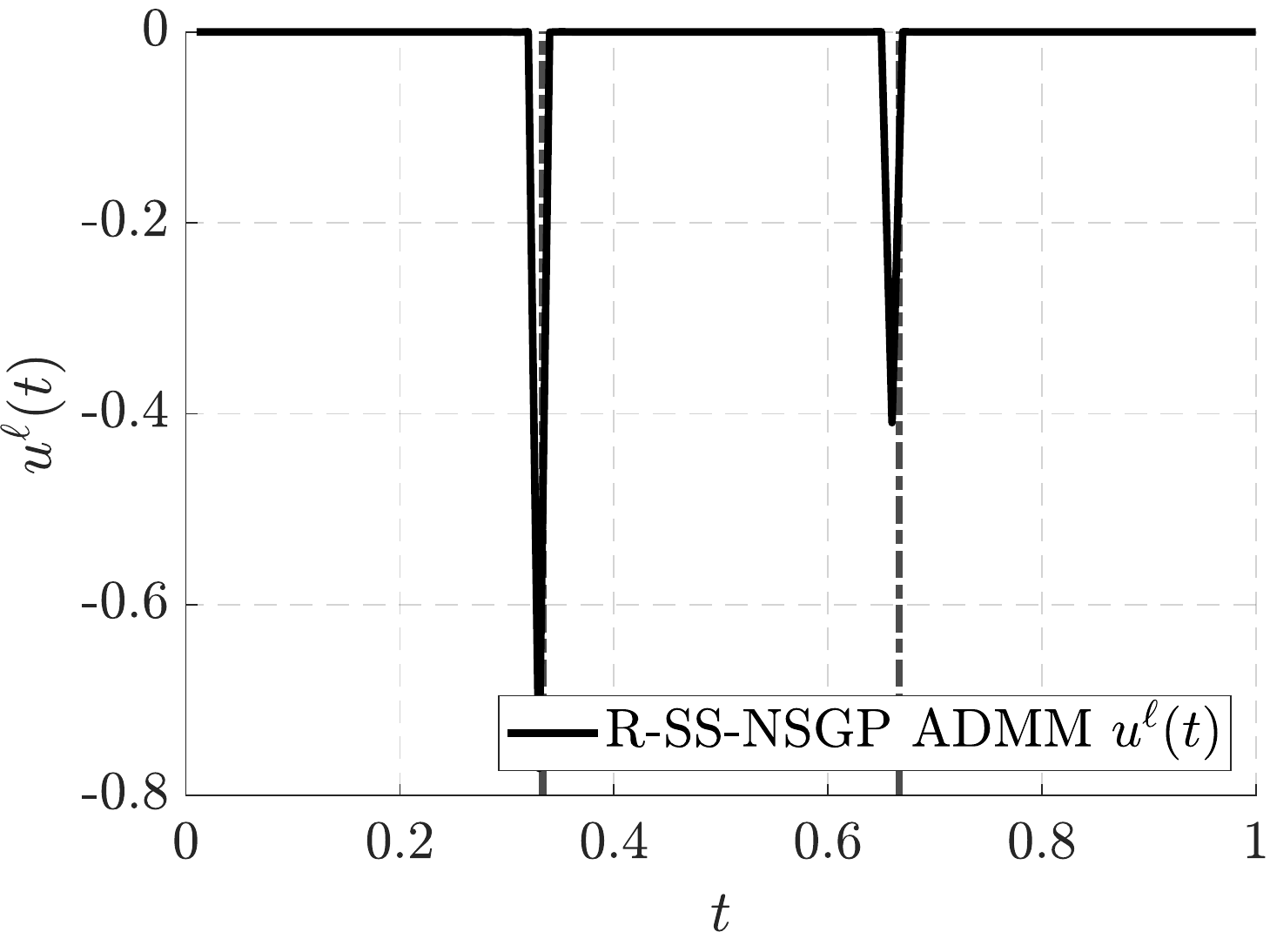}
	\includegraphics[width=.245\linewidth]{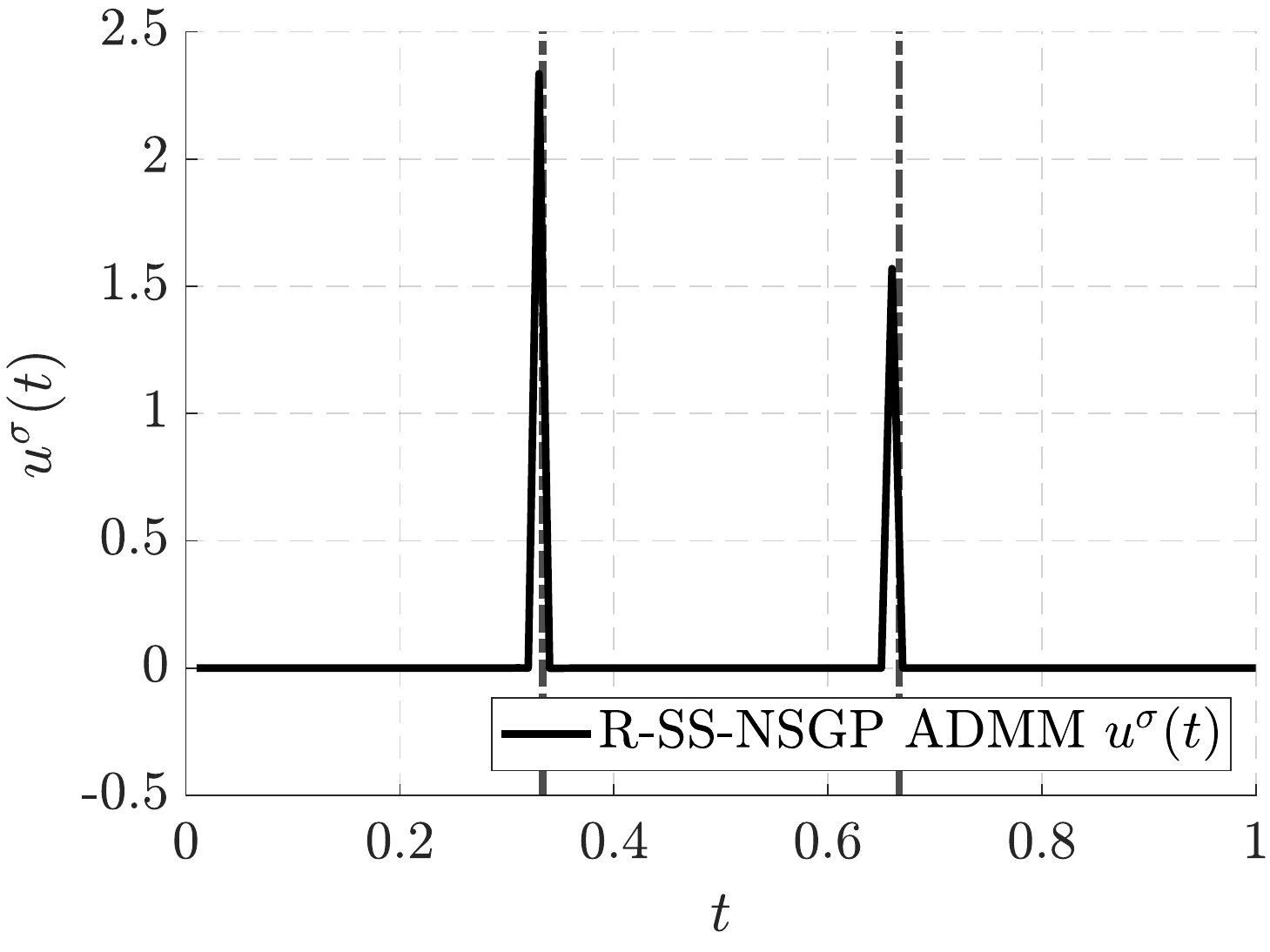}
	\caption{Demonstration of R-SS-NSGP by using GD (first row) and ADMM (second row) on the rectangular signal in Equation~\eqref{equ:exp-rect}. The shaded area stands for 0.95 confidence.}
	\label{fig:rect-r-ss-nsgp}
\end{figure*}

We use the Mat\'{e}rn $\nu=1/2$ covariance functions for the GP, SGP, NSGP, R-NSGP, and R-SS-NSGPs. The construction of the SDEs for SS-NSGPs are given in Appendix~\ref{appendix:sde-matern12}. Hyperparameters of GP are learnt by maximum likelihood estimation with the limited-memory Broyden--Fletcher--Goldfarb--Shanno (L-BFGS) algorithm. The hyperpameters of NSGPs are fixed (i.e., length-scale and magnitude of $u^\ell(t)$ and $u^\sigma(t)$ are $0.01$ and $3$, respectively). The R-NSGP and R-SS-NSGP are solved by using the subgradient method (R-NSGP GD) and the ADMM method (R-NSGP ADMM). The $\rho$ parameters of R-NSGP ADMM are chosen to be large values because of Assumption~\ref{assu:rho-large}, and they are $\rho_\ell = \rho_\sigma = 150$ for the R-NSGPs and $\rho_\ell = \rho_\sigma = 50$ for the R-SS-NSGPs.

The root mean square error (RMSE) and the negative log predictive density (NLPD) metrics are employed to measure the performance. The RMSE is computed with respect to true $f(t)$, and the NLPD is computed by evaluating the predictive density on the test data. Note that the MAP-based approaches do not give the predictive densities needed by NLPD metrics. These two quantities are averaged by conducting 100 independent Monte Carlo (MC) runs. For each MC run we generate a set of measurements of Equation~\eqref{equ:exp-rect} on the uniform grid with $k=1,2,\ldots, 100$. We use the same random seed in all the visualizations. 

The numerical results are reported in Table~\ref{tbl:rect-result}. Let us first focus on the ``without uncertainty'' column. We see that the R-SS-NSGP ADMM method gives the best RMSE 1.63, and SGP FIC gives the worst. The NSGP is worse than GP and SGP DTC, but the regularized NSGPs achieve better results than GP, SGPs, and NSGP. When the regularization is not used, the SS-NSGP is better than GP and NSGP. Moreover, the regularization in the state-space construction yields a substantial performance boost in terms of RMSE. For both the R-NSGP and R-SS-NSGP, the ADMM solver is shown to be slightly better than the subgradient method. 

In the ``with uncertainty'' column of Table~\ref{tbl:rect-result}, the R-NSGP ADMM achieves the best results in terms of RMSE and NLPD. We observe that the RMSEs of NSGP and R-NSGPs have almost no difference compared to the results without uncertainty quantification. It means that the marginal uncertainty quantification method could preserve the original results, while at the same time approximating the marginal posterior distribution to a good extent. The RMSE and NLPD results are improved by involving the regularization for NSGP. However, the state-space NSGPs give poor RMSEs and NLPDs when the uncertainty is quantified. The RMSEs of R-SS-NSGP are significantly increased compared to the ones without uncertainty quantification. The NLPDs of R-SS-NSGP are also the worst among all entries.

Figure~\ref{fig:rect-gp-sgp} plots the results of GP, SGP FIC, and SGP DTC. We find that the estimate of GP shows signs of over-fitting to the measurements because the optimized length-scale and magnitude parameters of GP converge to some small values (which are $0.53$ and $0.56$, respectively in this MC run). The mean estimates of GP are not flat enough to model the smooth part of the test signal. The SGP FIC fails to model the jump at $t=1/3$ properly, and we see that the learnt inducing points tend to not aggregate around this jump point. The inducing points of SGP DTC, however, appear to be more uniformly distributed and experience fewer impacts from discontinuities. In terms of the mean estimate, the SGP DTC method appears to be the best among these three methods because the flatness of the signal is better represented. However, the posterior covariances of SGP FIC and DTC look unreasonably large and jittered. By examining the RMSE and NLPD in Table~\ref{tbl:rect-result} we conclude the same result that SGP DTC has a better mean estimate but not better uncertainty quantification. 

Figure~\ref{fig:rect-nsgp} shows the results of NSGP. Compared to the GP in Figure~\ref{fig:rect-gp-sgp}, the NSGP estimates are shown to be slightly more over-fitted to the measurements from visual inspection. The numerical results in Table~\ref{tbl:rect-result} also conclude the same. We find that the estimated length-scales and magnitudes appear to respond to the signal jumps. In particular, $u^\ell(t)$ shows a significant decrement at $t=1/3$, and $u^\sigma(t)$ shows increment at $t=1/3$ and $t=2/3$. 

Figure~\ref{fig:rect-r-nsgp} illustrates the results of R-NSGP by using the subgradient descent and ADMM methods. Compared to the aforesaid GP, SGP, and NSGP, the R-NSGP shows a better fit to the signal. Also, the parameters $u^\ell(t)$ and $u^\sigma(t)$ are sparse. In particular, the $f(t)$ estimates in $t\in[0, 1/3]$ and $t\in[2/3, 1]$ are shown to be flat, as well as the parameter estimates. The differences between R-NSGP GD and ADMM are found to be subtle in terms of $f(t)$ estimates, but not in $u^\ell(t)$ and $u^\sigma(t)$. As shown on the second row of Figure~\ref{fig:rect-r-nsgp}, the estimated $u^\ell(t)$ by R-NSGP GD is almost zero and does not react to the signal jumps. The estimated $u^\sigma(t)$ however shows significant drop and rise at $t=1/3$ while ADMM does not give the drop prior to $t=1/3$. In contrast to the all-zero $u^\ell(t)$ estimated by GD, the ADMM gives a substantial $u^\ell(t)$ growth at $t=1/3$ while still preserving the sparsity. 

\begin{figure*}[t!]
	\centering
	\includegraphics[width=.75\linewidth]{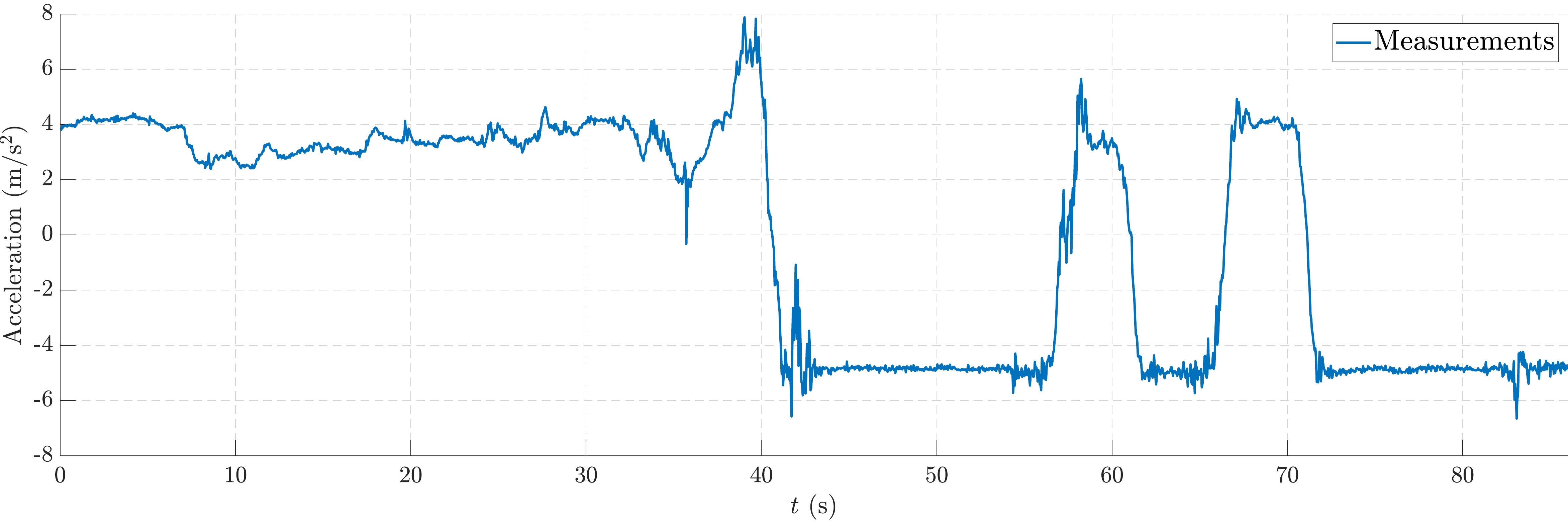}
	\includegraphics[width=.75\linewidth]{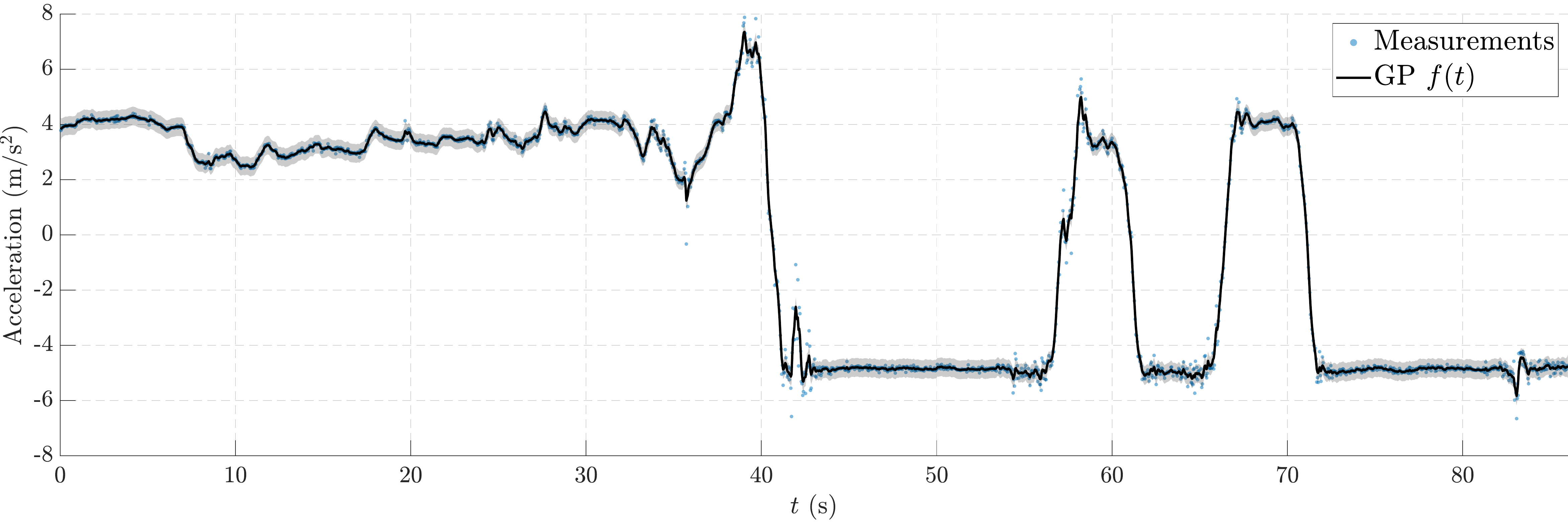}
	\includegraphics[width=.75\linewidth]{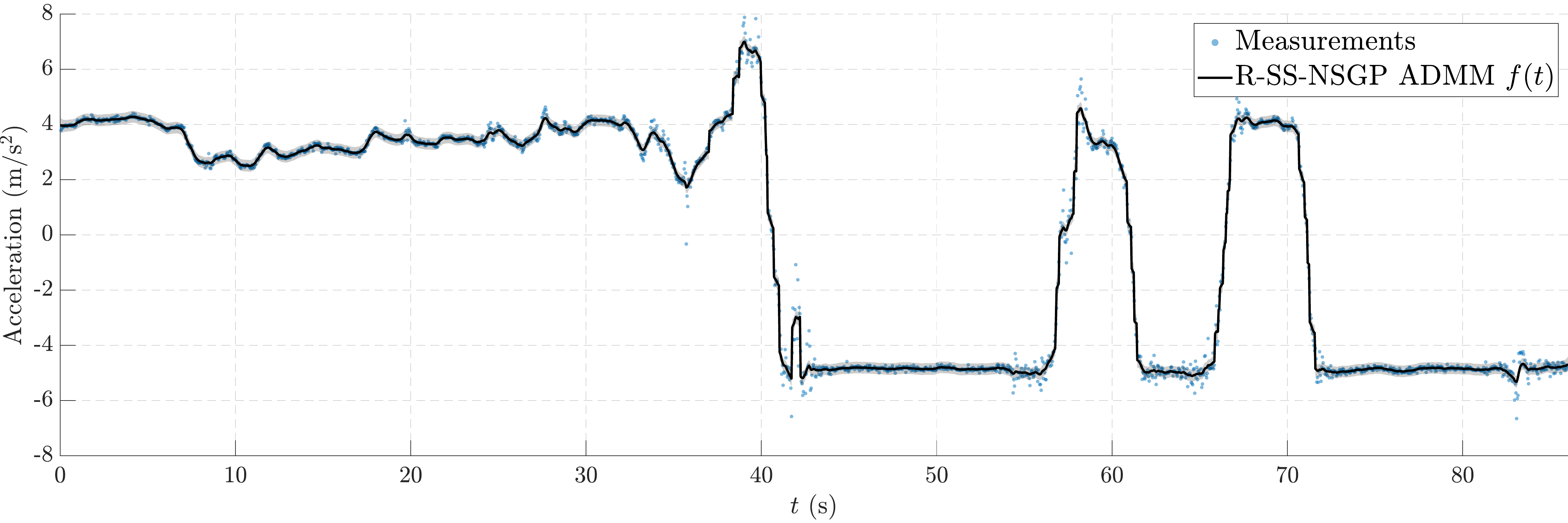}
	\includegraphics[width=.75\linewidth]{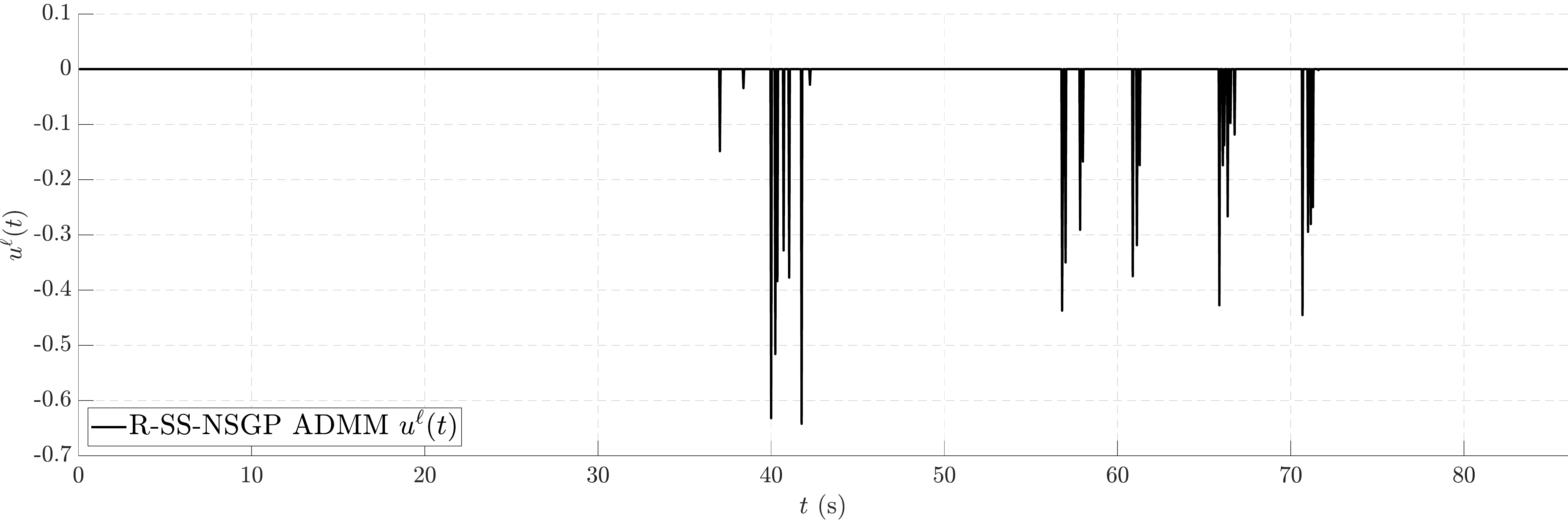}
	\includegraphics[width=.75\linewidth]{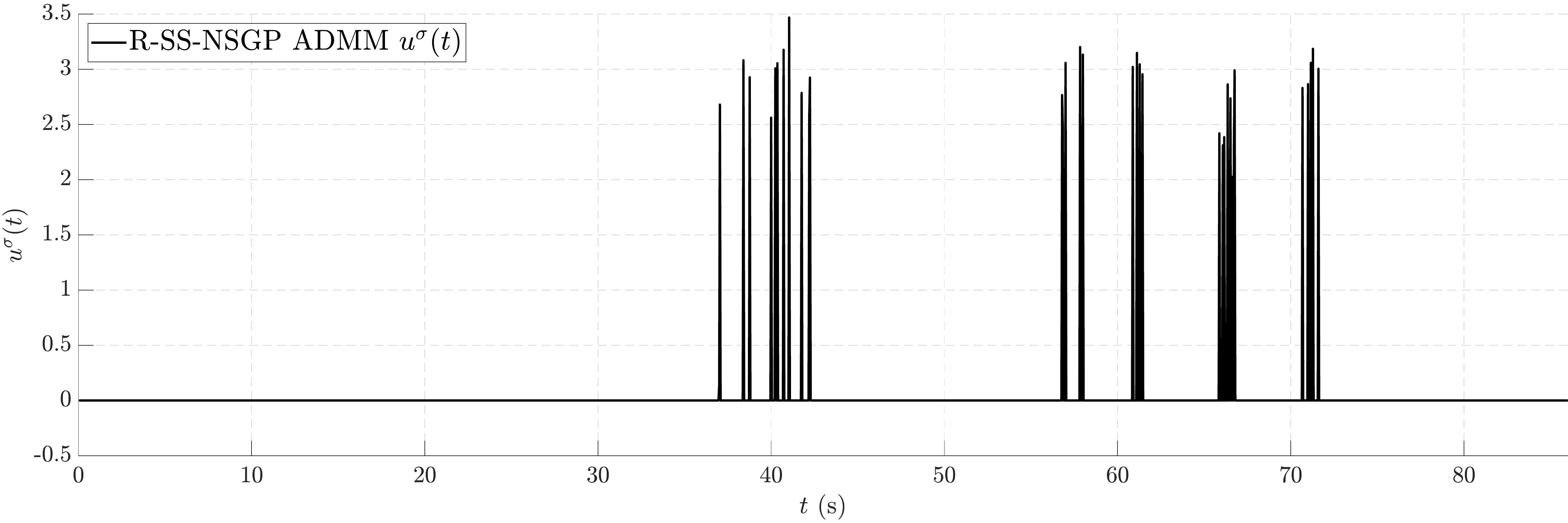}
	\caption{Modeling human motion with R-NSGPs. The shaded area stands for 0.95 confidence.}
	\label{fig:imu}
\end{figure*}

Figure~\ref{fig:rect-ss-nsgp} demonstrates the results of SS-NSGP. By inspecting the first row of the figure we find the same over-fitting problem as GP and NSGP do. However it is worth mentioning that the estimated $u^\ell(t)$ and $u^\sigma(t)$ show more meaningful results compared to NSGP. Notably, $u^\ell(t)$ and $u^\sigma(t)$ give sharp descent and ascent, respectively at those discontinuity points, otherwise they are noisily flat. 

The results of R-SS-NSGP are shown in Figure~\ref{fig:rect-r-ss-nsgp}. In terms of $f(t)$ estimation, the R-SS-NSGP shows significantly better fitting compared to GP, SGP, NSGP, and R-NSGP. The estimated $f(t)$ is flat almost everywhere while is still capable of jumping properly at $t=1/3$ and $t=2/4$. Moreover, the magnitudes of parameter jumps resemble the signal jumps (e.g., $| u^\ell(1/3)| > |u^\ell(2/3)|$). However, $f(t)$ with uncertainty quantification (i.e., $p(f(t) \mid y_{1:T})$) appears to be poorly estimated, and we can see that the estimated $f(t)$ is overly smooth at $t=1/3$ and $t=2/3$. This result corresponds to the numerical results in Table~\ref{tbl:rect-result} where one can see a significant performance decrease in terms of RMSE and NLPD when the uncertainty estimation is enabled. As for the parameters, $u^\ell(t)$ and $u^\sigma(t)$ appear to be well regularized. The parameters are sparse compared to the unregularized SS-NGSP in Figure~\ref{fig:rect-ss-nsgp}.

\subsection{Modeling human motion}
\label{sec:exp-imu}
In this section, we apply the proposed R-NSGPs to model the human motion as measured by an accelerometer. The data was collected in the clinical trials organized by Helsinki University Central Hospital and Aalto University, where the subjects (healthy volunteers) were asked to perform a set of motion sequences. The data contain electrocardiogram and inertial measurements which were captured by using the MetaMotion sensor from MbientLab Inc. For a detailed explanation of this dataset, see~\citet{Roland2018}. In this experiment, we choose one example recording where the subject was lying on their back and rolling, and the measurements exhibit strong artifacts due to the sensor placement. This recording was sampled in 100~Hz from $t=0$~s to $t=86$~s. A demonstration recording is shown on the first row of Figure~\ref{fig:imu}.

For the sake of simplicity, we are only showing the results of GP and R-SS-NSGP ADMM. Also, the batch NSGP models would take an extensive amount of computational time on this data because of the large number of measurements. For this experiment we use the Mat\'{e}rn $\nu=1/2$ covariance function, and set the measurement noise variance as $0.1$. The hyperparameters of GP are learned by MLE and L-BFGS. We set the baseline level of $u^\ell(t)$ and $u^\sigma(t)$ as $2$ and $-1$, respectively, and the regularization parameters are $\lambda_\ell=\lambda_\sigma=2$ and $\rho_\ell=\rho_\sigma=50$.

The results are shown in Figure~\ref{fig:imu}. Compared with GP, the R-SS-NSGP gives a better fit because the estimates are smoother and have fewer impacts from artifacts. In particular, we find that the magnitude parameter of GP is learnt to have a high value (which is $\approx4.2$) in order to cover the dynamic range of the signal globally. This ill-learnt magnitude makes the estimates jitter on the flat parts and sensitive to motion artifacts. As an example, around $t\approx 39$~s, $t\approx55$~s, and $t\approx 63$~s the acceleration measurements exhibit strong noises which are well handled by R-SS-NSGP but not GP. In addition, R-SS-NSGP gives better quantification of the confidence interval which wraps the measurements reasonably. The parameter processes of R-SS-NSGP appear to be sparse due to the introduced regularizations. 

\section{Conclusion}
\label{sec:conclusion}
In this paper, we have introduced regularized non-stationary Gaussian processes (R-NSGPs), which involve $L^1$-regularization on the posterior estimates of the hierarchical non-stationary GPs. This enables promotion of sparsity or regularization of total variation in the GP and its parameter processes (e.g., length-scale process). The proposed models generalize two commonly used NSGP constructions which are the non-stationary covariance function-based NSGPs and state-space NSGPs. The R-NSGP regression is formulated as maximum a posteriori (MAP) estimation with additional regularization terms in the objective function. To solve the resulting optimization problem we use the alternating direction method of multipliers (ADMM) framework, and we also prove the convergence of the resulting method. The numerical experiments demonstrate that the proposed R-NSGPs are particularly useful for dealing with ill-behaved signals, such as piecewise continuous signals. We also successfully applied R-NSGPs to modeling of human motion. 

\begin{appendices}
\section{Solution analysis of SDEs~\eqref{equ:ss-gp-sde}}
\label{appendix:proof-sde-solution}
In this appendix, we show the existence of strong solution and pathwise uniqueness of SDEs~\eqref{equ:ss-gp-sde}. Before we show the lemma, we remark that there are different interpretations of the strong solution of an SDE~\citep[see, e.g., Chapter 10 of][]{Chung1990}. For example, \citet{Karatzas1991}, \citet{oksendal}, and~\citet{JeanGall2016} understand the strong solution as a continuous process which solves the SDE almost surely and is adapted to the filtration generated by a given Wiener process and the initial random variable, while \citet{AchimKlenke2014} and~\citet{Williams2000Vol2} define the solution as a suitable function. In this paper we adopt the interpretation by~\citet[][Definiton 2.1]{Karatzas1991} and~\citet[][Section 10.4]{Chung1990}. 

The SDE coefficients must be chosen suitably. In particular, Let the real parts of the eigenvalues of $\cu{A}^\ell$ and $\cu{A}^\sigma$ be negative. Also let functions $\cu{A}\colon\R^{D_\ell} \to \R^{D_f}$ and $\cu{B}\colon \R^{D_\ell} \times \R^{D_\sigma} \to \R^{D_f \times D_f^W}$ be Borel measurable and elementwise bounded in their arguments.

The proof in the sequel is based on the fact that each sub-SDE is conditionally a linear SDE which has explicit solution~\citep[see, e.g., Chapter 11 of][]{Kuo2006Book}. With the mutually independent conditions on the initial random variables and Wiener processes one can form a product measure space on which the solution is defined. 

\begin{lemma}
	\label{lemma:sde-solution}
	Suppose that $\cu{f}(t_0)$, $\cu{u}^\ell(t_0)$, and $\cu{u}^\sigma(t_0)$ are mutually independent and are also independent of the filtrations generated by $\cu{W}(t)$, $\cu{W}^\ell(t)$, and $\cu{W}^\sigma(t)$ for $t\in\T$. Then the SDE system in Equation~\eqref{equ:ss-gp-sde} has a strong pathwise-unique solution.
\end{lemma}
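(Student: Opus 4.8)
The plan is to exploit the one-directional (cascade) coupling in the system~\eqref{equ:ss-gp-sde}: the equations for $\cu{u}^\ell(t)$ and $\cu{u}^\sigma(t)$ involve neither $\cu{f}(t)$ nor each other, so I would first solve those two and then solve the $\cu{f}$-equation treating the realized paths of $\cu{u}^\ell$ and $\cu{u}^\sigma$ as known (random) driving signals. For $\cu{u}^\ell$ and $\cu{u}^\sigma$ themselves, each is a linear time-invariant SDE with constant coefficients, so the drift is globally Lipschitz with linear growth and the classical It\^{o} existence--uniqueness theorem \citep[e.g.,][]{Karatzas1991, Kuo2006Book} gives an a.s.\ continuous, pathwise-unique strong solution, explicitly $\cu{u}^\ell(t)=\exp(\cu{A}^\ell(t-t_0))\,\cu{u}^\ell(t_0)+\int_{t_0}^{t}\exp(\cu{A}^\ell(t-s))\,\cu{B}^\ell\diff\cu{W}^\ell(s)$ and similarly for $\cu{u}^\sigma$; the finite second moment of the initial values makes these $L^2$-processes, and the negativity of the eigenvalue real parts (not needed for well-posedness on a finite horizon) additionally secures finite stationary moments.

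Next I would solve for $\cu{f}(t)$. Since $\cu{u}^\ell$ and $\cu{u}^\sigma$ have a.s.\ continuous paths and $\cu{A},\cu{B}$ are Borel measurable and elementwise bounded, the coefficients $t\mapsto\cu{A}(\cu{u}^\ell(t))$ and $t\mapsto\cu{B}(\cu{u}^\ell(t),\cu{u}^\sigma(t))$ are jointly measurable in $(t,\omega)$, progressively measurable with respect to the joint filtration, and uniformly bounded. Hence the $\cu{f}$-equation is a \emph{linear} SDE with a (random but) uniformly bounded Lipschitz constant in the state and a bounded additive-noise coefficient, so the linear-SDE theory again applies \citep[Chapter~11 of][]{Kuo2006Book} and yields a pathwise-unique strong solution, which by variation of constants equals $\cu{f}(t)=\cu{\Lambda}(t,t_0)\,\cu{f}(t_0)+\int_{t_0}^{t}\cu{\Lambda}(t,s)\,\cu{B}(\cu{u}^\ell(s),\cu{u}^\sigma(s))\diff\cu{W}(s)$, with $\cu{\Lambda}(t,s)$ the transition matrix (Peano--Baker series) generated by $\cu{A}(\cu{u}^\ell(\cdot))$, exactly as in Lemma~\ref{lemma:ss-cov}. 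Pathwise uniqueness of $\cu{f}$ would also follow directly from a Gr\"{o}nwall estimate on $\expec\norm{\cu{f}^{(1)}(t)-\cu{f}^{(2)}(t)}^2_2$ using the uniform bound on $\cu{A}(\cdot)$.

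It remains to assemble the three solutions on a single probability space, and this is where the independence hypothesis enters. I would take $(\Omega,\mathcal{F},\mathbb{P})$ to be the product of the spaces carrying $(\cu{f}(t_0),\cu{W})$, $(\cu{u}^\ell(t_0),\cu{W}^\ell)$, and $(\cu{u}^\sigma(t_0),\cu{W}^\sigma)$, with $\{\mathcal{F}_t\}$ the (augmented) filtration generated jointly by the three initial variables and the three Wiener processes up to time $t$. Mutual independence ensures that each driving process remains a Wiener process, hence an $\mathcal{F}_t$-martingale, so the It\^{o} integral defining $\cu{f}$ against $\cu{W}$ retains its usual properties even though its integrand depends on the independent processes $\cu{u}^\ell,\cu{u}^\sigma$; the three solutions constructed above are then all $\mathcal{F}_t$-adapted and jointly solve~\eqref{equ:ss-gp-sde} in the strong sense of \citet[Definition~2.1]{Karatzas1991} and \citet[Section~10.4]{Chung1990}. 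Pathwise uniqueness of the full system reduces componentwise to the uniqueness statements of the previous two paragraphs.

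The hard part will be the measurability and filtration bookkeeping of this last step rather than any estimate: one must check carefully that $(t,\omega)\mapsto\cu{A}(\cu{u}^\ell(t,\omega))$ and $(t,\omega)\mapsto\cu{B}(\cu{u}^\ell(t,\omega),\cu{u}^\sigma(t,\omega))$ are progressively measurable and that the stochastic integral against $\cu{W}$ behaves correctly under the enlarged filtration --- which is exactly where independence of $\cu{W}$ from $(\cu{u}^\ell(t_0),\cu{W}^\ell)$ and $(\cu{u}^\sigma(t_0),\cu{W}^\sigma)$ is indispensable. A secondary, milder point is to invoke only those linear-SDE results that hold under mere $t$-measurability (not $t$-continuity) of $\cu{A}(\cu{u}^\ell(\cdot))$, since $t$-continuity of that map is assumed only later, in Lemma~\ref{lemma:ss-cov}.
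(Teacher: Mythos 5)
Your proposal is correct and follows essentially the same route as the paper's own proof in Appendix~A: exploit the triangular structure to solve the linear SDEs for $\cu{u}^\ell$ and $\cu{u}^\sigma$ first, solve the $\cu{f}$-equation conditionally on their realized paths using the boundedness of $\cu{A}$ and $\cu{B}$ (hence a uniform Lipschitz condition), assemble everything on the product probability space furnished by the independence hypothesis, and reduce pathwise uniqueness to the componentwise statements. The only difference is one of emphasis --- you flag the progressive-measurability and filtration bookkeeping more explicitly than the paper does --- but the underlying argument is the same.
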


\begin{proof}
	Since the linear SDEs of $\cu{u}^\ell(t)$ and $\cu{u}^\sigma(t)$ have strong solutions~\citep{Kuo2006Book}, we let $\big(\Omega^\ell, \mathcal{F}^\ell,\lbrace \mathcal{F}^\ell_t \rbrace, \mathbb{P}^\ell\big)$ and $\big(\Omega^\sigma, \mathcal{F}^\sigma, \lbrace \mathcal{F}^\sigma_t \rbrace, \mathbb{P}^\sigma\big)$ be the filtered probability spaces of $\cu{u}^\ell(t)$ and $\cu{u}^\sigma(t)$, where $\lbrace \mathcal{F}^\ell_t\rbrace$ and $\lbrace \mathcal{F}^\sigma_t \rbrace$ are the generated filtrations of $(\cu{W}^\ell(t), \cu{u}^\ell(t_0))$ and $(\cu{W}^\sigma(t), \cu{u}^\sigma(t_0))$, respectively. Also, $\cu{u}^\ell(t)$ and $\cu{u}^\sigma(t)$ are adapted to $\lbrace \mathcal{F}^\ell_t\rbrace$ and $\lbrace\mathcal{F}^\sigma_t\rbrace$, respectively. 
	
	Due to the linearity and the conditions on coefficients $\cu{A}$ and $\cu{B}$, we have that for every trajectory of $\cu{u}^\ell(t)$ and $\cu{u}^\sigma(t)$, the strong existence holds for $\cu{f}(t)$ (because the SDE coefficients of $\cu{f}(t)$ satisfies the usual global Lipschitz condition). Namely, for every $\omega^\ell \in \Omega^\ell_t$ and $\omega^\sigma \in \Omega^\sigma_t$ the solution $\cu{f}(t)$ is defined on a filtered probability space $\big(\Omega^f, \mathcal{F}^f, \lbrace \mathcal{F}^f_t \rbrace, \mathbb{P}^f\big)$, and $\cu{f}(t)$ is adapted to the filtration $\lbrace \mathcal{F}^f_t \rbrace$ generated by $(\cu{W}(t), \cu{f}(t_0))$. Due to the independency of initial variables and Wiener processes, we can now form a product probability space $(\Omega, \mathcal{F}, \lbrace \mathcal{F}_t\rbrace, \mathbb{P})$, where $\Omega = \Omega^f \times\Omega^\ell \times \Omega^\sigma$, $\mathcal{F} = \mathcal{F}^f \otimes \mathcal{F}^\ell \otimes  \mathcal{F}^\sigma$, $\mathcal{F}_t = \mathcal{F}^f_t \otimes \mathcal{F}^\ell_t \otimes  \mathcal{F}^\sigma_t$, and $\mathbb{P}(E^f\times E^\ell \times E^\sigma) = \mathbb{P}^f(E^f)\,\mathbb{P}^\ell(E^\ell)\,\mathbb{P}^\sigma(E^\sigma)$ for every $E^f\in\mathcal{F}$, $E^\ell\in\mathcal{F}^\ell$, and $E^\sigma\in\mathcal{F}^\sigma$. Since probability spaces are sigma-finite, this product measure is uniquely defined~\citep[Theorem 14.14 or 14.5 of][]{AchimKlenke2014, reneProbabilityBook2005}. Therefore we have that $(\cu{f}(t), \cu{u}^\ell(t), \cu{u}^\sigma(t))$ is adapted to the filtration $\lbrace\mathcal{F}_t\rbrace$ which is the same with the filtration generated by $(\cu{W}^f(t), \cu{W}^\ell(t), \cu{W}^\sigma(t), \cu{f}(t_0), \cu{u}^\ell(t_0),\cu{u}^\sigma(t_0))$.
	
	We now show the pathwise uniqueness. Let $\big( \cu{f}_1(t),\cu{u}_1^\ell(t), \cu{u}_1^\sigma(t) \big)$ and $\big( \cu{f}_2(t),\cu{u}_2^\ell(t), \cu{u}_2^\sigma(t) \big)$ be any two solutions to the SDE system. Also let $M^\ell_t = \lbrace \omega \in \Omega^\ell\colon \abs{\cu{u}^\ell_1(t) - \cu{u}^\ell_2(t)} = 0 \rbrace$ and $M^\sigma_t = \lbrace \omega \in \Omega^\sigma\colon \abs{\cu{u}^\sigma_2(t) - \cu{u}^\sigma_1(t)} = 0 \rbrace$. Again, by the strong uniqueness of linear SDEs, we have $\mathbb{P}^\ell(M^\ell_t) = 1$ and $\mathbb{P}^\sigma(M^\sigma_t) = 1$ for all $t\in\T$. More importantly, for every $\omega^\ell \in \Omega^\ell$ and $\omega^\sigma\in\Omega^\sigma$ let the set $M^f_t(\omega^\ell, \omega^\sigma) = \lbrace \omega \in \Omega^f \colon \abs{\cu{f}_1(t) - \cu{f}_2(t)=0}\rbrace$, and we then have $\mathbb{P}^f(M^f_t(\omega^\ell, \omega^\sigma)) = 1$. Now define $M_t = M^f_t \times M^\ell_t \times M^\sigma_t \in\Omega$. By the product measure it is easy to see that $\mathbb{P}(M_t) = \mathbb{P}^f(M^f_t)\,\mathbb{P}^\ell(M^\ell_t)\,\mathbb{P}^\sigma(M^\sigma_t) = 1$ for every $t\in\T$. Due to the continuity of the solution, the two solutions $\big( \cu{f}_1(t),\cu{u}_1^\ell(t), \cu{u}_1^\sigma(t) \big)$ and $\big( \cu{f}_2(t),\cu{u}_2^\ell(t), \cu{u}_2^\sigma(t) \big)$ are indistinguishable~
	\citep[see, e.g., Lemma 21.5 of][]{AchimKlenke2014}. 
\end{proof}

\section{Proof of Lemma~\ref{lemma:ss-cov}}
\label{appendix:proof-ss-cov}
\begin{proof}
	Since the probability space $\left(\Omega, \mathcal{F}, \mathbb{P} \right)$ is fixed, the restricted measure $\mathbb{P}|_{\mathcal{F}^u}$ is uniquely defined. By the continuity of $\cu{A}(t)$, we have $\mathbb{P}|_{\mathcal{F}^u}$-almost surely that $\cu{x}(t)=\cu{\Lambda}(t, t_0)\, \cu{x}(t_0)$ is the fundamental solution to the linear ordinary differential equation 
	\begin{equation}
		\frac{\diff \cu{x}(t)}{\diff t} = \cu{A}(t)\,\cu{x}(t),\nonumber
	\end{equation}
	for any initial $\cu{x}(t_0)$, where $\cu{\Lambda}(t, t_0)$ is given by the Peano--Baker series~\citep{Brogan2011, Baake2011} 
	\begin{equation}
		\cu{\Lambda}(t, t_0) = \cu{I} + \int^t_{t_0} \cu{A}(s)\diff s + \int^t_{t_0} \cu{A}(s) \int^s_{t_0} \cu{A}(r)\diff r\diff s + \cdots.\nonumber
	\end{equation}
	The uniqueness and uniform convergence of $\cu{\Lambda}(t, t_0)$ is proved by~\citet{Baake2011} and~\citet{DaCunha2005}. This matrix $\cu{\Lambda}(t, t_0)$ satisfies the canonical properties of the transition matrix~\citep[see, Section 9.6 of][]{Brogan2011}. Hence by It\^{o}'s formula~\citep{oksendal, sarkkabook2019} one verifies that $\mathbb{P}|_{\mathcal{F}^u}$-almost surely 
	\begin{equation}
		\cu{f}(t) = \cu{\Lambda}(t, t_0)\,\cu{f}(t_0) + \int^t_{t_0}\cu{\Lambda}(t, s)\,\cu{B}(s)\diff \cu{W}(s),
		\label{equ:sde-f-solution}
	\end{equation}
	solves the sub-SDE of $\cu{f}(t)$. The conditional covariance function of $\cu{f}(t)$ is then given by
	\begin{equation}
		\begin{split}
			&\cov\left[\cu{f}(t), \cu{f}(t') \mid \mathcal{F}^u\right] \\
			&\quad= \expec\left[\cu{f}(t)\,\cu{f}^\trans(t')\mid \mathcal{F}^u\right] - \expec\left[\cu{f}(t)\mid \mathcal{F}^u\right] \left( \expec\left[\cu{f}(t')\mid \mathcal{F}^u\right]\right)^\trans \\
			&\quad= \cu{\Lambda}(t, t_0)\cov\left[\cu{f}(t_0)\mid \cu{u}^\ell(t_0), \cu{u}^\sigma(t_0)\right]\cu{\Lambda}^\trans(t', t_0) \\
			&\qquad+ \expec\Bigg[\int^t_{t_0}\cu{\Lambda}(t, s)\,\cu{B}(s)\diff \cu{W}(s) \\
			&\qquad\qquad\quad\times\bigg(\int^{t'}_{t_0}\cu{\Lambda}(t', s)\,\cu{B}(s)\diff \cu{W}(s) \bigg)^\trans \mid \mathcal{F}^u \Bigg] \\
			&\quad= \cu{\Lambda}(t, t_0)\cov\left[\cu{f}(t_0)\mid \cu{u}^\ell(t_0), \cu{u}^\sigma(t_0)\right]\cu{\Lambda}^\trans(t', t_0) \\
			&\qquad+ \int^{t\,\wedge\,t'}_{t_0}\cu{\Lambda}(t, s)\,\cu{B}(s)\,\cu{B}^\trans(s)\,\cu{\Lambda}^\trans(t', s)\diff s.
		\end{split}\nonumber
	\end{equation}
	Thus we arrive $C_f^S(t, t'; \cu{u}^\ell, \cu{u}^\sigma) \coloneqq \cu{H}_f\,\cov\left[\cu{f}(t), \cu{f}(t') \mid \mathcal{F}^u\right]\,\cu{H}^\trans_f$ by definition. 
\end{proof}

\section{Proof of Lemma~\ref{lemma:non-increasing}}
\label{appendix:proof-non-increasing}
Before the proof of Lemma~\ref{lemma:non-increasing}, we need the following auxiliary Lemma~\ref{lemma:useless}. 
\begin{lemma}
	\label{lemma:useless}
	The inequality
	\begin{equation}
	\mathrm{sgn}(v)\cdot (a - v) + \frac{\rho}{2}\norm{a - v}_2^2 \geq -\frac{T}{2\,\rho},
	\end{equation}
	holds for every $v\in\R^T$, $a\in\R^T$, and $\rho>0$. 
\end{lemma}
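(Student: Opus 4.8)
The plan is to decouple the inequality across the $T$ coordinates. Both terms on the left-hand side split coordinatewise: $\mathrm{sgn}(v)\cdot(a-v) = \sum_{i=1}^{T}\mathrm{sgn}(v_i)\,(a_i-v_i)$ and $\norm{a-v}_2^2 = \sum_{i=1}^{T}(a_i-v_i)^2$. Hence the left-hand side equals $\sum_{i=1}^{T}\phi_i$ with $\phi_i \coloneqq \mathrm{sgn}(v_i)\,(a_i-v_i) + \tfrac{\rho}{2}(a_i-v_i)^2$, and it suffices to prove the scalar bound $\phi_i \geq -\tfrac{1}{2\rho}$ for each $i$, since summing $T$ copies gives precisely $-\tfrac{T}{2\rho}$.

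For the scalar bound I would fix $i$, write $d \coloneqq a_i - v_i \in \R$ and $s \coloneqq \mathrm{sgn}(v_i) \in \{-1,0,1\}$, and complete the square:
\[
	\phi_i = s\,d + \frac{\rho}{2}\,d^2 = \frac{\rho}{2}\Bigl(d + \frac{s}{\rho}\Bigr)^{\!2} - \frac{s^2}{2\rho}.
\]
Since $\rho>0$ the squared term is nonnegative, and $s^2\le 1$ regardless of whether $v_i$ is positive, negative, or zero; therefore $\phi_i \geq -\tfrac{s^2}{2\rho} \geq -\tfrac{1}{2\rho}$.

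Summing over $i=1,\dots,T$ then yields $\mathrm{sgn}(v)\cdot(a-v) + \tfrac{\rho}{2}\norm{a-v}_2^2 = \sum_{i=1}^{T}\phi_i \geq -\tfrac{T}{2\rho}$, which is the claim. There is no real obstacle here: the statement is a coordinatewise completion of the square, and the only point requiring a moment's care is that $\mathrm{sgn}(v_i)$ may vanish, but that case only improves the bound (the coordinate then contributes the nonnegative quantity $\tfrac{\rho}{2}d^2$), so the extremal situation is when every coordinate of $v$ is nonzero and each of the $T$ terms loses exactly $\tfrac{1}{2\rho}$.
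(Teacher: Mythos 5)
Your proof is correct and is essentially identical to the paper's: both decompose the left-hand side coordinatewise and minimize each scalar quadratic $\mathrm{sgn}(v_i)\,d + \tfrac{\rho}{2}d^2$ to obtain the per-coordinate bound $-(\mathrm{sgn}(v_i))^2/(2\rho) \geq -1/(2\rho)$. Your explicit completion of the square and the remark on the $\mathrm{sgn}(v_i)=0$ case merely spell out steps the paper leaves implicit.
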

\begin{proof}
	Let $a_i$ and $v_i$ be the $i$-th elements of $a$ and $v$, respectively. Then
	\begin{equation}
	\begin{split}
	&\mathrm{sgn}(v)\cdot (a - v) + \frac{\rho}{2}\norm{a - v}_2^2  \\
	&\quad= \sum^T_{i=1} \left[ \mathrm{sgn}(v_i)\,(a_i - v_i) + \frac{\rho}{2}\,(a_i - v_i)^2\right] \\
	&\quad\geq \sum^T_{i=1} \left[ -\frac{(\mathrm{sgn}(v_i))^2}{2\,\rho}\right] \geq -\frac{T}{2\,\rho}.
	\end{split}\nonumber
	\end{equation}
\end{proof}

The proof of Lemma~\ref{lemma:non-increasing} is shown in the following.
\begin{proof}
	Let us start with the update from the first subproblem. We show the non-increasing property of Lagrangian function with respect to argument $f_{1:T}$ in the following equation
	\begin{equation}
		\begin{split}
			&\mathcal{L}\left( z_{1:T}^{(i)},v_{1:T}^{f,(i)},  v_{1:T}^{\ell,(i)}, v_{1:T}^{\sigma,(i)}, \eta_{1:T}^{f,(i)}, \eta_{1:T}^{\ell,(i)}, \eta_{1:T}^{\sigma,(i)}\right) \\
			&\quad- \mathcal{L}\left( z_{1:T}^{(i+1)}, v_{1:T}^{f,(i)}, v_{1:T}^{\ell,(i)}, v_{1:T}^{\sigma,(i)}, \eta_{1:T}^{f,(i)}, \eta_{1:T}^{\ell,(i)}, \eta_{1:T}^{\sigma,(i)}\right) \\
			&= \mathcal{L}^{\mathrm{NSGP}}\left(z_{1:T}^{(i)}\right) - \mathcal{L}^{\mathrm{NSGP}}\left(z_{1:T}^{(i+1)}\right) \\
			&\quad+ \left(\eta_{1:T}^{f,(i)}\right)^\trans \left( \Phi_f \left( f^{(i)}_{1:T} - f^{(i+1)}_{1:T} \right)  - v_{1:T}^{f,(i)} \right) \\
			&\quad+ \frac{\rho_f}{2}  \left( \norm*{\Phi_f \, f^{(i)}_{1:T} - v_{1:T}^{f,(i)} }^2_2 - \norm*{\Phi_f \, f^{(i+1)}_{1:T} - v_{1:T}^{f,(i)} }^2_2\right)  \\
			&\quad+ \left(\eta_{1:T}^{\ell,(i)}\right)^\trans \left( \Phi_\ell \left( u^{\ell, (i)}_{1:T} - u^{\ell, (i+1)}_{1:T} \right)  - v_{1:T}^{\ell,(i)} \right) \\
			&\quad+ \frac{\rho_\ell}{2}  \left( \norm*{\Phi_\ell \, u^{\ell, (i)}_{1:T} - v_{1:T}^{\ell,(i)} }^2_2 - \norm*{\Phi_\ell \, u^{\ell,(i+1)}_{1:T} - v_{1:T}^{\ell,(i)} }^2_2\right)  \\
			&\quad+ \left(\eta_{1:T}^{\sigma,(i)}\right)^\trans \left(\Phi_\sigma \left( u^{\sigma, (i)}_{1:T} - u^{\sigma, (i+1)}_{1:T} \right)  - v_{1:T}^{\sigma,(i)} \right) \\
			&\quad+ \frac{\rho_\sigma}{2}  \left( \norm*{\Phi_\sigma \, u^{\sigma, (i)}_{1:T} - v_{1:T}^{\sigma,(i)} }^2_2 - \norm*{\Phi_\sigma \, u^{\sigma,(i+1)}_{1:T} - v_{1:T}^{\sigma,(i)} }^2_2\right) \\
			&= \rho_f \bigg(\Phi_f^\trans\big( \Phi_f\, f^{(i+1)}_{1:T} - v^{f,(i)}_{1:T} + \frac{\eta^{f,(i)}_{1:T}}{\rho_f} \big)\bigg)^\trans \big( f^{(i)}_{1:T} - f^{(i+1)}_{1:T} \big) \\
			&+ \rho_\ell \bigg(\Phi_\ell^\trans\big( \Phi_\ell\, u^{\ell, (i+1)}_{1:T} - v^{\ell,(i)}_{1:T} + \frac{\eta^{\ell,(i)}_{1:T}}{\rho_\ell} \big)\bigg)^\trans \big( u^{\ell,(i)}_{1:T} - u^{\ell,(i+1)}_{1:T} \big) \\
			&+\rho_\sigma \bigg(\Phi_\sigma^\trans\big( \Phi_\sigma\, u^{\sigma, (i+1)}_{1:T} - v^{\sigma,(i)}_{1:T} + \frac{\eta^{\sigma,(i)}_{1:T}}{\rho_\sigma} \big)\bigg)^\trans \big( u^{\sigma,(i)}_{1:T} - u^{\sigma,(i+1)}_{1:T} \big) \\
			&\quad+ \frac{\rho_f}{2} \norm*{\Phi_f\left(  f^{(i+1)}_{1:T} - f^{(i)}_{1:T}\right) }^2_2 \\
			&\quad+ \frac{\rho_\ell}{2} \norm*{\Phi_\ell\left(  u^{\ell, (i+1)}_{1:T} - u^{\ell, (i)}_{1:T}\right) }^2_2 \\
			&\quad+ \frac{\rho_\sigma}{2} \norm*{\Phi_\sigma\left(  u^{\sigma, (i+1)}_{1:T} - u^{\sigma, (i)}_{1:T}\right) }^2_2, \nonumber
		\end{split}
	\end{equation}
	where we used the cosine rule which states that $\norm{\cu{a} + \cu{c}}^2 - \norm{\cu{b} + \cu{c}}^2 - \norm{\cu{a} - \cu{b}}^2 = 2\,(\cu{b}+\cu{c})^\trans(\cu{a}-\cu{b})$ for every real vectors $\cu{a},\cu{b}$, and $\cu{c}$ to arrive at the last equality in the above equation. Now by substituting the first-order optimal condition of subproblem~\eqref{equ:batch-sub-fu} (see, Equation~\eqref{equ:optimal-cond} in the Appendix~\ref{appendix:gradients}), we continue by
	\begin{equation}
		\begin{split}
			&\mathcal{L}\left( z_{1:T}^{(i)},v_{1:T}^{f,(i)},  v_{1:T}^{\ell,(i)}, v_{1:T}^{\sigma,(i)}, \eta_{1:T}^{f,(i)}, \eta_{1:T}^{\ell,(i)}, \eta_{1:T}^{\sigma,(i)}\right) \\
			&\quad- \mathcal{L}\left( z_{1:T}^{(i+1)}, v_{1:T}^{f,(i)}, v_{1:T}^{\ell,(i)}, v_{1:T}^{\sigma,(i)}, \eta_{1:T}^{f,(i)}, \eta_{1:T}^{\ell,(i)}, \eta_{1:T}^{\sigma,(i)}\right) \\
			&= \mathcal{L}^{\mathrm{NSGP}}\left(z_{1:T}^{(i)}\right) - \mathcal{L}^{\mathrm{NSGP}}\left(z_{1:T}^{(i+1)}\right) \\
			&\quad- \nabla^\trans_{z_{1:T}} \mathcal{L}^{\mathrm{NSGP}}\left(z^{(i+1)}_{1:T}\right)\left( z_{1:T}^{(i)} - z_{1:T}^{(i+1)} \right)\\
			&\quad+ \frac{\rho_f}{2} \norm*{\Phi_f\left(  f^{(i+1)}_{1:T} - f^{(i)}_{1:T}\right) }^2_2 \\
			&\quad+ \frac{\rho_\ell}{2} \norm*{\Phi_\ell\left(  u^{\ell, (i+1)}_{1:T} - u^{\ell, (i)}_{1:T}\right) }^2_2 \\
			&\quad+ \frac{\rho_\sigma}{2} \norm*{\Phi_\sigma\left(  u^{\sigma, (i+1)}_{1:T} - u^{\sigma, (i)}_{1:T}\right) }^2_2 \nonumber
		\end{split}
	\end{equation}
	Hence, by applying Lemma~\ref{lemma:lip-cont} we arrive at
	\begin{equation}
		\begin{split}
			&\mathcal{L}\left( z_{1:T}^{(i)},v_{1:T}^{f,(i)},  v_{1:T}^{\ell,(i)}, v_{1:T}^{\sigma,(i)}, \eta_{1:T}^{f,(i)}, \eta_{1:T}^{\ell,(i)}, \eta_{1:T}^{\sigma,(i)}\right) \\
			&\quad- \mathcal{L}\left( z_{1:T}^{(i+1)}, v_{1:T}^{f,(i)}, v_{1:T}^{\ell,(i)}, v_{1:T}^{\sigma,(i)}, \eta_{1:T}^{f,(i)}, \eta_{1:T}^{\ell,(i)}, \eta_{1:T}^{\sigma,(i)}\right) \\
			&\geq -\frac{L_z}{2}\norm*{z^{(i)}_{1:T} - z^{(i+1)}_{1:T}}^2_2 \\
			&\quad+\frac{\rho_f}{2}\,\lambda_{\mathrm{min}}^2(\Phi_f)\norm*{f^{(i+1)}_{1:T} - f^{(i)}_{1:T}}^2_2 \\
			&\quad+\frac{\rho_\ell}{2}\,\lambda_{\mathrm{min}}^2(\Phi_\ell)\norm*{u^{\ell, (i+1)}_{1:T} - u^{\ell, (i)}_{1:T}}^2_2 \\
			&\quad+\frac{\rho_\sigma}{2}\,\lambda_{\mathrm{min}}^2(\Phi_\sigma)\norm*{u^{\sigma, (i+1)}_{1:T} - u^{\sigma, (i)}_{1:T}}^2_2 \\
			&=\left( \frac{\rho_f}{2}\,\lambda_{\mathrm{min}}^2(\Phi_f) - \frac{L_z}{2}\right)\norm*{f^{(i+1)}_{1:T} - f^{(i)}_{1:T}}^2_2 \\
			&=\left( \frac{\rho_\ell}{2}\,\lambda_{\mathrm{min}}^2(\Phi_\ell) - \frac{L_z}{2}\right)\norm*{u^{\ell, (i+1)}_{1:T} - u^{\ell, (i)}_{1:T}}^2_2 \\
			&\quad+ \left( \frac{\rho_\sigma}{2}\,\lambda_{\mathrm{min}}^2(\Phi_\sigma) - \frac{L_z}{2}\right)\norm*{u^{\sigma, (i+1)}_{1:T} - u^{\sigma, (i)}_{1:T}}^2_2 \\
			&\geq 0.\nonumber
		\end{split}
	\end{equation}
	Since subproblems~\eqref{equ:batch-sub-v} are convex we have
	\begin{equation}
		\begin{split}
			&\mathcal{L}\left( z_{1:T}^{(i+1)}, v_{1:T}^{f,(i+1)}, v_{1:T}^{\ell,(i+1)}, v_{1:T}^{\sigma,(i+1)}, \eta_{1:T}^{f,(i)},\eta_{1:T}^{\ell,(i)}, \eta_{1:T}^{\sigma,(i)}\right)\\ 
			&\quad\leq \mathcal{L}\left( z_{1:T}^{(i+1)}, v_{1:T}^{f,(i)}, v_{1:T}^{\ell,(i)}, v_{1:T}^{\sigma,(i)},  \eta_{1:T}^{f,(i)}, \eta_{1:T}^{\ell,(i)}, \eta_{1:T}^{\sigma,(i)}\right). \nonumber
		\end{split}
	\end{equation}
	For subproblems~\eqref{equ:batch-sub-eta}, we have
	\begin{equation}
		\begin{split}
			&\mathcal{L}\left( z_{1:T}^{(i+1)},  v_{1:T}^{f,(i+1)}, v_{1:T}^{\ell,(i+1)}, v_{1:T}^{\sigma,(i+1)}, \eta_{1:T}^{f,(i)}, \eta_{1:T}^{\ell,(i)}, \eta_{1:T}^{\sigma,(i)}\right) \\
			&\quad-\mathcal{L}\big( z_{1:T}^{(i+1)}, v_{1:T}^{f,(i+1)}, v_{1:T}^{\ell,(i+1)}, v_{1:T}^{\sigma,(i+1)}, \eta_{1:T}^{f,(i+1)}, \\
			&\qquad\quad\eta_{1:T}^{\ell,(i+1)}, \eta_{1:T}^{\sigma,(i+1)}\big) \\
			&= \frac{1}{\rho_f}\norm*{\eta^{f, (i)}_{1:T} - \eta^{f, (i+1)}_{1:T}}^2_2 \\
			&\quad+ \frac{1}{\rho_\ell}\norm*{\eta^{\ell, (i)} - \eta^{\ell, (i+1)}}^2_2 + \frac{1}{\rho_\sigma}\norm*{\eta^{\sigma, (i)}_{1:T} - \eta^{\sigma, (i+1)}_{1:T}}^2_2 \\
			&\geq 0.\nonumber
		\end{split}
	\end{equation}
	By using the above results, the difference of Lagrangian function on one step reads
	\begin{equation}
		\begin{split}
			&\mathcal{L}\left( z_{1:T}^{(i)}, v_{1:T}^{f,(i)} v_{1:T}^{\ell,(i)}, v_{1:T}^{\sigma,(i)}, \eta_{1:T}^{f,(i)}, \eta_{1:T}^{\ell,(i)}, \eta_{1:T}^{\sigma,(i)}\right) \\
			&\quad- \mathcal{L}\Big( z_{1:T}^{(i+1)}, v_{1:T}^{f,(i+1)}, v_{1:T}^{\ell,(i+1)}, v_{1:T}^{\sigma,(i+1)}, \eta_{1:T}^{f,(i+1)}, \\
			&\qquad\quad\eta_{1:T}^{\ell,(i+1)}, \eta_{1:T}^{\sigma,(i+1)}\Big) \\
			&= \mathcal{L}\left( z_{1:T}^{(i)}, v_{1:T}^{f,(i)} v_{1:T}^{\ell,(i)}, v_{1:T}^{\sigma,(i)}, \eta_{1:T}^{f,(i)},\eta_{1:T}^{\ell,(i)}, \eta_{1:T}^{\sigma,(i)}\right) \\
			&\quad- \mathcal{L}\left( z_{1:T}^{(i+1)}, v_{1:T}^{f,(i)}, v_{1:T}^{\ell,(i)}, v_{1:T}^{\sigma,(i)}, \eta_{1:T}^{f,(i)}, \eta_{1:T}^{\ell,(i)}, \eta_{1:T}^{\sigma,(i)}\right) \\
			&\quad+ \mathcal{L}\left( z_{1:T}^{(i+1)}, v_{1:T}^{f,(i)},  v_{1:T}^{\ell,(i)}, v_{1:T}^{\sigma,(i)}, \eta_{1:T}^{\ell,(i)}, \eta_{1:T}^{f,(i)}, \eta_{1:T}^{\sigma,(i)}\right) \\
			&\quad- \mathcal{L}\left( z_{1:T}^{(i+1)}, v_{1:T}^{f,(i+1)} v_{1:T}^{\ell,(i+1)}, v_{1:T}^{\sigma,(i+1)}, \eta_{1:T}^{f,(i)}, \eta_{1:T}^{\ell,(i)}, \eta_{1:T}^{\sigma,(i)}\right) \\
			&\quad+ \mathcal{L}\left( z_{1:T}^{(i+1)}, v_{1:T}^{f,(i+1)}, v_{1:T}^{\ell,(i+1)}, v_{1:T}^{\sigma,(i+1)}, \eta_{1:T}^{f,(i)}, \eta_{1:T}^{\ell,(i)}, \eta_{1:T}^{\sigma,(i)}\right) \\
			&\quad- \mathcal{L}\Big( z_{1:T}^{(i+1)}, v_{1:T}^{f,(i+1)}, v_{1:T}^{\ell,(i+1)}, v_{1:T}^{\sigma,(i+1)}, \eta_{1:T}^{f,(i+1)}, \\
			&\qquad\quad\eta_{1:T}^{\ell,(i+1)}, \eta_{1:T}^{\sigma,(i+1)}\Big) \\
			&\geq 0. \nonumber
		\end{split}
	\end{equation}
	This shows that the sequence  $\mathcal{L}\big( f_{1:T}^{(i)}, u^{(i)}_{1:T}, v_{1:T}^{\ell,(i)}, v_{1:T}^{\sigma,(i)}, \eta_{1:T}^{\ell,(i)}, \eta_{1:T}^{\sigma,(i)}\big)$ is non-increasing. We now proceed to prove that the Lagrangian function has a lower bound uniformly. 
	
	By the optimality condition of subproblem~\eqref{equ:batch-sub-v} we have
	\begin{equation}
		\begin{split}
			& \lambda_f \,\partial\norm*{ v_{1:T}^{f,(i)}}_1  - 
			\rho_f \left(\Phi_f \, f^{(i)}_{1:T} - v_{1:T}^{f,(i)} +
			{\eta_{1:T}^{f,(i-1)}}\,/\,\rho_f\right) = 0, \\
			& \lambda_\ell \,\partial\norm*{ v_{1:T}^{\ell,(i)}}_1  - 
			\rho_\ell \left(\Phi_\ell \, u^{\ell,(i)}_{1:T} - v_{1:T}^{\ell,(i)} +
			{\eta_{1:T}^{\ell,(i-1)}}\,/\,\rho_\ell\right) = 0, \\
			&\lambda_\sigma \, \partial \norm*{ v_{1:T}^{\sigma,(i)} }_1  - 
			\rho_\sigma \left(\Phi_\sigma \, u^{\sigma,(i)}_{1:T} - v_{1:T}^{\sigma,(i)} +
			{\eta_{1:T}^{\sigma,(i-1)}}\,/\,\rho_\sigma\right) = 0, \\
			&\lambda_f \, \partial \norm*{ v_{1:T}^{f,(i)} }_1  = \eta_{1:T}^{f,(i)}, \\
			&\lambda_\ell \, \partial \norm*{ v_{1:T}^{\ell,(i)} }_1  = \eta_{1:T}^{\ell,(i)}, \\
			&\lambda_\sigma \, \partial \norm*{ v_{1:T}^{\sigma,(i)} }_1  = \eta_{1:T}^{\sigma,(i)},
		\end{split}\nonumber
	\end{equation}
	where derivative $\partial \norm{\cdot}_1$ is interpreted as $\mathrm{sgn}(\cdot)$. By using Lemma~\ref{lemma:useless} the Lagrangian function then becomes
	\begin{equation}
		\begin{split}
			&\mathcal{L}\Big( z_{1:T}^{(i)}, v_{1:T}^{f,(i)}, v_{1:T}^{\ell,(i)}, v_{1:T}^{\sigma,(i)}, \eta_{1:T}^{f,(i)}, \eta_{1:T}^{\ell,(i)}, \eta_{1:T}^{\sigma,(i)}\Big)
			\\
			&=\norm*{f_{1:T}^{(i)} -y_{1:T}}^2_{\cu{R}} + \norm*{f_{1:T}^{(i)}}^2_{\cu{C}_f^{(i)}} + \log \abs*{2\,\pi\,\cu{C}_f^{(i)}} \\
			&\quad+ \norm*{u^{\ell,(i)}_{1:T}}^2_{\cu{C}_\ell} + \norm*{u^{\sigma,(i)}_{1:T}}^2_{\cu{C}_\sigma} + \log \abs{2\,\pi\,\cu{C}_\ell} + \log \abs{2\,\pi\,\cu{C}_\sigma}
			\\ 
			&\quad+ \lambda_f \norm*{ v_{1:T}^{f,(i)} }_1  
			+ \left(\lambda_f \, \partial \norm*{ v_{1:T}^{f,(i)} }_1\right)^\trans \left( \Phi_f \, f^{(i)}_{1:T} - v_{1:T}^{f,(i)}  \right) \\
			&\quad+ \lambda_\ell \norm*{ v_{1:T}^{\ell,(i)} }_1  
			+ \left(\lambda_\ell \, \partial \norm*{ v_{1:T}^{\ell,(i)} }_1\right)^\trans \left( \Phi_\ell \, u^{\ell,(i)}_{1:T} - v_{1:T}^{\ell,(i)}  \right) \\
			&\quad
			+ \lambda_\sigma \norm*{ v_{1:T}^{\sigma,(i)} }_1 
			+ \left(\lambda_\sigma \, \partial \norm*{ v_{1:T}^{\sigma,(i)} }_1\right)^\trans \left(  \Phi_\sigma \, u^{\sigma,(i)}_{1:T} - v_{1:T}^{\sigma,(i)}  \right) \\
			&\quad+ \frac{\rho_f}{2}  \norm*{\Phi_f \, f^{(i)}_{1:T} - v_{1:T}^{f,(i)}}^2_2\\
			&\quad  + \frac{\rho_\ell}{2}  \norm*{\Phi_\ell \, u^{\ell,(i)}_{1:T} - v_{1:T}^{\ell,(i)}}^2_2
			+ \frac{\rho_\sigma}{2} \norm*{\Phi_\sigma \, u^{\sigma,(i)}_{1:T} -v_{1:T}^{\sigma,(i)}}^2_2 \\
			&\geq\mathcal{L}^{\mathrm{NSGP}}\left( z_{1:T}^{(i)}\right) - \frac{T\,(\rho_f + \rho_\ell + \rho_\sigma)}{2\,\rho_f\,\rho_\ell\,\rho_\sigma}\\
			&\geq \log \abs*{2\,\pi\,\cu{C}^{(i)}_f} + \log\abs{2\,\pi\,\cu{C}_\ell} + \log\abs{2\,\pi\,\cu{C}_\sigma} \\
			&\quad- \frac{T\,(\rho_f+\rho_\ell + \rho_\sigma)}{2\,\rho_f\,\rho_\ell\,\rho_\sigma}.
		\end{split}\nonumber
	\end{equation}
	By Assumption~\ref{assu:eig-of-C}, the minimum eigenvalue of $\cu{C}_f^{(i)}$ admits a lower bound independent of arguments $u^{\ell, (i)}_{1:T}$, $u^{\sigma, (i)}_{1:T}$, and $f^{(i)}_{1:T}$ for every $i=0, 1,\ldots$. Hence, 
	\begin{equation}
		\log \abs*{2\,\pi\,\cu{C}^{(i)}_f} \geq
		\begin{cases}
			\log c, & 1 \leq c, \\
			 T \, \log c, & 0 < c < 1.
		\end{cases}\nonumber
	\end{equation}
\end{proof}

\section{The \matern $\nu=1/2$ construction of SS-NSGP}
\label{appendix:sde-matern12}
Let the SDE coefficients be
\begin{equation}
	\cu{A}\big(\cu{u}^\ell(t)\big) = -\frac{1}{g(u^\ell(t))}, \nonumber
\end{equation}
and
\begin{equation}
	\cu{B}\big(\cu{u}^\ell(t), \cu{u}^\sigma(t)\big) = \frac{\sqrt{2}\,g(u^\sigma(t))}{\sqrt{g(u^\ell(t))}}.\nonumber
\end{equation}
Then the state-space NSGP construction for the Ornstein--Uhlenbeck/\matern $\nu=1/2$ covariance function reads
\begin{equation}
	\diff \cu{z}(t) = \begin{bmatrix}
		-\frac{f(t)}{g(u^\ell(t))} \\
		-\frac{u^\ell(t)}{\ell} \\
		-\frac{u^\sigma(t)}{\ell}
	\end{bmatrix}\diff t + \begin{bmatrix}
		\frac{\sqrt{2}\,g(u^\sigma(t))}{\sqrt{g(u^\ell(t))}} & 0 & 0\\
		0 & \frac{\sqrt{2}\,\sigma}{\sqrt{\ell}} & 0\\
		0 & 0& \frac{\sqrt{2}\,\sigma}{\sqrt{\ell}}
	\end{bmatrix}\diff \cu{W}(t),\nonumber
\end{equation}
where $\cu{z}(t) = \begin{bmatrix} f(t) & u^\ell(t) & u^\sigma(t) \end{bmatrix}^\trans$, $\cu{f}(t)=f(t)$, $\cu{u}^\ell(t)=u^\ell(t)$, $\cu{u}^\sigma(t)=u^\sigma(t)$, and $\ell$ and $\sigma$ are shared hyperparameters of $u^\ell(t)$ and $u^\sigma(t)$. In this formulation, $u^\ell(t)$ and $u^\sigma(t)$ are the conventional (stationary) \matern $\nu=1/2$ GPs, and $f(t)$ is a non-stationary \matern $\nu=1/2$ GP depending on $u^\ell(t)$ and $u^\sigma(t)$. Also, by Lemma~\ref{lemma:ss-cov} we have $\cu{\Lambda}(t, t_0) = \exp\Big( -\int^t_{t_0} 1\,/ \,g(u^\ell(s))\diff s\Big) $. 

\section{First-order optimality}
\label{appendix:gradients}
The first-order optimality condition of subproblem~\eqref{equ:batch-sub-fu} at iteration $i+1$ implies that the gradient of Equation~\eqref{equ:batch-sub-fu} is zero at $\left\lbrace f_{1:T}^{(i+1)}, u_{1:T}^{\ell,(i+1)},u_{1:T}^{\sigma,(i+1)}\right\rbrace$. This gives
\begin{equation}
	\begin{split}
		&\nabla_{u^\ell_{1:T}, u^\sigma_{1:T}}\mathcal{L}^{\mathrm{NSGP}}\left(z^{(i+1)}_{1:T}\right) \\
		&+ \begin{bmatrix}
			\Phi_\ell^\trans\, \eta^{\ell,(i)}_{1:T} + \rho_\ell\,\Phi_\ell^\trans\left(\Phi_\ell\, u^{\ell,(i+1)}_{1:T} - v^{\ell,(i)}_{1:T} \right)  \\
			\Phi_\sigma^\trans\, \eta^{\sigma,(i)}_{1:T} + \rho_\sigma\,\Phi_\sigma^\trans\left(\Phi_\sigma\, u^{\sigma,(i+1)}_{1:T} - v^{\sigma,(i)}_{1:T} \right)
		\end{bmatrix} = \cu{0},
	\end{split}
	\label{equ:optimal-cond}
\end{equation}
and $\nabla_{f_{1:T}}\mathcal{L}^{\mathrm{NSGP}}\left(f^{(i+1)}_{1:T}, u^{(i+1)}_{1:T}\right) = \cu{0}$. 

The gradients of the objective function $\mathcal{L}^{\mathrm{NSGP}}$ with respect to each argument are
\begin{equation}
	\begin{split}
		\nabla_{f_{1:T}}\mathcal{L}^{\mathrm{NSGP}} &= 2\left(\cu{C}_f^{-1}\,f_{1:T} + \cu{R}^{-1}(f_{1:T} - y_{1:T})\right),\\
		\nabla_{u^\ell_{1:T}}\mathcal{L}^{\mathrm{NSGP}} &=  2\,\cu{C}_\ell^{-1}\,u^\ell_{1:T} + g^\ell_{1:T}, \\
		\nabla_{u^\sigma_{1:T}}\mathcal{L}^{\mathrm{NSGP}} &=  2\, \cu{C}_\sigma^{-1}\,u^\sigma_{1:T} + g^\sigma_{1:T},
	\end{split}
	\label{equ:nsgp-grads}
\end{equation}
where the $m$-th elements of $g^\ell_{1:T}\in\R^{T}$ and $g^\sigma_{1:T}\in\R^{T}$ are
\begin{equation}
	\begin{split}
		[g^\ell_{1:T}]_m &= \trace\left[\left(\cu{C}_f^{-1} - \cu{\tau}_\ell\,\cu{\tau}_\ell^\trans\right)\,\tash{\cu{C}_f}{u^\ell_m}\right], \\
		[g^\sigma_{1:T}]_m &= \trace\left[\left(\cu{C}_f^{-1} - \cu{\tau}_\sigma\,\cu{\tau}_\sigma^\trans\right)\,\tash{\cu{C}_f}{u^\sigma_m}\right],\nonumber
	\end{split}
\end{equation}
respectively. Above, $\cu{\tau}_\ell = \cu{C}_f^{-1}\,u^\ell_{1:T}$ and $\cu{\tau}_\sigma = \cu{C}_f^{-1}\,u^\sigma_{1:T}$. 

\end{appendices}

\bibliographystyle{spbasic}  
\bibliography{refs}

\end{document}